\PassOptionsToPackage{unicode}{hyperref}
\PassOptionsToPackage{hyphens}{url}
\PassOptionsToPackage{dvipsnames,svgnames,x11names}{xcolor}
\documentclass[
  12pt]{article}
\usepackage{tabularx,threeparttable,array}
\usepackage{ragged2e}
\usepackage{natbib}
\usepackage{amsmath,amssymb}
\usepackage{algorithmicx, algpseudocode}
\usepackage{algorithm}
\usepackage{float}
\usepackage{amsthm}
\newtheorem{theorem}{Theorem}
\newtheorem{proposition}{Proposition}
\newtheorem{assumption}{Assumption}
\newtheorem{remark}{Remark}
\usepackage[table]{xcolor} % row_spec in R
\usepackage{multirow} % pack_rows in R
\usepackage{threeparttablex}
\usepackage[normalem]{ulem} % strike-through
\newcommand{\bbeta}{\boldsymbol{\beta}}
 \newcommand{\btheta}{\boldsymbol{\Theta}}
 \newcommand{\bstheta}{\boldsymbol{\vartheta}}
 
 \newcommand{\bphia}{{\boldsymbol{\beta}_0}}
 \newcommand{\bphib}{{\boldsymbol{\Gamma}_0}}
 \newcommand{\bgamma}{\boldsymbol{\Gamma}}

 \newcommand{\bomega}{\boldsymbol{\Omega}}
 \newcommand{\bomegac}{\boldsymbol{\bar{\Omega}}}
 \newcommand{\bepsilon}{\boldsymbol{\epsilon}}
\newcommand{\bX}{\boldsymbol{X}}
 \newcommand{\bM}{\boldsymbol{M}}
 \newcommand{\bA}{\boldsymbol{A}}
 \newcommand{\bB}{\boldsymbol{B}}
 
 \newcommand{\bS}{\boldsymbol{S}}
 \newcommand{\bSa}{{\boldsymbol{S}_r}}
 \newcommand{\bSb}{{\boldsymbol{S}_c}}

 \newcommand{\bE}{\boldsymbol{E}}
 \newcommand{\bY}{\boldsymbol{Y}}
 \newcommand{\bZ}{\boldsymbol{Z}}
 \newcommand{\bU}{\boldsymbol{U}}
 \newcommand{\bV}{\boldsymbol{V}}
 \newcommand{\bD}{\boldsymbol{D}}
\newcommand{\bone}{\boldsymbol{1}}
\newcommand{\iden}[1]{\mathbf{I}_{#1}}
\newcommand{\zeromat}[1]{\mathbf{0}_{#1}}

\newcommand{\lambb}{\lambda_\beta}
\newcommand{\lambm}{\lambda_m}
 \newcommand{\lambg}{\lambda_\Gamma}

\newcommand{\trace}[1]{\operatorname{tr}\left(#1\right)}
\newcommand{\estim}[1]{\widehat{{#1}}}
\newcommand{\stept}[1]{{#1}^{(t)}}
\newcommand{\steptmin}[1]{{#1}^{(t-1)}}

\newcommand{\newv}[1]{{#1}^{(t)}} %% for now it'll be the same as stept
\newcommand{\indexx}[3]{{#1}_{(#2#3)}}
\newcommand{\indexxt}[3]{\stept{{#1}}_{(#2#3)}}

\newcommand{\truev}[1]{{#1}^{\bracks{*}}}
\newcommand{\cxb}{c_{x\beta}}
\newcommand{\cgz}{c_{\gamma z}}

 \newcommand{\R}[2]{\mathbb{R}^{#1\times #2}}
 \newcommand{\Ra}[1]{\mathbb{R}^{#1}}
\newcommand{\bracks}[1]{\left(#1\right)}
\newcommand{\bracksb}[1]{\left\{#1\right\}}
\newcommand{\bracksc}[1]{\left[#1\right]}
\newcommand{\tr}[1]{{#1}^\top}

\newcommand{\wtilde}[1]{\widetilde{#1}}

 \newcommand{\loss}[1]{\mathcal{L}\left(#1\right)}
  
 \newcommand{\lossst}[1]{\mathcal{L}^{(t)}\left(#1\right)}

 \newcommand{\losssf}[1]{\mathcal{L}_{\mathcal{F}}\left(#1\right)}
 \newcommand{\losssft}[1]{\mathcal{L}^{(t)}_{\mathcal{F}}\left(#1\right)}
 \newcommand{\lpenalty}[1]{\mathcal{P}{\bracks{#1}}}
 \newcommand{\deriv}[2]{\frac{\delta {#1}}{\delta {#2} }}

 \newcommand{\softthresh}[2]{\mathbb{S}_{#1}\left(#2\right)}
 
 \newcommand{\opr}[1]{\mathcal{P}_\Omega \left(#1\right)}
\newcommand{\normfrob}[1]{\left\|#1\right\|_F^2}
\newcommand{\normnuc}[1]{\left\|#1\right\|_*}
\newcommand{\normlass}[1]{\left\|#1\right\|_1}
\newcommand{\norminf}[1]{\left\|#1\right\|_\infty}
\newcommand{\normop}[1]{\left\|#1\right\|_{\text{op}}}
\newcommand{\normf}[1]{\left\|#1\right\|_F}

\newcommand{\mse}[1]{\operatorname{MSE\bracks{#1}}}

\allowdisplaybreaks
\usepackage{iftex}
\ifPDFTeX
  \usepackage[T1]{fontenc}
  \usepackage[utf8]{inputenc}
  \usepackage{textcomp} % provide euro and other symbols
\else % if luatex or xetex
  \usepackage{unicode-math}
  \defaultfontfeatures{Scale=MatchLowercase}
  \defaultfontfeatures[\rmfamily]{Ligatures=TeX,Scale=1}
\fi
\usepackage{lmodern}
\ifPDFTeX\else  
\fi
\IfFileExists{upquote.sty}{\usepackage{upquote}}{}
\IfFileExists{microtype.sty}{% use microtype if available
  \usepackage[]{microtype}
  \UseMicrotypeSet[protrusion]{basicmath} % disable protrusion for tt fonts
}{}
\makeatletter
\@ifundefined{KOMAClassName}{% if non-KOMA class
  \IfFileExists{parskip.sty}{%
    \usepackage{parskip}
  }{% else
    \setlength{\parindent}{0pt}
    \setlength{\parskip}{6pt plus 2pt minus 1pt}}
}{% if KOMA class
  \KOMAoptions{parskip=half}}
\makeatother
\makeatletter
\ifx\paragraph\undefined\else
  \let\oldparagraph\paragraph
  \renewcommand{\paragraph}{
    \@ifstar
      \xxxParagraphStar
      \xxxParagraphNoStar
  }
  \newcommand{\xxxParagraphStar}[1]{\oldparagraph*{#1}\mbox{}}
  \newcommand{\xxxParagraphNoStar}[1]{\oldparagraph{#1}\mbox{}}
\fi
\ifx\subparagraph\undefined\else
  \let\oldsubparagraph\subparagraph
  \renewcommand{\subparagraph}{
    \@ifstar
      \xxxSubParagraphStar
      \xxxSubParagraphNoStar
  }
  \newcommand{\xxxSubParagraphStar}[1]{\oldsubparagraph*{#1}\mbox{}}
  \newcommand{\xxxSubParagraphNoStar}[1]{\oldsubparagraph{#1}\mbox{}}
\fi
\makeatother

\usepackage{longtable,booktabs,array}
\usepackage{calc} % for calculating minipage widths
\usepackage{etoolbox}
\makeatletter
\patchcmd\longtable{\par}{\if@noskipsec\mbox{}\fi\par}{}{}
\makeatother
\IfFileExists{footnotehyper.sty}{\usepackage{footnotehyper}}{\usepackage{footnote}}
\makesavenoteenv{longtable}
\usepackage{graphicx}
\makeatletter
\def\maxwidth{\ifdim\Gin@nat@width>\linewidth\linewidth\else\Gin@nat@width\fi}
\def\maxheight{\ifdim\Gin@nat@height>\textheight\textheight\else\Gin@nat@height\fi}
\makeatother
\setkeys{Gin}{width=\maxwidth,height=\maxheight,keepaspectratio}
\makeatletter
\def\fps@figure{htbp}
\makeatother

\makeatletter
\@ifpackageloaded{caption}{}{\usepackage{caption}}
\AtBeginDocument{%
\ifdefined\contentsname
  \renewcommand*\contentsname{Table of contents}
\else
  \newcommand\contentsname{Table of contents}
\fi
\ifdefined\listfigurename
  \renewcommand*\listfigurename{List of Figures}
\else
  \newcommand\listfigurename{List of Figures}
\fi
\ifdefined\listtablename
  \renewcommand*\listtablename{List of Tables}
\else
  \newcommand\listtablename{List of Tables}
\fi
\ifdefined\figurename
  \renewcommand*\figurename{Figure}
\else
  \newcommand\figurename{Figure}
\fi
\ifdefined\tablename
  \renewcommand*\tablename{Table}
\else
  \newcommand\tablename{Table}
\fi
}
\@ifpackageloaded{float}{}{\usepackage{float}}
\floatstyle{ruled}
\@ifundefined{c@chapter}{\newfloat{codelisting}{h}{lop}}{\newfloat{codelisting}{h}{lop}[chapter]}
\floatname{codelisting}{Listing}

\makeatother
\makeatletter
\@ifpackageloaded{caption}{}{\usepackage{caption}}
\@ifpackageloaded{subcaption}{}{\usepackage{subcaption}}
\makeatother

\ifLuaTeX
  \usepackage{selnolig}  % disable illegal ligatures
\fi

\usepackage{bookmark}

\IfFileExists{xurl.sty}{\usepackage{xurl}}{} % add URL line breaks if available
\hypersetup{
  pdftitle={Incomplete Matrix Regression},
  pdfauthor={Khaled Fouda, Aurélie Labbe, Karim Oualkacha},
  pdfkeywords={Matrix completion, Collaborative filtering, Nuclear norm, Low-rank estimation, Side information},
  colorlinks=true,
  linkcolor={blue},
  filecolor={Maroon},
  citecolor={Blue},
  urlcolor={Blue},
  pdfcreator={LaTeX via pandoc}}

\newcommand{\anon}{1}

\begin{document}

\def\spacingset#1{\renewcommand{\baselinestretch}%
{#1}\small\normalsize} \spacingset{1}

%%%%%%%%%%%%%%%%%%%%%%%%%%%%%%%%%%%%%%%%%%%%%%%%%%%%%%%%%%%%%%%%%%%%%%%%%%%%%%

\if1\anon
{
  \title{\bf Incomplete Matrix Regression}
    \author{Khaled Fouda\thanks{Correspondence to \href{mailto:khaled.fouda@hec.ca}{khaled.fouda@hec.ca}.},\\
    % Department of Decision Sciences, HEC Montréal,\\ Montreal, Quebec, H3T 2A7, Canada\\
    % \and
    Aurélie Labbe, \\
    Department of Decision Sciences, HEC Montréal,\\ Montreal, Quebec, H3T 2A7, Canada\\
    \and
    %and\\
    Karim Oualkacha \\
    Department of Mathematics, University of Quebec in Montreal,\\ Montreal, Quebec, H2X 3Y7, Canada
    }
  \maketitle
} \fi
\if0\anon
{
  \bigskip
  \bigskip
  \bigskip
  \begin{center}
    {\LARGE\bf Incomplete Matrix Regression}
\end{center}
  \medskip
} \fi

\bigskip
\begin{abstract}
Matrix completion seeks to recover a low-rank matrix from a sparse and noisy subset of its entries. In many applications, such as recommendation systems and urban mobility, the observed matrix is accompanied by auxiliary covariates on its rows and columns and exhibits dependence across them. We propose Incomplete Matrix Regression (IMR), a distribution-free penalized regression framework that integrates such information into matrix completion. The target matrix is modeled as the sum of intercepts, covariate effects regularized by a Lasso penalty, and a low-rank latent component that captures structure unexplained by the covariates. Known similarity structures, such as spatial and temporal kernels, are incorporated through ridge-type penalties on the latent factors. For estimation, we provide a scalable alternating least-squares algorithm whose modular form allows us to include or exclude individual model components without rederiving the updates. We establish non-asymptotic error bounds for both the Lasso and matrix completion estimators that are consistent with standard rates in their respective literature. Through simulation studies and two real-data applications, we demonstrate that the proposed method attains predictive accuracy competitive with more complex methods at a small fraction of their computational cost. The methodology is implemented in the \texttt{R} package \texttt{IMR}.
\end{abstract}

\noindent%
{\it Keywords:} Matrix completion, Collaborative filtering, Nuclear norm, Low-rank estimation, Side information
\vfill

\newpage
\spacingset{1.8}

\section{Introduction}\label{sec-intro}

Data arising from recommendation systems, urban mobility records, and drug--target interactions often share a common structure: they can be represented as large, sparsely observed matrices with entries governed by a small number of latent factors. The task of recovering missing entries from observed data is known as matrix completion (MC). Classical MC methods rely exclusively on observed entries and can achieve accurate reconstruction provided that the number of observed entries is sufficiently large, measurements are noiseless, and the underlying low-rank assumption holds. In practice, however, observations are frequently noisy, exhibit autocorrelation, and are accompanied by auxiliary information associated with their rows and columns. This motivates the integration of classical MC with multivariate regression to leverage all available information.

In drug repositioning, for instance, the objective is to predict unknown drug--disease associations based on a sparse matrix of known interactions, where rows correspond to drugs and columns to diseases \citep{meng_drug_2021,sadeghi_Networkbased_2022,cui_drug_2021, meng_weighted_2022,saxena_drug_2020}. The incorporation of auxiliary information (e.g., chemical structures) can enhance the performance of MC models \citep{yang_overlap_2019}.  Similarly, in transportation networks, MC can be applied to traffic flow data to impute missing sensor readings, where rows denote temporal points (e.g., days) and columns denote sensor locations \citep{chen_nonconvex_2020}. Such data typically manifest strong temporal and spatial autocorrelation that needs to be integrated into the model \citep{jia_missing_2021}. While machine learning methods have been successful in these domains, they often function as black boxes and lack interpretability  \citep{you_artificial_2022, yu_predicting_2021}.

In this article, we propose Incomplete Matrix Regression (\texttt{IMR}), a distribution-free penalized regression framework for matrix completion, with four main contributions. First, we introduce a unified model that augments the low-rank matrix completion problem with row and column intercepts and with covariate effects, the latter regularized by a Lasso penalty. The framework further accommodates known dependence across rows and columns, represented by inverse-covariance matrices, through ridge-type penalties on the latent factors. The estimator is fully modular, in that any subset of its components can be switched off to obtain a valid submodel without affecting the estimation of the remaining parameters, and several existing methods, including the models of \citet{hastie_matrix_2015} and \citet{ma_statistical_2025}, are recovered as special cases. Second, we develop a scalable alternating least-squares algorithm whose closed-form updates entail a low per-iteration cost and yield substantial computational savings relative to competing methods, without sacrificing predictive accuracy. Third, we establish nonasymptotic upper bounds on the estimation errors of the parameters under mild regularity conditions. Finally, we provide an \texttt{R} package, \texttt{IMR}\if1\anon\footnote{\href{https://github.com/khaledfouda/IMR}{https://github.com/khaledfouda/IMR}}\fi, that implements the methodology and selects all tuning parameters by cross-validation.

The remainder of the article is organized as follows. Section \ref{sec:method} presents the proposed model, its estimation algorithm, and its theoretical properties, together with a review of the related literature. Section \ref{sec:simulation} evaluates the method through simulation studies, and Section \ref{sec:application} illustrates it on two real-world datasets concerning movie recommendation and bike-sharing demand. We discuss avenues for future work in Section \ref{sec:conc}. Proofs of all theoretical results are deferred to the online Appendices, which are provided in the Supplementary Materials.

\section{Proposed Method and Related Work\label{sec:method}}

\textbf{Notation.} Throughout this article, we adopt the following notation for a matrix $\bA \in \R{n}{m}$. The trace of $\bA$ is denoted by $\trace{\bA}=\sum_i\bA_{ii}$. The squared Frobenius norm is given by $\normfrob{\bA}=\sum_{ij}\bA_{ij}^2$, the entry-wise infinity norm by $\norminf{\bA}=\max_{ij}|\bA_{ij}|$, and the entry-wise $\ell_1$ norm by $\normlass{\bA}=\sum_{ij}|\bA_{ij}|$. We denote the $i$-th singular value of $\bA$ by $\sigma_i$. The nuclear norm is defined as $\normnuc{\bA}=\sum_i\sigma_i$, and the operator norm is given by $\normop{\bA}=\max_{i}\sigma_i$. We use $\estim{\bA}$ to denote an estimator of $\bA$. Finally, $\iden n$ and $\zeromat{n\times n}$ represent the $n\times n$ identity and zero matrices, respectively, and $\bone_n$ denotes the $n$-dimensional vector of ones.

\subsection{The Model}

Let $\btheta = \bracksb{\btheta_{ij}} \in \R{n}{m}$ be the unknown target matrix (the complete data matrix of interest), and let  $\bY  \in \R{n}{m}$ be the observed matrix, a noisy, partially observed version of $\btheta$. We define an indicator matrix $\bomega  \in \R{n}{m}$ such that $\bomega_{ij}=1$ if $\bY_{ij}$ is observed and $\bomega_{ij}=0$ otherwise. We assume the following contamination model:
\[
    \bY_{ij} = \bomega_{ij}  \bracks{\btheta_{ij} + \bepsilon_{ij}},\quad \text{for }i = 1,\dots,n, \quad j=1,\dots,m,
\]
where the errors $\bracksb{\bepsilon_{ij}}$ are independent random variables with zero mean and finite variance. In addition to the incomplete matrix $\bY$, we also observe covariates associated with the rows and columns, denoted by $\bX$ and $\bZ$, respectively. For example, consider a user--movie rating matrix where $\bY_{ij}$ represents the rating user $i$ assigns to movie $j$. The matrix $\bX$ contains user attributes (e.g., age, demographics), while $\bZ$ contains movie attributes (e.g., genre, release year). Since each user rates only a small subset of movies, $\bY$ is sparse.

We suppose that these covariates relate to the target $\btheta$ through the linear model:
\begin{align}\label{eq:decomp}
    \btheta =   \bX\bbeta  +
    \bgamma \bZ + \bone_n \tr\bphib +  \bphia \tr\bone_m + \bM.
\end{align}
Here, 
$\bX = \tr{\bracksc{{\bX_1,\dots, \bX_n}}} \in \R{n}{p}$ is the row covariate matrix, where $\bX_i \in \Ra{p}$ is the feature vector for row $i$. The coefficient matrix $\bbeta = \bracksc{\bbeta_1,\dots,\bbeta_m} \in \R{p}{m}$ captures the effects of the row covariates, with $\bbeta_j \in \Ra{p}$ denoting the effect vector specific to column $j$.
Similarly, $\bZ = \bracksc{\bZ_1,\dots,\bZ_m } \in \R{q}{m}$ is the column covariate matrix, where $\bZ_j \in \Ra{q}$ is the feature vector for column $j$. The coefficient matrix $\bgamma = \bracksc{\bgamma_1,\dots,\bgamma_n}^\top \in \R{n}{q}$ captures the effects of the column covariates, with row $\bgamma_i\in\Ra{q}$ representing the effect vector specific to row $i$. We restrict the feature dimensions such that $p<n$ and $q<m$. The vectors $\bphia \in \Ra{n}$ and $\bphib\in \Ra{m}$ contain the model intercepts for the rows and columns, respectively.

  The matrix $\bM \in \R{n}{m}$ is a latent low-rank component that captures structure unexplained by the covariates. We assume $\bM$ is driven by a small number of latent factors, $r$, where $r \ll \min(n,m)$. In the user--movie rating example, \citet{rennie_fast_2005} explain that the premise behind $\bM$ is that only a small number of factors influence the ratings, and that a user's rating vector is determined by how each factor applies to that user. Thus, for $n$ users and $m$ movies, the ratings are given by the product of an $n\times r$ coefficient matrix $\bA$ (each row representing the extent to which each factor is used) and an $m \times r$ factor matrix $\bB$ whose columns are the factors. The rating matrix is then expressed as $\bM=\bA\tr\bB$. We assume the data are missing completely at random (MCAR); that is, the probability of missingness depends neither on the covariates nor on the response matrix. This assumption allows us to ignore the missing-data mechanism during estimation (see Remark \ref{remark:1} below). 

{Our goal is to estimate the intercepts as well as the covariate coefficients and latent matrix $\bM$. We denote the set of parameters to be estimated as $\bstheta=(\bbeta,\bgamma,\bphia,\bphib, \bA, \bB)$}. %\bphia$ and $\bphib$) ($\bbeta$ and $\bgamma$), and the latent matrix $\bM$. 
To avoid over-parameterization, we penalize the model defined in equation \eqref{eq:decomp} using the following principles. First, we induce sparsity in $\bbeta$ and $\bgamma$ through an $\ell_1$ (Lasso) penalty. Second, to promote a low-rank solution for $\bM$, we introduce a nuclear-norm penalty, which is a convex relaxation of a rank constraint; however, this approach requires a full singular value decomposition (SVD) of a high-dimensional matrix \citep{ma_fixed_2011}. An alternative is to impose a rank constraint and apply squared Frobenius penalties to $\bA$ and $\bB$, following \citet{rennie_fast_2005} and invoking Lemma 6 of \citet{mazumder_spectral_2010}, which states that
\begin{align}
    \normnuc{\bM} = \min_{\bA,\bB:\bM=\bA\tr\bB} \frac{1}{2}\bracks{\normfrob{\bA}+\normfrob{\bB}}. \label{eq:mazm_lemma6}
\end{align}
Finally, we incorporate structural dependencies among rows and columns to account for possible correlation in the rows/columns of $\bY$. For example, in the user--movie rating setting, users belonging to the same social group may exhibit similar tastes in movies, and movies with shared attributes (e.g., same director) may receive similar ratings. We model these dependencies using symmetric positive definite matrices $\bSa \in \R{n}{n}$ and $\bSb \in \R{m}{m}$, which represent graph Laplacians or inverse-covariance matrices associated with the rows and columns of $\bY$, respectively. We impose ridge-type penalties on $\bSa^{1/2}\bA$ and $\bSb^{1/2}\bB$ to encourage similarity in the latent estimates of connected nodes. 

Combining these regularization components, we obtain the following objective function for the proposed model.
\begin{align}
\label{eq:obj}
   \loss{\bstheta;\bY,\bX,\bZ,\bomega,\bSa,\bSb} \, & = \,\losssf{\bstheta;\bY,\bX,\bZ,\bomega} +\lpenalty{\bstheta;\bSa,\bSb},
 \end{align}
 where
 \begin{align} 
    \losssf{\bstheta;\bY,\bX,\bZ,\bomega}  \, & = \, \frac{1}{2}\normfrob{\bomega \circ (\bY-\btheta)}, \label{eq:obj:frob1} \\  
   \lpenalty{\bstheta;\bSa,\bSb}\, & = \,   \frac{1}{2} \lambm \bracks{\trace{\bA^\top\bSa \bA} + \trace{\bB^\top\bSb \bB} }  +
    \lambb \normlass{\bbeta} + 
    \lambg \normlass{\bgamma}.  \notag
\end{align}
Here, $\circ$ denotes the Hadamard (element-wise) product, and $\lambm,\lambb,\lambg$ are nonnegative tuning parameters selected via cross-validation. The rank $r$ is likewise selected via cross-validation. {For the remainder of the paper, we omit arguments $\bY,\bX,\bZ,\bomega,\bSa,\bSb$ from the objective function $\mathcal{L}$ and write it concisely as $\loss{\bstheta}{}$}.

%=========================================================================
\subsection{Related Work \label{sec:literature}}

Here, we review key literature on regression-based matrix completion and establish connections with our proposed model. First, in the noiseless and covariate-free setting, \citet{candes_exact_2009} established strong theoretical guarantees for exact matrix recovery under the model $\btheta=\bM$ via nuclear-norm regularization:
\[
\min_{\bM}\normnuc{\bM} \qquad \text{s.t.} \quad \opr{\bY} = \opr{\bM}.
\]
In practice, however, observations are typically noisy. Consequently, the problem is reformulated as:
\[
\min_{\bM}\normnuc{\bM} \qquad \text{s.t.} \quad \normfrob{\bomega\circ\bracks{\bY-\bM}} < \delta,
\]
where $\delta$ is a regularization parameter controlling the error tolerance. \citet{mazumder_spectral_2010} introduced an iterative algorithm to solve this problem, which converges to a global solution. However, the algorithm requires iterative computation of the SVD of $\bM$, which can become a computational bottleneck. Motivated by \citet{rennie_fast_2005}, \citet{hastie_matrix_2015} proposed a computationally efficient alternating least-squares solution. Their method, called \texttt{Soft-Impute}, solves the optimization problem: 
\[
\min_{\bM=\bA\tr\bB}\bracks{\normfrob{\bA}+\normfrob{\bB}} \qquad \text{s.t.} \quad \normfrob{\bomega\circ\bracks{\bY-\bM}} < \delta.
\]

Recent methods have leveraged side information to improve recovery. For instance, row covariates with regression coefficients are used in \citet{zhu_personalized_2016,  robin_Lowrank_2018, jin_matrix_2022, mao_Matrix_2019, ma_statistical_2025, sun_Noisy_2025}. However, the likelihood-based approaches in the first three studies rely on specific distributional assumptions and homogeneous covariate effects across columns, which complicate inference and limit flexibility in real-world applications. While \citet{mao_Matrix_2019} propose a distribution-free model, their one-step least-squares solution is suboptimal; the method of \citet{meng_CovariateAssisted_2024} is subject to similar limitations. Although \citet{ma_statistical_2025} achieved promising results using alternating least squares, their model lacks regularization, which is critical for controlling model complexity. Finally, \citet{sun_Noisy_2025} proposed a method that alternates between solving a full \texttt{Soft-Impute} problem and a regularized generalized linear model, which is computationally expensive.

Alternative approaches leverage side information to improve recovery without explicitly estimating covariate coefficients. These methods, known as Inductive Matrix Completion \citep{jain_provable_2013, natarajan_inductive_2014, chiang_matrix_2015,yi_regularized_2020,zilber_inductive_2022}, address the following optimization problem:
\[
\min_{\mathbf{K}}\normnuc{\mathbf{K}} \qquad \text{s.t.} \quad \normfrob{\bomega\circ\bracks{\bY-\bX\,\mathbf{K}\,\bZ}} < \delta,
\]
where $\bX$ and $\bZ$ denote row and column covariates, respectively, and $\mathbf{K}\in\R{p}{q}$ is a low-rank latent matrix. While distribution-free, these models focus on imputation rather than inference, limiting their utility when covariate effects are of primary interest. Reduced-rank regression, which constrains covariate coefficients to be low-rank, also provides valuable insights for matrix completion, although it typically assumes the data matrix is fully observed \citep{she_robust_2017,ma_adaptive_2020,tan_sparse_2023, li_supervised_2016}. Finally, row and column similarity matrices have also been incorporated in collaborative filtering to improve matrix recovery \citep{kalofolias_Matrix_2014}.

% \textbf{Remarks}

{The following remarks clarify how the proposed framework relates to the aforementioned approaches in the literature and illustrate the sense in which it encompasses some of them as special cases.}
% KO added this sentence; need to make several remarks as items in one-single remark 
% \begin{enumerate}
% \begin{enumerate}
\begin{remark}

\label{remark:1}
\begin{itemize}
\noindent
\item Our framework assumes an MCAR mechanism, so we do not explicitly model the missingness probabilities. It is worth distinguishing estimation from inference in this respect. The estimator itself is agnostic to the sampling mechanism and can be computed for any missingness pattern. The MCAR assumption enters only through the theoretical guarantees. In contrast, \citet{sun_Noisy_2025} and \citet{ma_statistical_2025} assume that these probabilities depend on the covariates and model them through a logistic regression (propensity-score) function. We adopt the MCAR assumption to keep the framework general-purpose and to avoid the additional computational burden and modeling assumptions that estimating these probabilities entails. Indeed, identifying the missingness mechanism requires prior knowledge of whether the missingness is driven by the response, the covariates, or other confounding factors.
%\end{remark}
%\begin{remark}
\item In the absence of prior knowledge about the structural dependencies ($\bSa=\iden{n}$, $\bSb=\iden{m}$), the penalty on $\bM$ reduces to the standard nuclear-norm proxy in \eqref{eq:mazm_lemma6}: $\trace{\bA^\top\iden n \bA} + \trace{\bB^\top\iden m \bB} = \trace{\bA^\top\bA} + \trace{\bB^\top \bB} = \normfrob{\bA}+\normfrob{\bB}$.
    If we further exclude the covariates and intercepts, the model reduces to \texttt{Soft-Impute} \citep{hastie_matrix_2015}. Thus,  \texttt{IMR}  can be viewed as an extension of \texttt{Soft-Impute} that incorporates penalized multivariate regression on both $\bY$ and $\tr\bY$.
%\end{remark}
%  \begin{remark}
    If only row covariates are retained alongside the low-rank term $\bM$, the model structure corresponds to that of \citet{mao_Matrix_2019}. If all regularization terms are removed, the structure corresponds to  the unpenalized method of \citet{ma_statistical_2025}.
%    \end{remark}
%\begin{remark}
\item In Lemma 8 of online Appendix D, we show that minimizing $\trace{\tr\bA\bSa\bA}+\trace{\tr\bB\bSb\bB}$ is equivalent to minimizing $\normnuc{\bSa^{1/2}\bM\bSb^{1/2}}$. This generalizes \eqref{eq:mazm_lemma6} to the case with similarity matrices.
\end{itemize}
\end{remark}
% \end{enumerate}

\subsection{Inference Process}

The objective function in \eqref{eq:obj} is not convex and does not admit a global closed-form solution. However, it is coordinate-wise convex, and we therefore adopt an alternating least-squares procedure in which we iteratively update each parameter while holding the others fixed.
Specifically, solving \eqref{eq:obj} with respect to $\bbeta$ or $\bgamma$ is a penalized least-squares problem, while solving it with respect to $\bA$ and $\bB$ corresponds to a \texttt{Soft-Impute} problem. We present the derivations of the updates for $\bbeta$ and $\bA$ below, leaving the other parameters to online Appendix A.

First, we rewrite the loss term \eqref{eq:obj:frob1} as
\begin{align}
    \label{eq:obj:frob11}
\losssf{\bstheta} \,  = \, \frac{1}{2}\normfrob{\bomega \circ \bracks{\bY-\btheta}} = \frac{1}{2}\normfrob{\bomega \circ {\bY} + \bomegac\circ\btheta - \btheta},
\end{align}
where $\bomegac$ denotes the complement of $\bomega$ (i.e., the indicator matrix of the missing entries). Now suppose we have estimates for $\btheta$ at iteration $\bracks{t-1}$, and we wish to compute the new estimate at the current iteration. Following \citet{hastie_matrix_2015}, we impute the missing entries at iteration $\bracks{t}$ using the estimate  $\steptmin\btheta$ from the previous iteration. With this approach, the loss term \eqref{eq:obj:frob11} at iteration $\bracks{t}$ is 
\begin{align} \label{eq:obj:frob2}
    \losssft{\bstheta} =\frac{1}{2} \normfrob{
    \bomega \circ (\bY - {\btheta}) +
    \bomegac \circ (\steptmin{\btheta}-{\btheta})},
\end{align}
where $\btheta$ is defined as in \eqref{eq:decomp}, with $\bM$ replaced by $\bA\tr\bB$. The objective function at iteration $(t)$ is:
\begin{align}\label{eq:obj2}
   \lossst{\bstheta} \, & = \, \losssft{\bstheta}  +\mathcal{P}(\bstheta).
\end{align}
 In the following, we use the formulation in \eqref{eq:obj:frob2} to derive the update steps.

\begin{proposition}[Alternating Least-Squares Updates for \texttt{IMR}]
\label{prop:ALS}

Consider the objective function in \eqref{eq:obj2} and the special case where $\bSa=\iden{n}$ and $\bSb = \iden{m}$. Define $\bE$ as the sparse matrix of training residuals, which is updated continuously throughout the algorithm. Given initial values 
{$\bstheta^{(0)}=(\bbeta^{(0)},\bgamma^{(0)},\bphia^{(0)},\bphib^{(0)}, \bA^{(0)},\bB^{(0)})$} and $\bE=\bomega\circ\bracks{\bY-\btheta^{(0)}}$, the alternating least-squares algorithm yields the following closed-form updates at iteration $\bracks{t}$:
\begin{align} \label{eq:update:beta}
    &\newv{\bbeta} = \softthresh{\lambb}{\tr\bX \bE + \steptmin\bbeta},\\ \label{eq:update:gamma}
    &\newv{\bgamma} = \softthresh{\lambg}{\bE\tr\bZ + \steptmin\bgamma} , \\ \label{eq:update:phia}
    &\newv{\bphia} =  \bracks{\bE \,{\bone}_m} \oslash \bracks{\bomega \,\bone_m} +\steptmin\bphia , \\\label{eq:update:phib}
    &\newv{\bphib} =  \bracks{\tr\bE {\bone}_n}\oslash \bracks{\tr\bomega \bone_n}  +\steptmin\bphib , \\\label{eq:update:A}
    &\newv{\bA} = \bracks{\bE{{\steptmin\bB}} + \steptmin\bA\tr{{\steptmin\bB}}{\steptmin\bB}}\bracks{\tr{{\steptmin\bB}}{\steptmin\bB}+\lambm \iden r}^{-1}, \\\label{eq:update:B}
    &\newv{{\bB}}  = \bracks{\tr{\bE}{{\stept\bA}} + \steptmin\bB\tr{{\stept\bA}}{\stept\bA}}\bracks{\tr{{\stept\bA}}{\stept\bA}+\lambm \iden r}^{-1},
\end{align}
where $\oslash$ is the Hadamard (element-wise) division and the soft-thresholding operator $\softthresh{\lambda}{x}$ is defined element-wise as: 
\begin{align*}
    \softthresh{\lambda}{x} = \begin{cases}
    x - \lambda & \text{if } x > \lambda, \\
    x + \lambda & \text{if } x < -\lambda, \\
    0 & \text{if } |x| \leq \lambda.
    \end{cases}
\end{align*}

The updates in \eqref{eq:update:beta}--\eqref{eq:update:B} are executed sequentially, and the residual matrix is updated in place immediately following the update of each parameter block. The exact update formulas for $\bE$ are provided in the proof.  Moreover, we replace \eqref{eq:update:A} and \eqref{eq:update:B} with equivalent yet computationally more efficient updates. See Remark \ref{remark:5} for further details.
\end{proposition}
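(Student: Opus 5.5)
The plan is to work block by block with the surrogate objective \eqref{eq:obj2}. The key observation is that, once the missing entries have been imputed with $\steptmin\btheta$, the loss $\losssft{\bstheta}$ is a full (unmasked) squared Frobenius norm. Concretely, write
\[
\bomega\circ(\bY-\btheta)+\bomegac\circ(\steptmin\btheta-\btheta)=\bE+\steptmin\btheta-\btheta,
\]
where $\bE=\bomega\circ(\bY-\steptmin\btheta)$ is the current sparse residual. Because $\btheta$ is linear in each of $\bbeta,\bgamma,\bphia,\bphib$ and bilinear in $(\bA,\bB)$, fixing all blocks but one gives a term of the form $\tfrac12\normfrob{\bE+\steptmin{\mathcal{T}}-\mathcal{T}}$. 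Here $\mathcal{T}$ is the single term of \eqref{eq:decomp} involving the free block, and all other terms cancel. Each block update is then the exact minimizer of an unmasked convex problem. After each update I would record the in-place residual correction, for example $\bE\leftarrow\bE-\bomega\circ(\bX(\newv\bbeta-\steptmin\bbeta))$, and similarly for the other blocks. This keeps $\bE$ sparse and gives the "exact update formulas for $\bE$" promised in the statement.

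Step 1: $\bbeta$. The subproblem is $\tfrac12\normfrob{\bE+\bX\steptmin\bbeta-\bX\bbeta}+\lambb\normlass{\bbeta}$. Closed-form soft-thresholding needs an orthonormal design, so I would either assume $\tr\bX\bX=\iden p$ (after preprocessing, which $p<n$ permits) or read the step as a unit-step proximal gradient update. In the latter case the gradient of the quadratic at $\steptmin\bbeta$ is $-\tr\bX\bE$, and the prox of $\lambb\normlass{\cdot}$ is $\softthresh{\lambb}{\cdot}$, which gives \eqref{eq:update:beta}. Step 2: $\bgamma$ is the transposed version with $\bZ\tr\bZ=\iden q$, which gives \eqref{eq:update:gamma}.

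Step 3: the intercepts. The unmasked least-squares solution would simply average over all $m$ columns. To obtain the division by $\bomega\bone_m$ in \eqref{eq:update:phia}, I would instead minimize the masked loss \eqref{eq:obj:frob1} in $\bphia$ directly; no imputation is needed here because the problem is separable across rows. Setting the derivative to zero gives, row by row,
\[
\bphia{}_i-\steptmin{\bphia}{}_i=\sum_j\bE_{ij}\Big/\sum_j\bomega_{ij}.
\]
The column intercepts $\bphib$ are handled symmetrically.

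Step 4: the factors $\bA,\bB$. With $\bSa=\iden n$, the subproblem in $\bA$ is a ridge regression:
\[
\tfrac12\normfrob{\bE+\steptmin\bA\tr{\steptmin\bB}-\bA\tr{\steptmin\bB}}+\tfrac{\lambm}{2}\normfrob{\bA}.
\]
The normal equations $\bA(\tr{\steptmin\bB}\steptmin\bB+\lambm\iden r)=(\bE+\steptmin\bA\tr{\steptmin\bB})\steptmin\bB$ give \eqref{eq:update:A}. The update for $\bB$ is identical with the roles of $\bA$ and $\bB$ exchanged, using the new $\newv\bA$ and the refreshed $\bE$, which gives \eqref{eq:update:B}. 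The invertibility of $\tr\bB\bB+\lambm\iden r$ for $\lambm>0$ ensures these are unique minimizers.

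The main obstacle is the Lasso blocks. An exact closed-form coordinate minimizer requires an orthogonality condition on $\bX$ and $\bZ$ that the statement does not make explicit. I would state this condition (or the proximal-gradient interpretation with step size $1/\normop{\bX}^2$ absorbed) carefully, and I would check that the $\bE$ bookkeeping under sequential in-place updates matches the imputed target at each sub-step.
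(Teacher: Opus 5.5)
Your proposal is correct and follows essentially the paper's own proof. You rewrite the imputed loss as an unmasked $\tfrac12\normfrob{\bE+\text{(old block term)}-\text{(new block term)}}$, orthonormalize $\bX$ (and the rows of $\bZ$) by a QR decomposition to obtain the soft-thresholding updates, solve the ridge normal equations for $\bA$ and $\bB$, and minimize the original masked loss for the intercepts so that residual sums are divided by per-row/column observation counts. The only piece you leave implicit is the substitution $\bA=\bU\bD$, $\bB=\bV\bD$ that turns \eqref{eq:update:A} into the cheaper equivalent form referenced in Remark~\ref{remark:5}, which is a one-line algebraic step.
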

\begin{proof}
    Below, we derive the updates for $\bbeta$ and $\bA$, deferring the derivations of the remaining parameters to online Appendix A.

\textbf{Updating $\bbeta$.}
The update for $\bbeta$ is obtained by solving
\begin{align*}
 \stept{\bbeta} &  = \arg\min_{\bbeta} \lossst{\bstheta^{(t-1)}_{[-\bbeta]}} \\
 & =  \arg\min_{\bbeta} \losssft{\bstheta^{(t-1)}_{[-\bbeta]}} + \lambb \normlass{\bbeta},
\end{align*}
where $\bstheta^{(t-1)}_{[-\bbeta]}=(\bbeta,\bgamma^{(t-1)}, \bphia^{(t-1)},\bphib^{(t-1)},\bA^{(t-1)},\bB^{(t-1)})$. The first term simplifies to
\begin{align*}
    & \losssft{\bstheta^{(t-1)}_{[-\bbeta]}}  \\
    = &\frac{1}{2} \normfrob{\bomega \circ (\bY - \bX{\bbeta} -\steptmin{\bgamma}{\bZ} - \steptmin{\bphia}\tr{\bone_m} - \bone_n \tr{\steptmin{\bphib}} - \steptmin\bM) +  \overline{\bomega}\circ\bracks{\bX\steptmin\bbeta-\bX\bbeta}} \\
  =& \frac{1}{2} \normfrob{\bE + \bomega \circ (  \bX\steptmin\bbeta - \bX{\bbeta} ) +  \overline{\bomega}\circ\bracks{\bX\steptmin\bbeta-\bX\bbeta}}\\
  %=& \frac{1}{2} \normfrob{\bomega \circ (\bE + \bX\steptmin\bbeta - \bX{\bbeta} ) +  \overline{\bomega}\circ(\bX\steptmin\bbeta-\bX\bbeta)} \\
  =& \frac{1}{2} \normfrob{\bE + \bX\bracks{\steptmin\bbeta - \bbeta} }.
\end{align*}
%Unlike the original loss term \eqref{eq:obj:frob1}, this formulation is differentiable with respect to $\bbeta$. 
Taking the gradient with respect to $\bbeta$ yields:
\begin{align*}
    \deriv{\lossst{\bstheta^{(t-1)}_{[-\bbeta]}}}{\bbeta}
    & = \deriv{\losssft{\bstheta^{(t-1)}_{[-\bbeta]}}}{\bbeta} + \lambb \deriv{\normlass{\bbeta}}{\bbeta} \\
    & = - \tr\bX \bE + \tr\bX\bX \bracks{\bbeta - \steptmin\bbeta} + 
    \lambb  \: \text{sign} \: \bracks{\bbeta}.% = 0.
\end{align*}
Setting this gradient to zero corresponds to a standard Lasso problem. A closed-form solution exists if the columns of $\bX$ are orthonormal, i.e., $\tr\bX\bX=\iden p$.
This can be achieved by orthonormalizing the columns of $\bX$. Let $\bX = \boldsymbol{QR}$ be the QR-decomposition of $\bX$, where $\boldsymbol{Q} \in \R{n}{p}$ has orthonormal columns and $\boldsymbol{R} \in \R{p}{p}$ is upper triangular. By defining transformed variables $\bX^*=\boldsymbol{Q}$ and $\bbeta^* = \boldsymbol{R}\bbeta$, we ensure that $\tr{\bX^*}\bX^*=\iden{p}$. Under this orthonormalization and suppressing the asterisks for notational simplicity, the normal equation becomes:
\begin{align*}
    \deriv{\lossst{\bstheta^{(t-1)}_{[-\bbeta]}}}{\bbeta}
    & = - \tr\bX \bE + {\bbeta - \steptmin\bbeta} + 
    \lambb  \: \text{sign} \: \bracks{\bbeta} = 0.
\end{align*}
Rearranging terms yields the closed-form soft-thresholding update:
\begin{align*}
    \stept\bbeta &=  \tr\bX \bE + \steptmin\bbeta - 
    \lambb  \: \text{sign} \: \bracks{\stept\bbeta} \\
        & = \softthresh{\lambb}{\tr\bX\bE + \steptmin\bbeta}.
\end{align*}

Finally, we update the residual matrix as follows:
\[\bE = \bE +\bomega \circ\bracks{\bX \bracks{\steptmin\bbeta-\stept\bbeta}}.\]

\textbf{Updating ${\bA}$.} Similarly, the update for $\bA$ is obtained by solving
\begin{align*}
 \stept{\bA} & = \arg\min_{\bA} \lossst{\bstheta^{(t-1)}_{[-\bA]}} \\
 & = \arg\min_{\bA} \losssft{\bstheta^{(t-1)}_{[-\bA]}} + \frac{\lambm}{2} \normfrob{\bA},
\end{align*}
where $\bstheta^{(t-1)}_{[-\bA]}=(\stept\bbeta,\bgamma^{(t)}, \bphia^{(t)},\bphib^{(t)},\bA,\bB^{(t-1)})$ is the set of most recently updated parameters before updating $\bA$ at step $t$.
First, for computational efficiency, we express $\bM=\bA\tr\bB$ in its singular value decomposition components: $\bM = \bU \bD^2 \tr{\bV}$, where $\bU$ and $\bV$ have orthonormal columns. We parameterize the factors as $\bA=\bU\bD$ and $\bB = \bV \bD$. Consequently, $\tr\bA\bA=\tr\bB\bB=\bD^2$, which is an $r\times r$ diagonal matrix whose inversion is trivial. The loss simplifies to 
\begin{align*}
    \losssft{\bstheta^{(t-1)}_{[-\bA]}}
  =& \frac{1}{2} \normfrob{\bE + \bracks{\steptmin\bA - \bA}\tr{\steptmin\bB} }.
\end{align*}
Differentiating the objective with respect to $\bA$ and setting the gradient to zero:
\begin{align*}
    \deriv{\lossst{\bstheta^{(t-1)}_{[-\bA]}}}{\bA}
    & = \deriv{\losssft{\bstheta^{(t-1)}_{[-\bA]}}}{\bA} +  \frac{\lambm}{2} \deriv{\normfrob{\bA}}{\bA} \\
    & = - \bE {\steptmin \bB} +
     \bracks{\bA - \steptmin \bA} \tr{\steptmin \bB} \steptmin \bB  + \lambm  \bA = 0.
\end{align*}
Solving for $\bA$ gives:
\begin{align} \label{eq:updateA0}
    \wtilde{\bA} &= \bracks{\bE \steptmin \bB + \steptmin \bA \tr{\steptmin \bB}\steptmin\bB} \bracks{\lambm  \iden r + \tr{\steptmin \bB}\steptmin\bB}^{-1}\\ \label{eq:updateA}
    & = \bracks{\bE \steptmin \bV  + \steptmin \bU \steptmin{\bD^2}} \steptmin\bD \bracks{\lambm \iden{r} +  \steptmin{\bD^2}}^{-1}.
\end{align}
\begin{remark}\label{remark:5}
     By construction, $\wtilde{\bA}$ does not have orthonormal columns. Orthonormality would not be required if formula \eqref{eq:updateA0} were used in lieu of the more computationally efficient formula \eqref{eq:updateA}. To preserve the orthonormality after evaluating \eqref{eq:updateA}, we perform an additional step. Let $\wtilde{\bA} = \wtilde{\bU}\wtilde{\bD}\tr{\wtilde{\bV}}$ be its SVD. We then set $\stept\bA = \wtilde{\bU}\wtilde{\bD}$ and update $\steptmin\bB=\steptmin\bB\wtilde{\bV}$. This preserves orthonormality without altering the product $\stept\bA\tr{\steptmin\bB}$. A similar strategy was employed by \citet{hastie_matrix_2015}. In practice, we directly update the matrices $\stept\bU$, $\stept\bD$, and $\stept\bV$ without explicitly evaluating $\stept\bA$ and $\stept\bB$. See online Appendix A for more details. After obtaining $\stept\bA$, we update the residual matrix as $\bE=\bE + \bomega\circ\bracks{\bracks{\steptmin\bA-\stept\bA}\tr{\steptmin\bB}}$.
\end{remark}
\end{proof}

\begin{proposition}[Update Formulas for $\bA$ and $\bB$ in the General Case]
Consider the general case incorporating similarity matrices $\bSa$ and $\bSb$ for the rows and columns, respectively. The updates \eqref{eq:update:A} and \eqref{eq:update:B} are modified as follows.

Define the matrices:
\begin{align*}
    \mathbf{W}_1 &= \bE\steptmin{\bB} + \steptmin{\bA}\tr{{\steptmin{\bB}}}\steptmin{\bB},\\
    \mathbf{W}_2 &= \tr{\bE}{{\stept\bA}} + \steptmin\bB\tr{{\stept\bA}}{\stept\bA}.
\end{align*}
Let $\steptmin d_j$ be the $j$-th diagonal element of ${\tr{\steptmin\bA}}{\steptmin{\bA}}$. The updates for the $j$-th columns of $\bA$ and $\bB$, denoted $\indexx{\bA}{.}{j}$ and $\indexx{\bB}{.}{j}$ for $j=1,\dots,r$, are given by:
\begin{align}
    \label{eq:update:Sr}
    \indexxt{\bA}{.}{j} =& \left( \lambm \bSa + \steptmin d_j \iden n\right)^{-1} \indexx{\mathbf{W}_1}{.}{j}, \\ 
    \label{eq:update:Sc}
    \indexxt{{{\bB}}}{.}{j} =& 
     \left(
    \lambm \bSb + \stept d_j\iden m\right)^{-1} \indexx{\mathbf{W}_2}{.}{j}.
    \end{align}
All other update equations from Proposition \ref{prop:ALS} remain unchanged. Furthermore, we replace the updates \eqref{eq:update:Sr} and \eqref{eq:update:Sc} with equivalent but more computationally efficient updates. This alternative formulation replaces the matrices $\bracks{\lambm \bSa + \steptmin d_j \iden n}$ and $\bracks{\lambm \bSb + \stept d_j\iden m}$ with diagonal matrices, thereby rendering their inversion trivial. Further details are provided in online Appendix A.
\end{proposition}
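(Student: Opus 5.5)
The plan is to repeat the derivation of the $\bA$-update in the proof of Proposition \ref{prop:ALS}, replacing the ridge term $\frac{\lambm}{2}\normfrob{\bA}$ by $\frac{\lambm}{2}\trace{\tr\bA\bSa\bA}$. The first step reduces the loss exactly as before. Holding $\bbeta,\bgamma,\bphia,\bphib$ at their most recent values and $\bB=\steptmin\bB$, the imputed loss \eqref{eq:obj:frob2} becomes $\frac12\normfrob{\bE+(\steptmin\bA-\bA)\tr{\steptmin\bB}}$. This uses the same cancellation of $\bomega$ and $\bomegac$ terms as in the $\bbeta$ step, and it does not involve $\bSa$. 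The objective in $\bA$ is then a convex quadratic. Its gradient is
\[
-\bE\steptmin\bB+(\bA-\steptmin\bA)\tr{\steptmin\bB}\steptmin\bB+\lambm\bSa\bA ,
\]
and setting it to zero gives the matrix equation $\bA\,\tr{\steptmin\bB}\steptmin\bB+\lambm\bSa\bA=\mathbf{W}_1$.

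This is a Sylvester equation, and in general it does not decouple. The key step is to use the parameterization of Remark \ref{remark:5}. With $\bA=\bU\bD$ and $\bB=\bV\bD$, where $\bU,\bV$ have orthonormal columns, we have $\tr{\steptmin\bB}\steptmin\bB=\tr{\steptmin\bA}\steptmin\bA=\steptmin{\bD^2}$, which is diagonal with entries $\steptmin d_j$. Right-multiplying $\bA$ by a diagonal matrix scales its columns, so the equation splits column by column into
\[
\bigl(\lambm\bSa+\steptmin d_j\iden n\bigr)\indexx{\bA}{.}{j}=\indexx{\mathbf{W}_1}{.}{j},\qquad j=1,\dots,r .
\]
Since $\bSa$ is positive definite, $\lambm\ge0$ and $\steptmin d_j\ge0$, the coefficient matrix is positive definite whenever $\lambm>0$ or $d_j>0$. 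The equation therefore has a unique solution, which is \eqref{eq:update:Sr}, and by strict convexity this solution is the minimizer.

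The main obstacle is to justify that $\tr{\steptmin\bB}\steptmin\bB$ is exactly diagonal at every step, and that it coincides with the $d_j$ taken from $\tr{\steptmin\bA}\steptmin\bA$. I will argue this from the re-orthonormalization step in Remark \ref{remark:5}: after each block update one takes an SVD and rotates the other factor by $\wtilde\bV$, which leaves the product $\bA\tr\bB$ unchanged. So the Gram matrices of both factors equal the common $\bD^2$ when the next update begins. The $\bB$-update \eqref{eq:update:Sc} follows symmetrically. One applies the same reduction to $\frac12\normfrob{\tr\bE+(\steptmin\bB-\bB)\tr{\stept\bA}}+\frac{\lambm}{2}\trace{\tr\bB\bSb\bB}$, and uses that $\tr{\stept\bA}\stept\bA=\mathrm{diag}(\stept d_j)$ after the rotation.

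For the efficient variant, I will precompute the eigendecomposition $\bSa=\mathbf{P}\boldsymbol{\Lambda}\tr{\mathbf{P}}$ once; it does not change across iterations. Then
\[
(\lambm\bSa+d_j\iden n)^{-1}=\mathbf{P}(\lambm\boldsymbol\Lambda+d_j\iden n)^{-1}\tr{\mathbf{P}} ,
\]
so in the rotated coordinates $\tr{\mathbf{P}}\indexx{\mathbf{W}_1}{.}{j}$ only a diagonal matrix has to be inverted. The same construction applies to $\bSb$. The remaining updates are unchanged because none of them involves $\bSa$ or $\bSb$.
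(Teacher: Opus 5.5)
Your proposal is correct and follows the paper's proof essentially step for step: you reduce the imputed loss, set the gradient to zero to get the Sylvester equation $\bA\,\tr{\steptmin\bB}\steptmin\bB+\lambm\bSa\bA=\mathbf{W}_1$, split it into columns using the diagonal $\bD^2$ of the parameterization $\bA=\bU\bD$, $\bB=\bV\bD$, and use a one-time eigendecomposition of $\bSa$ (and symmetrically $\bSb$) so that only a diagonal matrix has to be inverted.

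One minor correction concerns your justification of the diagonal structure. Rotating the other factor by $\wtilde\bV$ preserves $\bA\tr\bB$ but not that factor's Gram matrix, which gets conjugated by $\wtilde\bV$. So the common $\bD^2$, which is what makes $\steptmin d_j$ taken from $\tr{\steptmin\bA}\steptmin\bA$ the correct constant, comes from the rebalancing via the SVD of the updated product in Appendix A. The normalization in Remark \ref{remark:5} alone only guarantees that the just-normalized factor, which is the one held fixed in the next update, has a diagonal Gram matrix; that is all the column split actually needs.
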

\begin{proof}
    See online Appendix A.
\end{proof}
We iterate these update equations until convergence, defined as the Frobenius norm of $\stept\btheta-\steptmin\btheta$ falling below a specified tolerance. Finally, we note that any subset of the parameters in $\bracksb{\bbeta,\bgamma,\bphia,\bphib,\bM}$ may be fixed to zero to define a valid submodel without affecting the estimation of the remaining parameters. %Remarks regarding computational efficiency are provided in online Appendix A.

The alternating least-squares algorithm requires initial values. A straightforward approach is to initialize $\bracksb{\bbeta,\bgamma,\bphia,\bphib}$ to zero and $\bracksb{\bA,\bB}$ to random orthonormal matrices. However, to accelerate convergence, we adopt a different strategy: we first fit the submodel $\bracksb{\bbeta,\bgamma,\bphia,\bphib}$ with $\bM$ fixed at zero until convergence. We use these estimates as initial values for the covariate coefficients and intercepts, and initialize $\bM$ to the rank-$r$ SVD of the resulting residual matrix $\bE$. This strategy converges to the same solution as the random initialization in a smaller number of iterations.

%=============================================================================

\subsection{Upper Bounds on the Estimation Errors}
In this section, we quantify the statistical accuracy of the \texttt{IMR} estimator. For clarity, the analysis focuses on the special case in which no structural dependency information is available (that is, $\bS_r=\iden n$ and $\bS_c=\iden m$). Online Appendix D establishes that Theorem \ref{theorem:1} continues to hold at the same rates in the general case, under a mild spectral condition on the similarity matrices. The proof of Theorem \ref{theorem:1} is deferred to online Appendix B.

We begin by introducing some additional notation. For a parameter matrix $\bA$, we write ${\truev{\bA}}$ for its true (unknown) value and  $\Delta\bA = \estim{\bA} - \truev\bA$ for the corresponding estimation error. For an $n\times m$ matrix, we measure accuracy through the mean squared error, defined as $\operatorname{MSE}\bracks{\Delta\bA}=\bracks{nm}^{-1}\normfrob{\Delta\bA}$. We use ${\|\bA\|}_0=\sum_{ij} \mathbb{I}\bracks{\bA_{ij}\neq 0}$ for the entry-wise $\ell_0$ norm, where $\mathbb{I}\bracks{\cdot}$ is the indicator function. For quantities $x$ and $y$, we use $x\vee y$ and $x \wedge y$ for the maximum and minimum, respectively. Furthermore, $x\asymp y$ (respectively, $x\lesssim y$) indicates that $x$ and $y$ are of the same order (respectively, $x\leq C y$) for a positive constant $C$ independent of the dimensions $n,m,p,q,r$ and the sampling probabilities $\pi_{ij}$.

We study the intercept-free target, $\truev\btheta = \bX\truev\bbeta+\truev\bgamma\bZ + \truev\bM$, and the following estimation problem
%$\truev\btheta = \bX\truev\bbeta+\truev\bgamma\bZ + \truev\bM$ and, 
 \begin{align} \label{problem:thm1}
    \bracks{\estim\bbeta,\estim\bgamma,\estim\bM} \in \underset{\substack{\norminf{\bbeta}\leq c_\beta,%\,\norminf{\bX\bbeta}\leq \cxb  
    \\ \norminf{\bgamma}\leq c_\gamma,%\,\norminf{\bgamma\bZ}\leq \cgz
    \\
    \norminf{\bM} \leq c_m}}{\operatorname{argmin}}
    \frac{1}{2}\normfrob{\bomega \circ (\bY-\btheta)} +\lambm \normnuc{\bM}  +
    \lambb \normlass{\bbeta} + 
    \lambg \normlass{\bgamma},
\end{align}
for positive constants $c_\beta,c_\gamma,c_m$, and we consider the following assumptions.
\begin{assumption} \label{A1}
     Without loss of generality, $\norminf{\bX} \le 1$ and $\norminf{\bZ} \le 1$, and let $s_\beta={\|\truev\bbeta\|}_0$ and $s_\Gamma={\|\truev\bgamma\|}_0$ denote the sparsity of $\truev\bbeta$ and $\truev\bgamma$, respectively. Moreover, there exist finite constants $c_\beta,c_\gamma,c_m,\cxb,\cgz$ such that 
     \[\norminf{\truev\bbeta} \le c_\beta,\quad \norminf{\truev\bgamma} \le c_\gamma,\quad  \norminf{\truev\bM} \le c_m,\quad \norminf{\bX\truev\bbeta}\leq \cxb ,\quad\norminf{\truev\bgamma\bZ}\leq\cgz.\] 
     We assume that $\max{(r,p,q)}< \min{(n,m)}$ and that $\bX$ and $\bZ$ are of full rank. Consequently, their Gram matrices are strictly positive definite, and their eigenvalues are bounded below by positive constants; that is, there exist positive constants $\kappa_x$ and $\kappa_z$ such that $\kappa_x \le n^{-1}\lambda_{\min}\bracks{\tr\bX\bX}$ and $\kappa_z \le m^{-1} \lambda_{\min}\bracks{\bZ\tr\bZ}$, where $\lambda_{\min}(\cdot)$ denotes the smallest eigenvalue. Finally, we define $d_x = \max_{1\le k\leq p}\sum_{i=1}^n|\bX_{ik}|$, $d_z = \max_{1\le k\leq q}\sum_{j=1}^m|\bZ_{kj}|$, $\sigma_x = \normop{\bX}$, and $\sigma_z=\normop{\bZ}$. 
\end{assumption}
\begin{assumption}
    The errors $\bracksb{\bepsilon_{ij}}$ are independent, zero-mean, $\sigma$-sub-Gaussian random variables.
\end{assumption}
\begin{assumption}\label{A3}
     The observation indicators $\bracksb{\bomega_{ij}}$ are independent $\operatorname{Bernoulli}\bracks{\pi_{ij}}$ variables that are independent of $\bracksb{\bepsilon_{ij}}$ and of $\truev\bstheta$. Furthermore, there exist constants $\pi_{\min}$ and $\pi_{\max}$ such that $0 < \pi_{\min} \le \pi_{ij} \le \pi_{\max} \le 1$ for all $(i,j)$.
\end{assumption}

\begin{theorem} \label{theorem:1}
    Let Assumptions \ref{A1}--\ref{A3} hold with $\bSa=\iden n$ and $\bSb = \iden m$, and set
    \begin{align*}
        \lambb &\asymp d_x + \sigma \sqrt{n\,\log\bracks{nm}}, \\
        \lambg &\asymp d_z  +\sigma \sqrt{m\,\log\bracks{nm}}, \\
        \lambm &\asymp \sigma_x + \sigma_z + \sigma \sqrt{\pi_{\max}\bracks{n\vee m}\log\bracks{n+m}}.
    \end{align*}
        The estimator \eqref{problem:thm1} satisfies each of the following with probability at least $1-8\,\bracks{n+m}^{-1}$:
    %Then, for some universal constant $c$,  with probability at least $1-8\,\bracks{n+m}^{-1}$, the estimator \eqref{problem:thm1} satisfies
    \begin{align*}
        (i)&\quad \mse{\Delta\bbeta} \lesssim \frac{\bracks{c_m+\cgz}s_\beta}{\pi_{\min}\,p\,m} + \frac{\sigma\,s_\beta}{\pi_{\min}\,p\,m} \sqrt{\frac{\log\bracks{nm}}{n}} + \frac{\pi_{\max}}{\pi_{\min}^2\bracks{n\wedge m}}, \\
        (ii)&\quad \mse{\Delta\bgamma} \lesssim \frac{\bracks{c_m+\cxb}s_\Gamma}{\pi_{\min}\,n\,q} + \frac{\sigma\,s_\Gamma}{\pi_{\min}\,n\,q} \sqrt{\frac{\log\bracks{nm}}{m}}+ \frac{\pi_{\max}}{\pi_{\min}^2\bracks{n\wedge m}}, \\
        (iii)&\quad \mse{\Delta\bM} \lesssim 
        \frac{r\bracks{\sigma^2_x+\sigma^2_z}}{\,\pi_{\min}^{2}\,n\,m}  +
        %\frac{r\,\pi_{\max}\,\sigma^2\,\bracks{n\vee m}\,\log\bracks{n+m}}{\pi_{\min}^2\,n\,m} = 
        \frac{r\,\pi_{\max}\,\sigma^2\,\log\bracks{n+m}}{\pi_{\min}^2\,\bracks{n\wedge m}}.% \\
    \end{align*}
\end{theorem}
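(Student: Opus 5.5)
The plan is to prove the three bounds one block at a time, using the optimality of $(\estim\bbeta,\estim\bgamma,\estim\bM)$ in \eqref{problem:thm1} with respect to that block while the other two are held at their estimated values. The cross terms produced by the other blocks are controlled only through the box constraints $\norminf{\cdot}\le c$. This explains why the rates for $\bbeta$ contain $c_m+\cgz$: the errors in the other components are treated as bounded deterministic nuisances, not estimated jointly. The common starting point is the basic inequality. Write $\mathcal{P}_\Omega$ for the map $\bA\mapsto\bomega\circ\bA$. Comparing the objective at $\estim\bbeta$ with its value at $\truev\bbeta$, with $\estim\bgamma,\estim\bM$ fixed, gives
\begin{align*}
\tfrac12\normfrob{\bomega\circ(\bX\Delta\bbeta)} \le \trace{\tr{(\bX\Delta\bbeta)}\,\bomega\circ\bracks{\bepsilon-\Delta\bgamma\bZ-\Delta\bM}} + \lambb\bracks{\normlass{\truev\bbeta}-\normlass{\estim\bbeta}}.
\end{align*}
The analogous inequalities hold for $\bgamma$ and for $\bM$, with the nuclear norm in place of the $\ell_1$ norm.

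\textbf{Step 1 (stochastic terms).} Next I would bound the noise and sampling terms on a single event of probability at least $1-8(n+m)^{-1}$. For the $\ell_1$ blocks I need $\norminf{\tr\bX(\bomega\circ\bepsilon)}\lesssim\sigma\sqrt{n\log(nm)}$. This follows from sub-Gaussian Hoeffding applied to each entry, which is a sum of $n$ terms with $|\bX_{ik}|\le1$, followed by a union bound over $pm\le nm$ entries. The companion bound $\norminf{\bE\tr\bZ}\lesssim\sigma\sqrt{m\log(nm)}$ is proved the same way. For the nuclear block I need $\normop{\bomega\circ\bepsilon}\lesssim\sigma\sqrt{\pi_{\max}(n\vee m)\log(n+m)}$, which is the matrix Bernstein inequality for the rectangular sum $\sum_{ij}\bomega_{ij}\bepsilon_{ij}e_i\tr{e_j}$ after truncating the sub-Gaussian variables. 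I also need a lower restricted-strong-convexity bound: $\normfrob{\bomega\circ\bA}\gtrsim\pi_{\min}\normfrob{\bA}$ minus a deviation term, uniformly over the relevant cone. For $\bM$ I would take this from the standard Klopp / Negahban--Wainwright argument, using a peeling argument over low-rank matrices with $\norminf{\cdot}\le 2c_m$. This argument produces the $\pi_{\max}/(\pi_{\min}^2(n\wedge m))$ remainder that appears in all three bounds. The deterministic pieces $d_x$, $d_z$, $\sigma_x$, $\sigma_z$ in the tuning parameters are there to dominate the cross terms. For example, $\norminf{\tr\bX(\bomega\circ\bA)}\le d_x\norminf{\bA}$ and $\normop{\bomega\circ(\bX\bA)}\le\sigma_x\normop{\bA}$ hold for the bounded nuisance errors, up to the constants $c$.

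\textbf{Step 2 (cone and conclusion for each block).} Choose $\lambb\ge2\norminf{\tr\bX(\bomega\circ(\bepsilon-\Delta\bgamma\bZ-\Delta\bM))}$. The usual Lasso argument then gives $\normlass{\Delta\bbeta_{S^c}}\le3\normlass{\Delta\bbeta_S}$. This yields $\normfrob{\bomega\circ\bX\Delta\bbeta}^2\lesssim\lambb\sqrt{s_\beta}\normf{\Delta\bbeta}$. Rather than relying on the Gram bound $\kappa_x$ alone, I would also use the crude bound $\normlass{\Delta\bbeta_S}\le 2c_\beta s_\beta$. This yields bounds of order $(c_m+\cgz+\sigma\sqrt{\log(nm)/n})\,s_\beta/(\pi_{\min}pm)$ after dividing by $pm$. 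The $\bgamma$ block is the transpose of the same argument. For $\bM$, I would take $\lambm\ge2\normop{\bomega\circ(\bepsilon-\bX\Delta\bbeta-\Delta\bgamma\bZ)}$ and use the decomposability of the nuclear norm, which gives $\normnuc{\Delta\bM}\le4\sqrt{2r}\normf{\Delta\bM}$. Combining this with the restricted strong convexity of Step 1 gives $\normfrob{\Delta\bM}\lesssim r\lambm^2/\pi_{\min}^2$. Dividing by $nm$ and substituting $\lambm$ produces exactly the two terms in (iii).

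\textbf{Main obstacle.} The delicate step is the restricted strong convexity under the sampling operator for the covariate blocks. The cone is defined in coefficient space, while sampling acts entrywise on $\bX\Delta\bbeta$. I would first transfer the cone to the matrix $\bX\Delta\bbeta$, using its rank $\le p$ and the bound $\norminf{\bX\Delta\bbeta}\le 2\cxb$. I would then invoke the same low-rank peeling bound as for $\bM$. This is the source of the shared remainder term. Whether the constants $\kappa_x$ and $\kappa_z$ are actually needed there, or only to pass from $\normf{\bX\Delta\bbeta}$ to $\normf{\Delta\bbeta}$, is where I expect the bookkeeping to be most error-prone.
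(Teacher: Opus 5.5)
Your architecture is the paper's. It uses blockwise basic inequalities with the other two blocks frozen at their estimates, and absorbs cross terms into the tuning parameters through the box constraints (e.g.\ $\norminf{\tr\bX(\bomega\circ\bA)}\le d_x\norminf{\bA}$). It uses Hoeffding/matrix-Bernstein noise bounds and the crude bound $\normlass{\Delta\bbeta}\lesssim c_\beta s_\beta$. The RSC step is Klopp's peeling argument, with $\bX\Delta\bbeta$ handled exactly as you propose, via $\normnuc{\bX\Delta\bbeta}\le\sqrt p\,\normf{\bX\Delta\bbeta}$ and $\norminf{\bX\Delta\bbeta}\le 2\cxb$. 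As you suspected, $\kappa_x$ enters only at the end, to pass from $\normf{\bX\Delta\bbeta}$ to $\normf{\Delta\bbeta}$.

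\textbf{The gap.} The step that does not go through is the claim that, for box-bounded errors, $\normop{\bomega\circ(\bX\Delta\bbeta)}\lesssim\sigma_x c_\beta$ and $\normop{\bomega\circ(\Delta\bgamma\bZ)}\lesssim\sigma_z c_\gamma$.

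The masking is not the issue. For $\operatorname{rank}\bB\le p$ one has $\normop{\bomega\circ\bB}\le\normf{\bB}\le\sqrt p\,\normop{\bB}$. The issue is that the box constraint controls $\norminf{\Delta\bbeta}$, not $\normop{\Delta\bbeta}$, and the latter can be as large as $2c_\beta\sqrt{pm}$.

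Here is a concrete case. Take $p=1$ and $\bX=\bone_n$, so $\sigma_x=\sqrt n$. Take the feasible error $\Delta\bbeta=2c_\beta\tr{\bone_m}$, coming from $\estim\bbeta=c_\beta\tr{\bone_m}$ and $\truev\bbeta=-c_\beta\tr{\bone_m}$. Then $\bomega\circ\bX\Delta\bbeta=2c_\beta\bomega$. Moreover $\normop{\bomega}\ge\tr{\bone_n}\bomega\bone_m/\sqrt{nm}$, which concentrates at a value of at least $\pi_{\min}\sqrt{nm}$. For $n\asymp m$ this exceeds $\lambm\asymp\sigma_x+\sigma_z+\sigma\sqrt{\pi_{\max}(n\vee m)\log(n+m)}$ by a factor growing like $\sqrt{n/\log n}$.

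Consequently the hypothesis $\lambm\ge2(\normop{\bomega\circ\bX\Delta\bbeta}+\normop{\bomega\circ\Delta\bgamma\bZ}+\normop{\bomega\circ\bepsilon})$ is not guaranteed. That hypothesis is what gives the cone $\normnuc{\Delta\bM}\le\sqrt{32r}\,\normf{\Delta\bM}$ and the basic-inequality bound. Inflating $\lambm$ to cover worst-case feasible nuisances does not help: it makes $r\lambm^2/(\pi_{\min}^2 nm)$ of constant order, which destroys (iii).

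The paper's proof makes the same leap, so following it does not close the hole. Its RSC lemma for $\bM$ is stated under $\lambm\ge4(\sigma_xc_\beta+\sigma_zc_\gamma+\dots)$, while its nuclear-norm cone lemma needs the operator-norm condition above, and nothing links the two.

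\textbf{A possible repair.} Stop treating $\Delta\bbeta,\Delta\bgamma$ as arbitrary box-bounded nuisances in the $\bM$ block. The $\bbeta$ and $\bgamma$ analyses use only $\norminf{\Delta\bM}\le2c_m$. Their Frobenius bounds on $\bX\Delta\bbeta$ and $\Delta\bgamma\bZ$ can therefore be fed into the $\bM$ block through $\normop{\bomega\circ\bB}\le\normf{\bB}$ without circularity. This replaces $\sigma_x,\sigma_z$ in $\lambm$ by sparsity-dependent terms, roughly $\sqrt{n s_\beta/\pi_{\min}}$ and $\sqrt{m s_\Gamma/\pi_{\min}}$. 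Statement (iii) would then have to be restated accordingly.

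\textbf{A smaller point, also shared with the paper.} You use $\norminf{\bX\Delta\bbeta}\le2\cxb$ and $\norminf{\Delta\bgamma\bZ}\le2\cgz$ both for the $\ell_\infty$ cross terms and for the RSC transfer. These require the constraints $\norminf{\bX\bbeta}\le\cxb$ and $\norminf{\bgamma\bZ}\le\cgz$ in the estimator's feasible set. The stated program constrains only $\norminf{\bbeta}$ and $\norminf{\bgamma}$, which yield $pc_\beta+\cxb$ and $qc_\gamma+\cgz$ instead. You should either add these constraints or carry the extra factors $p$ and $q$.
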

\begin{remark}
{In each bound of Theorem \ref{theorem:1}, the first term reflects the cost of jointly estimating the three components. For example, in the error bound for $\bbeta$ in part (i), this term accounts for the additional uncertainty induced by the simultaneous estimation of $\bgamma$ and $\bM$. In (i) and (ii), this term converges to zero when the true parameter matrices $\bbeta$ and $\bgamma$ are sparse. The second terms in (i) and (ii) scale with the sub-Gaussian noise and vanish in both dense and sparse settings at rate at least $\mathcal{O}\bracks{\sigma,\pi_{\min}^{-1}\sqrt{\log(nm)/(n\wedge m)}}$. The third term in (i) and (ii) arises from lower-bounding the remainder in the first-order Taylor expansion of the loss function around the true parameters, which is related to the restricted strong convexity condition in standard matrix completion theory; see Online Appendix B for further details. Under the almost uniform sampling scheme, $\pi_{\max}\asymp\pi_{\min}$, this term vanishes at the rate $\mathcal{O}\bracks{\pi_{\min}^{-1}(n\wedge m)^{-1}}$. Finally, in (iii), because $\sigma^2_x=\mathcal{O}\bracks{n}$ and $\sigma^2_z=\mathcal{O}\bracks{m}$, whenever $\pi_{\max}\,\sigma^2\,\log\bracks{n+m}>1$, the bound matches the standard rate in matrix completion up to a logarithmic factor \citep{klopp_Matrix_2015}.}
\end{remark}

\section{Simulation Studies \label{sec:simulation}}

In this section, we conduct simulation experiments to evaluate the performance of the \texttt{IMR} framework relative to alternative methods. We consider two simulation settings.

\textbf{Setting 1.} This setting examines performance across different matrix dimensions, with $n=m\in\bracksb{400,600,800,1000}$. Data are generated from the model $\btheta=\bX\bbeta + \bM$, where $\bM = \bA\tr\bB$. The latent matrices $\bA \in \R{n}{r}$ and $\bB \in \R{m}{r}$ are drawn independently from $\operatorname{Uniform}(0,1)$, with $r=10$. The row covariate matrix $\bX \in \R{n}{p}$ (with $p=4$) consists of independent and identically distributed (i.i.d.) entries from $\operatorname{Uniform}(0,1)$. Its coefficient matrix $\bbeta \in \R{p}{m}$ is generated from a multivariate normal distribution, $ N_p(\mu_p, \Sigma_p)$,
where the entries of the mean vector $\mu_p$ are drawn uniformly from $\{x: |x| \in [0.1,1]\}$, and $\Sigma_p$ is a diagonal covariance matrix with entries drawn from $\operatorname{Uniform}(0.025, 1)$. Consequently, the true rank of $\btheta$ is $p+r=14$.

\textbf{Setting 2.} This setting evaluates the impact of missingness rates ranging from $70\%$ to $95\%$ and compares the computational efficiency of  \texttt{IMR}  against \texttt{Soft-Impute}. We fix the dimensions at $n=m=1000$ and generate data from the model $\btheta=\bX\bbeta+\bgamma\bZ+\bM$. The matrices $\bX$, $\bbeta$, and $\bM$ are generated as in Setting 1 (with $r=5$ and $p=5$). The column covariates $\bZ \in \R{q}{m}$ are drawn from $\operatorname{Uniform}(0,1)$, and $\bgamma \in \R{n}{q}$ is generated from the same distribution as $\bbeta$, with $q=5$. Here, the true rank of $\btheta$ becomes $r+p+q=15$.

In both settings, the observed matrix is given by $\bY = \bomega \circ \bracks{\btheta+\bepsilon}$, where $\bomega$ is the indicator matrix of observed entries and $\bepsilon \in \R{n}{m}$ represents noise. The noise entries are i.i.d. Gaussian, $\bepsilon_{ij} \sim \mathcal{N}(0, \sigma^2)$. The noise variance, $\sigma^2$, is chosen such that the signal-to-noise ratio (SNR) is one: $\operatorname{SNR}= \sqrt{\operatorname{Signal}(\btheta)/\sigma^2}=1$, where $\operatorname{Signal}(\btheta) = \sum_{i=1}^n\sum_{j=1}^m \bracks{\btheta_{ij} - \overline{\btheta}}^2/\bracks{nm-1}$ and $\overline{\btheta} = \sum_{i=1}^n\sum_{j=1}^m \btheta_{ij}/\bracks{nm}$.

The missingness indicator $\bomega$ is generated via Bernoulli sampling. In Setting 1, entries are observed with probability $0.2$ ($80\%$ missingness). In Setting 2, we vary the missingness rate from $70\%$ to $95\%$ in $5\%$ increments. To ensure comparability across rates within Setting 2, we first generate a dataset with $70\%$ missingness and incrementally remove observed entries to increase missingness rates. Consequently, the test set (the initially unobserved $70\%$) remains fixed across varying sparsity levels. We perform $500$ independent replicates for each setting.

Performance is evaluated using the relative root mean squared error (RRMSE) of the estimators $\estim\bbeta$, $\estim\bM$, ${\estim\btheta}_{\bomega}$ (training estimates), and ${\estim\btheta}_{\bomegac}$ (test estimates). We define 
\(\operatorname{RRMSE}(\bbeta) = {\normf{\truev\bbeta-\hat \bbeta}}\,\big/\,{\normf{\truev\bbeta}}\) and proceed similarly for the other parameters.
  
We compare the performance of our proposed method with two alternative methods: \texttt{Soft-Impute}\footnote{\texttt{R} package: \href{https://cran.r-project.org/package=softImpute}{cran.r-project.org/package=softImpute}} \citep{hastie_matrix_2015} and  \texttt{MCCI}\footnote{Authors' implementation: \href{https://github.com/mxjki/Matrix_Completion_For_Complex_Survey_with_Multivariate_Missingness}{github.com/mxjki}} \citep{mao_Matrix_2019}. We use the authors' publicly available implementations with tuning parameters selected via cross-validation.

Table \ref{tab:sim1} summarizes the simulation results under the first setting, wherein the proposed method demonstrates superior performance relative to \texttt{Soft-Impute} and \texttt{MCCI}. Furthermore,  \texttt{IMR}  yields more accurate rank estimates than the competing methods, achieving exact rank recovery (with zero variance across all 500 replicates) when the matrix dimension $n=m$ exceeds $400$. Results under the second setting are presented in Figures \ref{fig:sim2_res} and \ref{fig:sim2_time}. The results for \texttt{MCCI} are omitted, as it was outperformed by the other two methods in both accuracy and computational efficiency. \texttt{IMR}  consistently achieves lower RRMSE than \texttt{Soft-Impute}, and both methods exhibit reduced error as the sparsity level decreases. Moreover,  \texttt{IMR}  achieves exact rank recovery (zero variance) for sparsity levels up to $90\%$. In contrast, \texttt{Soft-Impute} consistently overestimates the rank. 

Despite its added model complexity, \texttt{IMR} is at least four times faster than \texttt{Soft-Impute}. For instance, at an $80\%$ sparsity level, \texttt{IMR} achieves a $66\%$ reduction in RRMSE relative to \texttt{Soft-Impute} while being seven times faster, even though both methods require a comparable number of iterations to converge. 

\begin{table}[!tb]
\centering
\caption{\label{tab:sim1}Empirical relative root mean squared errors (RRMSEs), estimated ranks, and standard deviations (in parentheses) under the model $\btheta=\bX\bbeta+\bM$. We consider dimensions $(n,m)=(400,400),(600,600),(800,800),(1000,1000)$ and a fixed missingness rate of $80\%$. The true rank of $\btheta$ is 14.}
\centering
%\fontsize{12}{14}\selectfont
\small
\begin{tabular}[t]{>{}lrrr>{}rr}
\toprule
Model & RRMSE($\beta$) & RRMSE($M$) & RRMSE($\btheta_{\text{train}}$) & RRMSE($\btheta_\text{test}$) & Rank($\btheta$)\\ 
\midrule
\addlinespace[0.3em]
\multicolumn{6}{l}{n = m = 400}\\
\textcolor{black}{\textbf{{\hspace{1em}IMR}}} & \textcolor{black}{{0.709 (0.031)}} & \textcolor{black}{{0.838 (0.059)}} & \textcolor{black}{{0.415 (0.054)}} & \textcolor{black}{\textbf{{0.47 (0.062)}}} & \textcolor{black}{{14.76 (2.169)}}\\
\textcolor{black}{\textbf{\hspace{1em}Soft-Impute}} & \textcolor{black}{—} & \textcolor{black}{—} & \textcolor{black}{0.456 (0.059)} & \textcolor{black}{0.521 (0.07)} & \textcolor{black}{13.54 (1.888)}\\
\textcolor{black}{\textbf{{\hspace{1em}MCCI}}} & \textcolor{black}{{0.913 (0.077)}} & \textcolor{black}{{0.998 (0.006)}} & \textcolor{black}{{0.569 (0.046)}} & \textcolor{black}{{0.576 (0.05)}} & \textcolor{black}{{10.322 (1.399)}}\\
\addlinespace[0.3em]
\multicolumn{6}{l}{n = m = 600}\\
\textcolor{black}{\textbf{\hspace{1em}IMR}} & \textcolor{black}{0.553 (0.022)} & \textcolor{black}{0.647 (0.029)} & \textcolor{black}{0.331 (0.043)} & \textcolor{black}{\textbf{0.358 (0.047)}} & \textcolor{black}{14 (0)}\\
\textcolor{black}{\textbf{{\hspace{1em}Soft-Impute}}} & \textcolor{black}{{—}} & \textcolor{black}{{—}} & \textcolor{black}{{0.4 (0.055)}} & \textcolor{black}{{0.446 (0.063)}} & \textcolor{black}{{14.378 (1.803)}}\\
\textcolor{black}{\textbf{\hspace{1em}MCCI}} & \textcolor{black}{0.745 (0.061)} & \textcolor{black}{0.985 (0.029)} & \textcolor{black}{0.498 (0.044)} & \textcolor{black}{0.507 (0.045)} & \textcolor{black}{12.462 (5.069)}\\
\addlinespace[0.3em]
\multicolumn{6}{l}{n = m = 800}\\
\textcolor{black}{\textbf{{\hspace{1em}IMR}}} & \textcolor{black}{{0.47 (0.019)}} & \textcolor{black}{{0.562 (0.024)}} & \textcolor{black}{{0.29 (0.037)}} & \textcolor{black}{\textbf{{0.308 (0.039)}}} & \textcolor{black}{{14 (0)}}\\
\textcolor{black}{\textbf{\hspace{1em}Soft-Impute}} & \textcolor{black}{—} & \textcolor{black}{—} & \textcolor{black}{0.362 (0.049)} & \textcolor{black}{0.396 (0.055)} & \textcolor{black}{15.194 (1.895)}\\
\textcolor{black}{\textbf{{\hspace{1em}MCCI}}} & \textcolor{black}{{0.643 (0.055)}} & \textcolor{black}{{0.943 (0.062)}} & \textcolor{black}{{0.455 (0.035)}} & \textcolor{black}{{0.462 (0.035)}} & \textcolor{black}{{17.698 (8.802)}}\\
\addlinespace[0.3em]
\multicolumn{6}{l}{n = m = 1000}\\
\textcolor{black}{\textbf{\hspace{1em}IMR}} & \textcolor{black}{0.417 (0.016)} & \textcolor{black}{0.503 (0.022)} & \textcolor{black}{0.261 (0.034)} & \textcolor{black}{\textbf{0.274 (0.036)}} & \textcolor{black}{14 (0)}\\
\textcolor{black}{\textbf{{\hspace{1em}Soft-Impute}}} & \textcolor{black}{{—}} & \textcolor{black}{{—}} & \textcolor{black}{{0.333 (0.047)}} & \textcolor{black}{{0.359 (0.052)}} & \textcolor{black}{{15.69 (1.896)}}\\
\textcolor{black}{\textbf{\hspace{1em}MCCI}} & \textcolor{black}{0.575 (0.049)} & \textcolor{black}{0.895 (0.069)} & \textcolor{black}{0.422 (0.03)} & \textcolor{black}{0.427 (0.031)} & \textcolor{black}{20.754 (8.168)}\\
\bottomrule
\end{tabular}
\end{table}

\begin{figure}[!tb]
    \centering
    \includegraphics[width=1\linewidth]{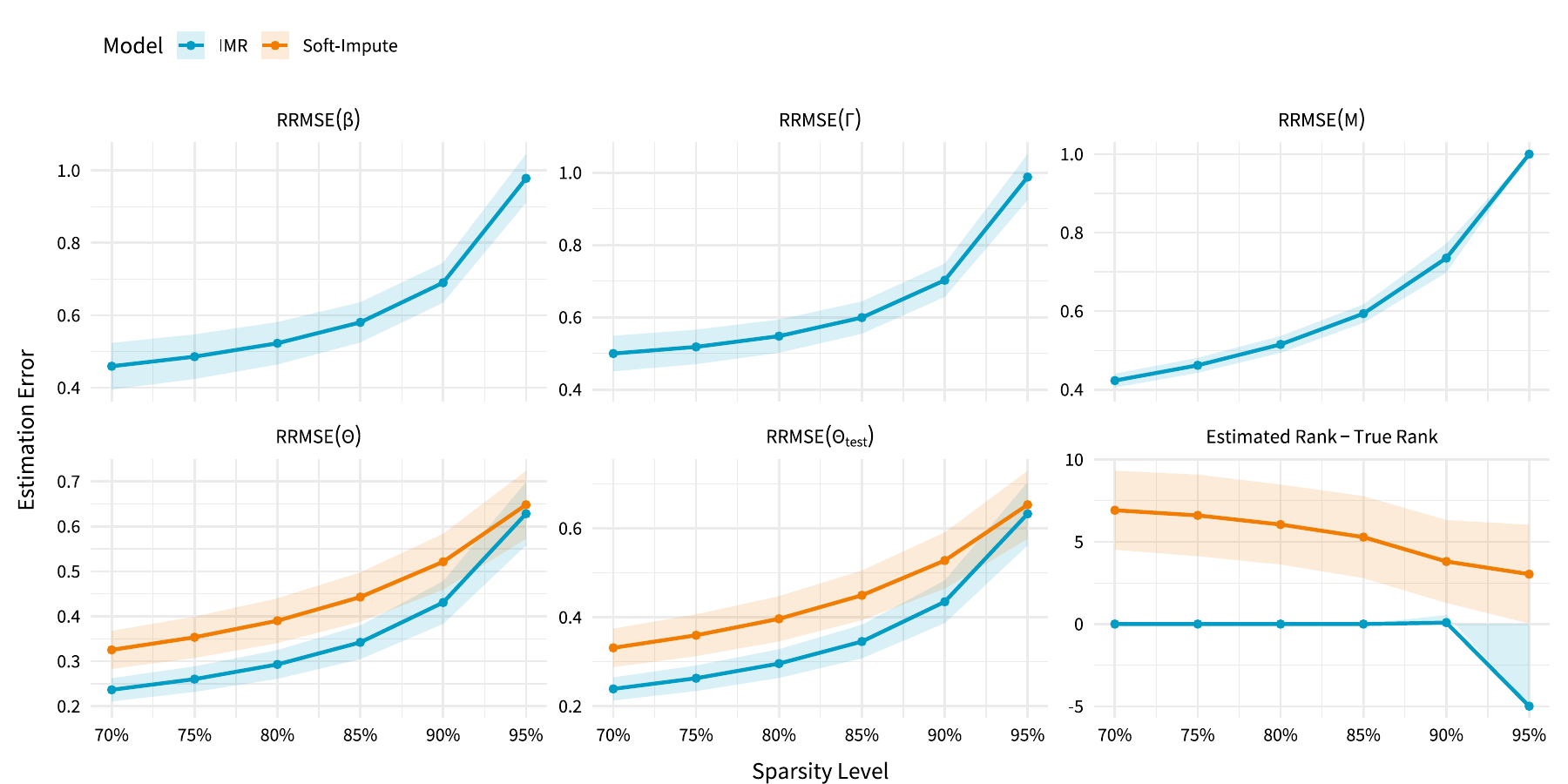}
    \caption{\textbf{Estimation performance and rank recovery in Simulation Setting 2.} Results represent the mean (solid lines) $\pm 1$ standard deviation (shaded regions) over $500$ independent replications. The true rank of $\btheta$ is fixed at $15$.}
    \label{fig:sim2_res}
\end{figure}
\begin{figure}[!tb]
    \centering
    \includegraphics[width=1\linewidth]{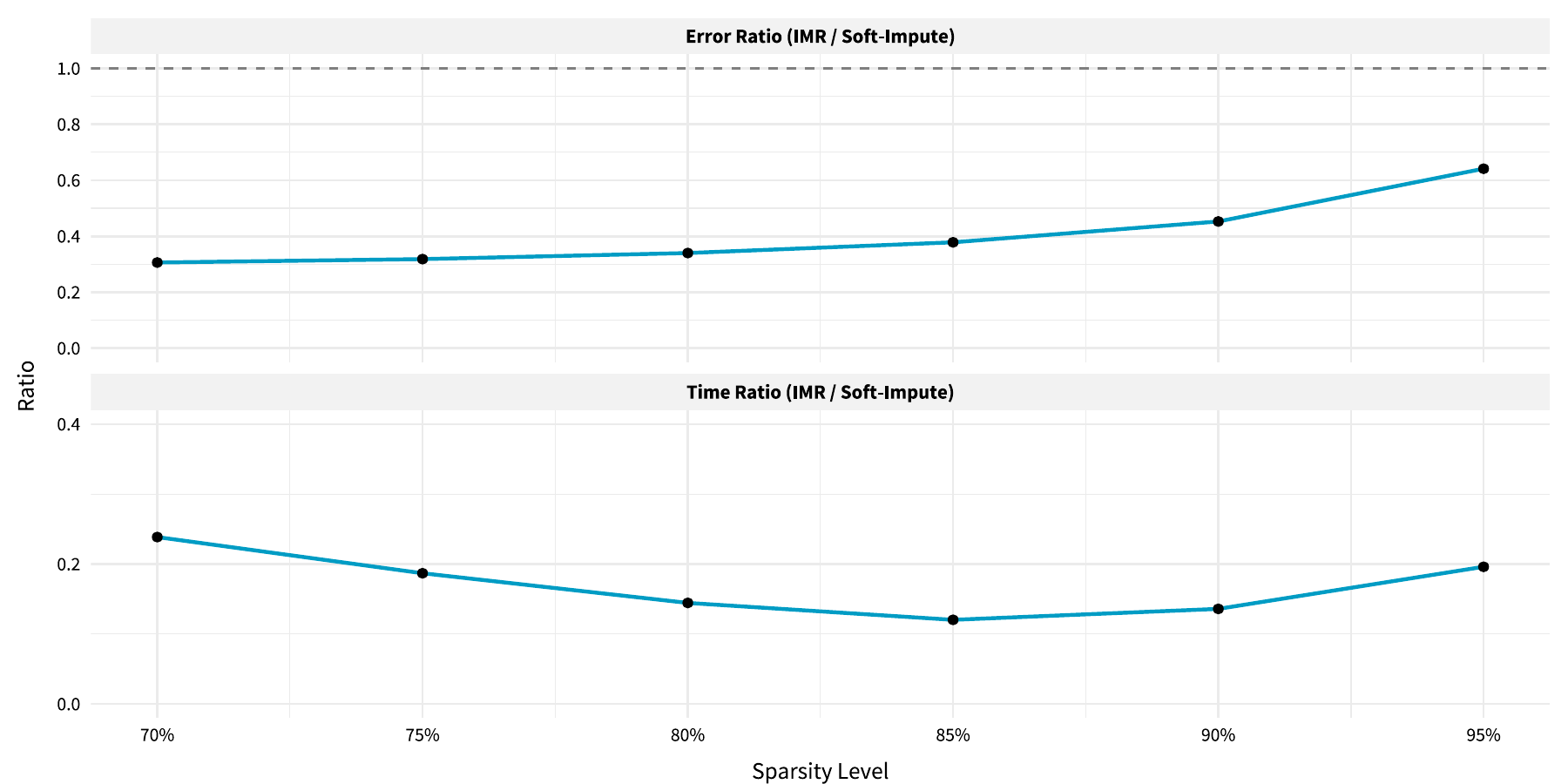}
    \caption{\textbf{Computational cost and performance relative to \texttt{Soft-Impute} (Simulation Setting 2).} Benchmarks were performed on a single CPU core, with results averaged over $500$ independent replications using fixed pre-tuned hyperparameters. \textbf{Top:} Ratio of test RRMSEs, ${\operatorname{RRMSE}_{\texttt{IMR}}}/{\operatorname{RRMSE}_{\texttt{Soft-Impute}}}$.  \textbf{Bottom:} Relative computational cost, ${\operatorname{Time}}_{\texttt{IMR}}/\operatorname{Time}_{\texttt{Soft-Impute}}$ (values below one favor \texttt{IMR}).}
    \label{fig:sim2_time}
\end{figure}
%========================================================
\section{Empirical Applications \label{sec:application}}

In this section, we apply our proposed method to two real-world applications: movie recommendation and bike-sharing demand modeling. For the first application, we use the MovieLens 1M dataset.\footnote{website: \href{https://grouplens.org/datasets/movielens/}{https://grouplens.org/datasets/movielens/}} MovieLens is a website where users rate movies and receive recommendations based on their rating history. This dataset contains 1,000,209 ratings for 3,952 movies by 6,040 users. User demographic information and movie genres are also available. About $96\%$ of the entries in the rating matrix are missing. 

The second application concerns spatiotemporal bike-sharing trips collected from BIXI\footnote{website: \href{https://bixi.com/}{https://bixi.com/}}, a docked bike-sharing service in Montreal, Canada. We use the data compiled by \citet{lei_scalable_2025}, which contain daily departure counts for each of 587 stations over 196 days (April 15 to October 27, 2019). The data also contain the geographic coordinates of the stations and time indices that we will use to construct the similarity matrices.

\subsection{Application to MovieLens 1M}

We use $\bY$ to denote the observed user--movie rating matrix of size $6040\times 3952$, where the $i$-th row corresponds to user $i$ and the $j$-th column corresponds to movie $j$. We include user gender and age as row covariates. Gender is encoded as 0 (female) or 1 (male), and age is grouped into four categories (0--24, 25--34, 35--49, 50+), represented by three dummy variables. These four variables are stored in $\bX \in \R{6040}{4}$, the row covariate matrix. We use movie genres as column covariates, which are encoded as binary indicator variables ($\bZ \in \R{18}{3952}$). A test set of 60,400 ratings (approximately $6\%$ of observed ratings) was held out, spanning all 6,040 users and a subset of 2,464 movies. We use the same train/test split as \citet{ma_statistical_2025}.

We consider two variants of our \texttt{IMR} framework: an intercept-only model (\texttt{IMR-I}) and a full model with intercepts and both row and column covariates (\texttt{IMR-IXZ}). These are compared with three existing methods: \texttt{Soft-Impute} \citep{hastie_matrix_2015}, the method proposed in \citet{ma_statistical_2025} (denoted \texttt{MCAI}), and \texttt{GLocal-K} \citep{han_GLocalK_2021}. The latter is a matrix completion framework that employs autoencoders with convolution kernels and outperforms the state-of-the-art collaborative-filtering baselines on the MovieLens 1M dataset, according to the results reported by its authors. We perform cross-validation to select the hyperparameters for \texttt{IMR} and \texttt{Soft-Impute}. For \texttt{MCAI}, we use the authors' implementation and hyperparameter settings, and we reproduce their reported results. For \texttt{GLocal-K}, we use the authors' implementation on MovieLens without modifications.

For evaluation, we report: (i) the root mean squared error (RMSE) on the training and test sets; (ii) the Pearson correlation between true and predicted ratings in the test set; (iii) the runtime (in minutes); (iv) rank estimates; and (v) the sparsity (proportion of zero entries) of the estimated covariate coefficient matrices. Table \ref{tab:movielens:results} summarizes these results. 

The full \texttt{IMR} model (\texttt{IMR-IXZ}) achieves a test RMSE that is $4.7\%$ lower than that of the \texttt{MCAI} model and $9.2\%$ lower than that of \texttt{Soft-Impute}. \texttt{GLocal-K} yields a further $0.9\%$ improvement in test RMSE over  \texttt{IMR-IXZ}, but  \texttt{IMR-IXZ}  is $99$ times faster. Moreover, the intercept-only variant (\texttt{IMR-I}) is $524$ times faster than \texttt{GLocal-K} while sacrificing only $1.6\%$ in test RMSE. The rank of the fitted matrix $\bM$ in  \texttt{IMR-IXZ}  is 11 (and 13 in  \texttt{IMR-I}), which is substantially smaller than the rank obtained by \texttt{GLocal-K} (63). Finally, although the row and column coefficient matrices in  \texttt{IMR-IXZ}  are full rank, they are $54.7\%$ and $73.8\%$ sparse, respectively.

To examine the interpretability of the covariate effects in the proposed method, we use the full model (\texttt{IMR-IXZ}) to analyze the distribution of predicted ratings across demographic groups for three selected movies from the children's genre. Figure \ref{fig:movielens:full} displays box plots of the estimated ratings for each movie, stratified by age and gender. The model reveals pronounced differences across age and gender groups, and these differences vary by movie. In general, female users exhibit higher predicted ratings than male users for these titles; regarding age, older groups (35+) are associated with higher ratings than younger groups. In contrast, for \textit{Home Alone}, the disparity between male and female ratings is negligible, likely reflecting the movie's broad appeal. Despite the high sparsity of the learned covariate coefficient matrices, the incorporation of demographic information induces noticeable shifts in the predictions and effectively captures preference heterogeneity across age and gender groups.

\begin{table}[!tb]
\centering 
\caption{Performance comparison on the MovieLens 1M dataset. Best values\\ per column are shown in bold, and \texttt{IMR}  models are shaded.} 
\label{tab:movielens:results}
%\resizebox{\ifdim\width>\linewidth\linewidth\else\width\fi}{!}
{
\fontsize{8}{10}\selectfont
\begin{tabular}[t]{>{}p{2.2cm}lllllllll}
\toprule 
\multicolumn{2}{c}{ } & \multicolumn{3}{c}{Performance} & \multicolumn{3}{c}{Rank Estimation} & \multicolumn{2}{c}{Sparsity} \\
\cmidrule(l{3pt}r{3pt}){3-5} \cmidrule(l{3pt}r{3pt}){6-8} \cmidrule(l{3pt}r{3pt}){9-10}
\multicolumn{2}{c}{ } & \multicolumn{2}{c}{Test} & \multicolumn{1}{c}{Train} & \multicolumn{5}{c}{ } \\
\cmidrule(l{3pt}r{3pt}){3-4} \cmidrule(l{3pt}r{3pt}){5-5}
Model & Time (min) & RMSE & Correlation & RMSE & $M$ & $\beta$ & $\Gamma$ & $\beta$ & $\Gamma$\\
\midrule
GLocal-K & 52.36 & \textbf{0.852} & \textbf{0.628} & \textbf{0.702} & 63 & — & — & — & —\\
\cellcolor[HTML]{f7f7f7}{IMR-IXZ} & \cellcolor[HTML]{f7f7f7}{0.53} & \cellcolor[HTML]{f7f7f7}{0.860} & \cellcolor[HTML]{f7f7f7}{0.593} & \cellcolor[HTML]{f7f7f7}{0.766} & \cellcolor[HTML]{f7f7f7}{$11$} & \cellcolor[HTML]{f7f7f7}{$5^{a}$} & \cellcolor[HTML]{f7f7f7}{$19^{a}$} & \cellcolor[HTML]{f7f7f7}{54.7\%} & \cellcolor[HTML]{f7f7f7}{73.8\%}\\
\cellcolor[HTML]{f7f7f7}{IMR-I} & \cellcolor[HTML]{f7f7f7}{\textbf{0.10}} & \cellcolor[HTML]{f7f7f7}{0.866} & \cellcolor[HTML]{f7f7f7}{0.587} & \cellcolor[HTML]{f7f7f7}{0.780} & \cellcolor[HTML]{f7f7f7}{13} & \cellcolor[HTML]{f7f7f7}{$1^{a}$} & \cellcolor[HTML]{f7f7f7}{$1^{a}$} & \cellcolor[HTML]{f7f7f7}{—} & \cellcolor[HTML]{f7f7f7}{—}\\
MCAI & 27.70 & 0.902 & 0.572 & 0.843 & {2} & $5^{a}$ & — & 11.2\% & —\\
Soft-Impute & 11.55 & 0.947 & 0.565 & 0.772 & 9 & — & — & — & —\\
\bottomrule
\end{tabular}}
\begin{tablenotes}[flushleft]\footnotesize
\item \textit{Note:}~$^{a}$~The additional unit in the estimated rank corresponds to the intercept vector.
\end{tablenotes}
\end{table}

\begin{figure}[!tb]
    \centering
    \includegraphics[width=1\linewidth]{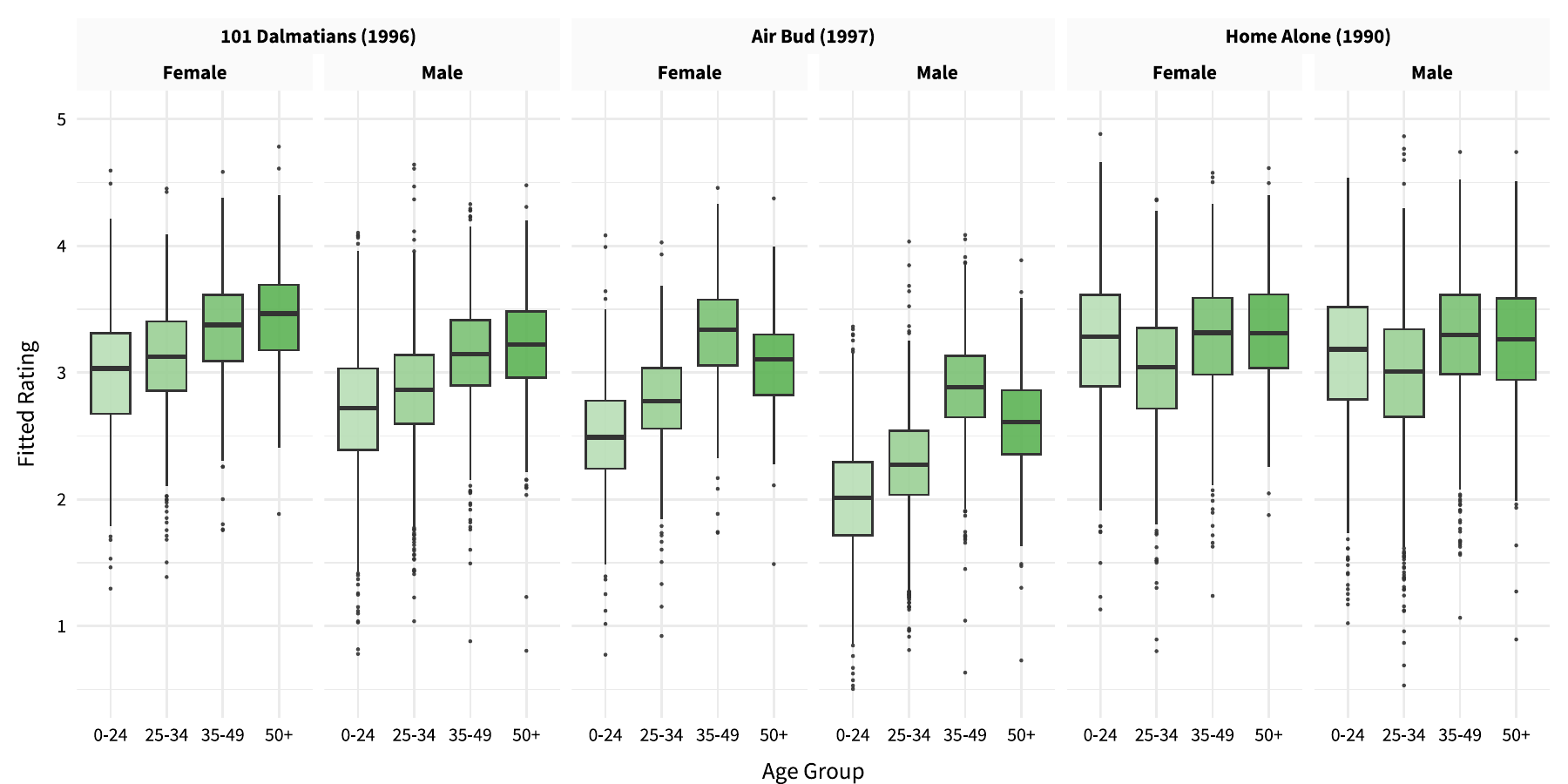}
    \caption{Box plots of estimated ratings from the full model (\texttt{IMR-IXZ}) on three movies, grouped by age and gender.}
    \label{fig:movielens:full}
\end{figure}

%=====================================================================
%=====================================================================
%===================================================================

\subsection{Application to  BIXI}

The BIXI dataset is represented by a ${196}\times{587}$ matrix $\bY$, where $\bY_{ij}$ denotes the number of departures on day $i$ at station $j$. Approximately $11\%$ of the entries in $\bY$ are missing. Furthermore, we held out $14\%$ of the observed entries at random to serve as a test set. The model is trained on a progressively larger fraction of the remaining data: $55\%$, $60\%$, $65\%$, $70\%$, and $75\%$, while the test set remains fixed to ensure comparability across different training sample sizes.  

This dataset exhibits strong dependence across rows (temporal dependence of the daily counts) and across columns (spatial dependence across stations). To account for this correlation structure, we consider two settings for the similarity matrices. In the first setting, denoted \texttt{IMR-S}, we use a Matérn $5/2$ kernel for the spatial dimension (stations) and a product of a squared-exponential kernel and a periodic kernel for the temporal dimension (days), following \citet{lei_scalable_2025}. In the second setting, denoted  \texttt{IMR-N}, we assume identity similarity matrices (i.e., no structural dependencies). Both variants include only the low-rank component; that is, they fit $\estim\bY=\estim\bM$. We compare their performance against the Bayesian Kernelized Tensor Regression (\texttt{BKTR}) model proposed by \citet{lei_scalable_2025}, which also captures row and column correlations.

We replicate the train/test split 50 times and report the average RRMSE on the test set, alongside the average training time. For hyperparameters, we specify $\bracks{\lambm=0.8,r=11}$ for both \texttt{IMR-S} and \texttt{IMR-N}. For \texttt{BKTR}, we adopt the settings used in \citet{lei_scalable_2025}. 

Table \ref{tab:bixi:res} summarizes the results. We observe that \texttt{IMR-S} achieves lower RRMSE values than \texttt{BKTR}, with all models showing improved performance as the training size increases. Furthermore, \texttt{IMR-S} consistently outperforms \texttt{IMR-N}, confirming that the specified similarity matrices improve predictive performance. Notably, while the Markov chain Monte Carlo (MCMC)-based \texttt{BKTR} requires approximately 26 minutes for training, both  \texttt{IMR} models converge in less than a third of a second.

\begin{table}[!tb]
\small
\caption{Predictive performance and computational efficiency on the BIXI dataset. Results are averaged over 50 independent train/test splits, with the test fraction fixed at $14\%$, and are reported as mean (standard deviation).}
 \label{tab:bixi:res}
\centering
\begin{tabularx}{\linewidth}{@{} l *{3}{X} *{3}{X} @{}}
\toprule
 & \multicolumn{3}{c}{{\bfseries Test RRMSE}} & \multicolumn{3}{c}{{\bfseries Computation Time (Seconds)}} \\
\cmidrule(lr){2-4} \cmidrule(lr){5-7}
{\bfseries Train Size} & {\bfseries \texttt{BKTR}} & {\bfseries \texttt{IMR-S} } & {\bfseries \texttt{IMR-N} } & {\bfseries \texttt{BKTR}} & {\bfseries \texttt{IMR-S} } & {\bfseries \texttt{IMR-N} } \\
\midrule\addlinespace[2.5pt]
\textbf{55\%} & 0.1760 (0.0015) & \textbf{0.1759 (0.0014)} & 0.1779 (0.0015) & 1480.18 (227.49) & \textbf{0.28 (0.10)} & \textbf{0.28 (0.11)}\\
\textbf{60\%} & 0.1751 (0.0015) & \textbf{0.1749 (0.0014)} & 0.1766 (0.0015) & 1524.50 (227.76) & \textbf{0.28 (0.09)} & \textbf{0.27 (0.11)}\\
\textbf{65\%} & 0.1744 (0.0015) & \textbf{0.1739 (0.0014)} & 0.1754 (0.0015) & 1614.40 (248.81) & \textbf{0.29 (0.11)} & \textbf{0.27 (0.09)}\\
\textbf{70\%} & 0.1738 (0.0015) & \textbf{0.1730 (0.0014)} & 0.1743 (0.0014) & 1667.30 (223.36) & \textbf{0.27 (0.04)} & \textbf{0.29 (0.15)}\\
\textbf{75\%} & 0.1735 (0.0016) & \textbf{0.1723 (0.0014)} & 0.1733 (0.0014) & 1718.72 (253.08) & \textbf{0.29 (0.08)} & \textbf{0.28 (0.11)}\\
\bottomrule
\end{tabularx}
\end{table}

\section{Discussion}\label{sec:conc}

We have illustrated that the proposed framework accommodates diverse types of side information and achieves competitive predictive performance at a substantially lower computational cost. Several extensions remain open and would further strengthen our framework. First, to preserve computational efficiency, we orthonormalize the covariate matrices through a QR decomposition rather than solving the Lasso in the original feature space. Although this reparameterization alters the geometry of the $\ell_1$ penalty, its primary purpose of guarding against over-parameterization is retained, and the estimator continues to induce sparsity in the original coefficient matrices, as the MovieLens application illustrates.

Second, the $\ell_1$ and nuclear-norm penalties, together with the presence of missing data, introduce bias into the estimates of the covariate coefficients. An important direction for future research is to develop debiasing techniques that would allow us to construct confidence intervals and perform hypothesis tests. Third, although we establish nonasymptotic upper bounds on the estimation errors, we do not derive matching minimax lower bounds tailored to the proposed estimator; to the best of our knowledge, no such bounds currently exist for this class of models. We intend to build on the present analysis to obtain tighter upper bounds and corresponding minimax lower bounds. Finally, our development assumes that entries are missing completely at random. Relaxing this assumption to accommodate informative sampling is another avenue for broadening the applicability of the framework.

\if1\anon
\section*{Funding}
This work was supported by the Natural Sciences and Engineering Research Council of Canada (NSERC) under the Canada Graduate Scholarship -- Doctoral program.
\fi

\section*{Disclosure Statement}\label{disclosure-statement}
The authors report there are no competing interests to declare.

\section*{Software and Data Availability Statement}\label{data-availability-statement}
\if1\anon
The proposed method is implemented in an \texttt{R} package. This package, along with the accompanying vignette is publicly available at \url{https://github.com/khaledfouda/IMR}. The datasets analyzed in Section \ref{sec:application} and the code required to reproduce the results presented in Sections \ref{sec:simulation} and \ref{sec:application} are publicly available at \url{https://github.com/khaledfouda/imr-reproducible-code}.
\fi
\if0\anon
The proposed method is implemented in an \texttt{R} package, which together with its accompanying vignette is provided in the Supplementary Materials. The datasets analyzed in Section \ref{sec:application} and the code required to reproduce the results presented in Sections \ref{sec:simulation} and \ref{sec:application} are also provided in the Supplementary Materials.
\fi

\phantomsection\label{supplementary-material}
\bigskip
\begin{center}
{\large\bf SUPPLEMENTARY MATERIAL}
\end{center}
\begin{description}
\item[Appendices:]
Contains proofs for all theoretical results in Section \ref{sec:method}. (\texttt{.pdf} file)
\item[\texttt{R}-package:] Implements the proposed methodology and includes a vignette detailing hyperparameter tuning. (\texttt{.zip} file)
\item[Reproduction Code:] Contains the code and datasets required to reproduce the tables and figures presented in Sections \ref{sec:simulation} and \ref{sec:application}. (\texttt{.zip} file)
\end{description}

  \bibliography{main_biblio}

\end{document}

% --- supplement: 2_appendix.tex ---

\def\spacingset#1{\renewcommand{\baselinestretch}%
{#1}\small\normalsize} \spacingset{1}

%%%%%%%%%%%%%%%%%%%%%%%%%%%%%%%%%%%%%%%%%%%%%%%%%%%%%%%%%%%%%%%%%%%%%%%%%%%%%%

\if1\cAnon
{
  \title{\bf Supplementary Material for ``Incomplete Matrix Regression''}
  \author{Khaled Fouda\footnote{Correspondence to \href{khaled.fouda@hec.ca}{khaled.fouda@hec.ca}.},\\
    % Department of Decision Sciences, HEC Montréal,\\ Montreal, Quebec, H3T 2A7, Canada\\
    % \and
    Aurélie Labbe, \\
    Department of Decision Sciences, HEC Montréal,\\ Montreal, Quebec, H3T 2A7, Canada\\
    \and
    %and \\
    Karim Oualkacha \\
    Department of Mathematics, University of Quebec in Montreal,\\ Montreal, Quebec, H2X 3Y7, Canada
    }
  \maketitle
} \fi

\if0\cAnon
{
  \bigskip
  \bigskip
  \bigskip
  \begin{center}
    {\LARGE\bf Supplementary Material for ``Incomplete Matrix Regression''}
\end{center}
  \medskip
} \fi

\appendix
\setcounter{assumption}{3}
% This supplementary material contains the proposed method in more detail, in Appendix A, the proof of Theorem 1 in Appendix B, and extra theoretical properties are provided in Appendix C.
\section{Proofs of Propositions}
\subsection{Proof of Proposition 1}

In the main article, we provided derivations for the updates of the parameters $\bbeta$ and $\bA$. Here, we derive the remaining updates for parameters $\bgamma$, $\bB$, $\bphia$, and $\bphib$. We begin with $\bgamma$ and $\bB$, whose derivations proceed analogously to those of $\bbeta$ and $\bA$, respectively.

\textbf{Updating $\bgamma$.} 
The update for $\bgamma$ is obtained by solving
\begin{align*}
 \stept{\bgamma} &  = \arg\min_{\bgamma} \lossst{\bstheta^{(t-1)}_{[-\Gamma]}} \\
 & =  \arg\min_{\bgamma} \losssft{\bstheta^{(t-1)}_{[-\Gamma]}} + \lambg \normlass{\bgamma},
\end{align*}
where $\bstheta^{(t-1)}_{[-\Gamma]}=\bracks{\stept\bbeta,\bgamma, \bphia^{(t-1)},\bphib^{(t-1)}, \bA^{(t-1)},\bB^{(t-1)}}$.
The first term simplifies to
\begin{align*}
    & \losssft{\bstheta^{(t-1)}_{[-\Gamma]}}  \\
    = &\frac{1}{2} \normfrob{\bomega \circ (\bY - \bX\stept{\bbeta} - {\bgamma}{\bZ} - \steptmin{\bphia}\tr{\bone_m} - \bone_n \tr{\steptmin{\bphib}} - \steptmin\bM) +  \overline{\bomega}\circ\bracks{\steptmin\bgamma\bZ-\bgamma\bZ}} \\
  =& \frac{1}{2} \normfrob{\bomega \circ (\bE + \steptmin\bgamma\bZ - {\bgamma}\bZ ) +  \overline{\bomega}\circ\bracks{\steptmin\bgamma\bZ-\bgamma\bZ}}\\
  =& \frac{1}{2} \normfrob{\bE + \bracks{\steptmin\bgamma-\bgamma}\bZ  }.
\end{align*}

Differentiating the objective $\eqloss$ with respect to $\bgamma$ and setting the result to zero yields
\begin{align*}
    \deriv{\lossst{\bstheta^{(t-1)}_{[-\Gamma]}}}{\bgamma}
    & = \deriv{\losssft{\bstheta^{(t-1)}_{[-\Gamma]}}}{\bgamma} + \lambg \deriv{\normlass{\bgamma}}{\bgamma} \\
    & = -  \bE\tr\bZ + \bracks{\bgamma - \steptmin\bgamma} \bZ\tr\bZ + 
    \lambg  \operatorname{sign} \bracks{\bgamma} = \boldsymbol{0}.
\end{align*}

For this Lasso problem, a closed-form solution exists if $\bZ\tr\bZ=\iden{q}$, which can be achieved by orthonormalizing the rows of $\bZ$. Under this assumption, the update solution for $\bgamma$ is
\begin{align*}
    \stept\bgamma &=  \bE\tr\bZ + \steptmin\bgamma - 
    \lambg  \operatorname{sign}\bracks{\stept\bgamma} \\
    & = \softthresh{\lambg}{\bE\tr\bZ + \steptmin\bgamma}.
\end{align*}
Finally, the residual matrix is updated as
\[\bE=\bE + \bomega \circ \bracks{{\steptmin\bgamma}\bZ}-\bomega \circ \bracks{{\stept\bgamma}\bZ}.\]
\textbf{Updating $\bB$.}  Similarly, the update for $\bB$ is obtained by solving
\begin{align*}
 \stept{\bB} & = \arg\min_{\bB} \lossst{\bstheta^{(t-1)}_{[-\bB]}} \\
 & = \arg\min_{\bB} \losssft{\bstheta^{(t-1)}_{[-\bB]}} + \frac{\lambm}{2} \normfrob{\bB},
\end{align*}
where $\bstheta^{(t-1)}_{[-\bB]}=\bracks{\stept\bbeta,\stept\bgamma, \bphia^{(t)},\bphib^{(t)}, \bA^{(t)},\bB}$ and
\begin{align*}
     \losssft{\bstheta^{(t-1)}_{[-\bB]}}
  =& \frac{1}{2} \normfrob{\bE + \stept\bA\tr{\bracks{\steptmin\bB - \bB} }}.
\end{align*}
Differentiating the objective with respect to $\bB$  and setting the gradient to zero yields
\begin{align*}
    \deriv{\lossst{\bstheta^{(t-1)}_{[-\bB]}}}{\bB}
    & = \deriv{\losssft{\bstheta^{(t-1)}_{[-\bB]}}}{\bB} +  \frac{\lambm}{2} \deriv{\normfrob{\bB}}{\bB} \\
    & = - \tr{\stept\bA}\bE + \tr{\stept \bA}\stept \bA \bracks{\tr{\ \bB} -\tr{\steptmin \bB}  
       }+ \lambm \tr{ \bB} \\ &= \boldsymbol{0}.
\end{align*}
The solution is then given by
\begin{align}\label{eq:updateB0}
    \wtilde\bB &= \bracks{\tr{\bE}\stept\bA + \steptmin\bB \tr{\stept\bA}\stept\bA }\bracks{\tr{\stept\bA}\stept\bA + \lambm \iden{r}}^{-1} \\\label{eq:updateB}
    & = \bracks{\tr{\bE}\stept\bU + \stept\bV \stept{\bD^2} }\stept\bD\bracks{{\stept{\bD^2}} + \lambm \iden{r}}^{-1}.
\end{align}
As in Remark 2 of the main article, $\wtilde\bB$ does not have orthonormal columns. An additional step is therefore required to preserve orthonormality. As noted in the Remark, this step is only necessary when employing the more computationally efficient expression \eqref{eq:updateB} in lieu of \eqref{eq:updateB0}. Let $\wtilde\bB=\wtilde\bU\wtilde\bD\tr{\wtilde\bV}$ denote its singular value decomposition (SVD).  Setting $\stept\bB = \wtilde\bU\wtilde\bD$ and updating $\stept\bA=\stept\bA\wtilde\bV$ preserves orthogonality without altering the product $\stept\bA\tr{\stept\bB}$. 
Applying \eqref{eq:updateB} requires updating the matrices $\bU$, $\bD$, and $\bV$ after each update of $\bA$ or $\bB$. However, this approach also circumvents the need to explicitly evaluate $\bA$ and $\bB$. To this end, rather than evaluating \eqref{eq:updateB}, we compute
\[\wtilde{\bB\bD} = \bracks{\tr{\bE}\stept\bU + \stept\bV \stept{\bD^2} }\stept{\bD^2}\bracks{{\stept{\bD^2}} + \lambm \iden{r}}^{-1}.\]
Let $\stept{\boldsymbol{d}}$ denote the vector of the diagonal elements of $\stept{\bD^2}$. The rightmost term in the preceding formula, $\stept{\bD^2}\bracks{{\stept{\bD^2}} + \lambm \iden{r}}^{-1}$, is a diagonal matrix with diagonal entries given by $\stept{\boldsymbol{d}}/\bracks{\stept{\boldsymbol{d}}+\lambm}$. The matrices $\bU$, $\bD$, and $\bV$ are then updated as follows. Let $\wtilde\bU\wtilde\bD\tr{\wtilde\bV}$ denote the SVD of $\wtilde{\bB\bD}$, and let $\stept\bU$, $\stept{\bD^2}$, and $\stept\bV$ denote the parameter states after the $\stept\bA$ update step (note that these matrices are updated twice in each iteration: once after the update for $\bA$ and once after the update for $\bB$). The updates are then computed as
\[ 
\stept\bU = \stept{\bU} \wtilde\bV, \qquad
        \stept{\bD^2}=\wtilde\bD, \qquad
\stept\bV = \wtilde \bU.
\]
Finally, the residual matrix is updated as \[\bE = \bE +\bomega\circ\bracks{\steptmin\bU\steptmin{\bD^2}\tr{\steptmin\bV}}-\bomega\circ\bracks{\stept\bU\stept{\bD^2}\tr{\stept\bV}}.\]

\textbf{Updating $\bphib$ and $\bphia$.} As the intercepts are not penalized, their updates admit closed-form solutions obtained by averaging the residuals. While the low-rank latent matrix and the covariate coefficients are updated using the surrogate loss (6), we employ the original empirical loss (4) for the intercept vectors $\bphia$ and $\bphib$, which is restricted to the observed entries. We adopt this approach to avoid normalizing the residual sums by the full dimensions ($m$ for $\bphia$ and $n$ for $\bphib$). For highly sparse matrices, normalizing the residual sums by the full matrix dimensions, rather than by the number of observed entries per row or column, would cause the intercept estimates to shrink toward zero.
\begin{align*}
  \deriv{\losssf{\bstheta^{(t-1)}_{[-\bphia]}}}{\bphia} 
    & = \frac{1}{2}\deriv{\normfrob{\bE + \bomega\circ \bracks{\bracks{\steptmin\bphia - \bphia}\tr\bone_m }}}{\bphia} \\
    & = -  \bE\,\bone_m - \bracks{\bomega\,\bone_m}\circ \bracks{\steptmin\bphia - \bphia} \\ &= 0,
\end{align*}
and
\begin{align*}
  \deriv{\losssf{\bstheta^{(t-1)}_{[-\bphib]}}}{\tr\bphib}
    & = \frac{1}{2}\deriv{\normfrob{\bE + \bomega\circ\bracks{\bone_n\tr{\bracks{\steptmin\bphib - \bphib}} }}}{\tr\bphib}\\
    &= - \tr\bone_n\bE - \bracks{\tr\bone_n\bomega}\circ\bracks{\tr{\steptmin\bphib} - \tr\bphib}\\ &= \boldsymbol{0}.
\end{align*}
This leads to the following parameter updates
\begin{align*}
   \stept \bphia & = \bracks{\bE\,\bone_m}\oslash\bracks{\bomega\,\bone_m} +\steptmin\bphia,\\
       \stept \bphib & =  \tr\bE\bone_n \oslash \bracks{\tr\bomega\bone_n} +\steptmin\bphib.
\end{align*}
The corresponding updates for the residual matrix are given by
\begin{align*}
    \bE &= \bE + \bomega \circ \bracks{\bracks{\steptmin\bphia-\stept\bphia}\tr\bone_ m}, \\
     \bE &= \bE + \bomega \circ \bracks{\bone_n \tr{\bracks{\steptmin\bphib-\stept\bphib}}}.
\end{align*}

This concludes the proofs of equations $\equpdatesa$ of Proposition 1.
\qed
\subsection{Proof of Proposition 2}
In Proposition 2, we consider the general case in which we adjust for autocorrelation within the rows and columns of the matrix $\bY$.

\textbf{Updating $\bA$.} We begin by differentiating the objective function with respect to $\bA$:
\begin{align*}
\deriv{\lossst{\bstheta^{(t-1)}_{[-\bA]}}}{\bA}
    & = \deriv{\losssft{\bstheta^{(t-1)}_{[-\bA]}}}{\bA} +  \frac{\lambm}{2} \deriv{\,\trace{\bA^\top\bSa \bA}}{\bA} \\
    & = - \bE {{\steptmin \bB}} +
     \bracks{\bA - \steptmin \bA} \tr{\steptmin \bB} \steptmin \bB  + \lambm \bSa  \bA \\ &= \boldsymbol{0}.
\end{align*}
Rearranging the terms yields
\begin{align}
    \label{eq:prop2:A}
     \bE {\steptmin \bB} + \steptmin \bA \tr{\steptmin \bB} \steptmin \bB =
     \bA  \tr{\steptmin \bB} \steptmin \bB  + \lambm \bSa  \bA,
\end{align}
or equivalently
\begin{align*}
    \bE {\steptmin \bV} \steptmin\bD + \steptmin \bU \steptmin {\bD^3} =
     \bA  \steptmin{\bD^2} + \lambm \bSa  \bA.
\end{align*}
This is a Sylvester equation and has no direct solution for $\bA$ \citep{higham_Accuracy_2002}. However, we could exploit the diagonal structure of $\bD^2$ and the symmetry of  $\bSa$ to derive closed-form updates for each column of $\bA$. We define $\mathbf{W}_1 \in \R{n}{r}$ as the left-hand side of \eqref{eq:prop2:A}:
\[\mathbf{W}_1 = \bE\steptmin{\bB} + \steptmin{\bA}\tr{{\steptmin{\bB}}}\steptmin{\bB}.\]
The update for the $j$-th column of $\stept\bA$, denoted by $\indexxt{\bA}{.}{j}$, for $j=1,\dots,r$, is
\[\indexxt{{\bA}}{.}{j} = \left( \lambm \bSa + \steptmin{\boldsymbol{d}}_j\mathbf{I}_n\right)^{-1} \indexx{\mathbf{W}_1}{.}{j}.\]
Computing the inverse of the $n\times n$ matrix $\bracks{\lambm \bSa + \steptmin {\boldsymbol{d}}_j\mathbf{I}_n}$ at every iteration is inefficient. To avoid this, we use the eigendecomposition of the symmetric matrix $\bSa$, that is $\bSa=\bU_r\Lambda_r\tr\bU_r$. The inverse term can be then expressed as  
\begin{align*}
    \bracks{\lambm \bSa + \steptmin {\boldsymbol{d}}_j\mathbf{I}_n}^{-1} & =
     \bracks{\bU_r\bracks{\lambm\Lambda_r}\tr\bU_r + \steptmin {\boldsymbol{d}}_j\mathbf{I}_n}^{-1} \\
     & =
     \bracks{\bU_r\bracks{\lambm\Lambda_r}\tr\bU_r +  \steptmin {\boldsymbol{d}}_j\bU_r\tr\bU_r}^{-1} \\
& =
     \bU_r\bracks{{\lambm\Lambda_r+\steptmin {\boldsymbol{d}}_j}\mathbf{I}_n}^{-1}\tr\bU_r,
\end{align*}
where the inverse of the diagonal matrix $\bracks{{\lambm\Lambda_r+\steptmin {\boldsymbol{d}}_j}\mathbf{I}_n}$ is trivial.

\textbf{Updating $\bB$.} Differentiating the objective with respect to $\bB$ yields an analogous Sylvester equation:
\begin{align}
    \label{eq:prop2:B}
     \tr\bE \stept\bA + \steptmin \bB \tr{\stept \bA}\stept \bA = \bB \tr{\stept \bA}\stept \bA  + \lambm \bSb \bB.
\end{align}
Defining $\mathbf{W}_2 \in \R{m}{r}$ as the left-hand side of \eqref{eq:prop2:B},
\[\mathbf{W}_2 = \tr\bE{\stept\bA} + \steptmin \bB \tr{\stept \bA}\stept \bA , \]
the update for the $j$-th column of $\stept\bB$  ($j=1,\dots,r$) is
\[\indexxt{\bB}{.}{j} = \bracks{\lambm\bSb+ \stept {\boldsymbol{d}}_j\mathbf{I}_m}^{-1} \indexx{{\mathbf{W}}_2}{.}{j}.\]
Proceeding as before, let $\bSb=\bU_c\Lambda_c\tr\bU_c$ be the eigendecomposition of $\bSb$. Then,
\begin{align*}
    \bracks{\lambm \bSb +\stept {\boldsymbol{d}}_j\mathbf{I}_m}^{-1} & =
     \bU_c\bracks{{\lambm\Lambda_c+\stept {\boldsymbol{d}}_j}\mathbf{I}_m}^{-1}\tr\bU_c.
\end{align*}

\qed

\section{Proof of Theorem 1}\label{section:B}

We establish the results under the special case of no prior knowledge about the structural dependencies (i.e., $\bSa = \iden n,\bSb=\iden m$) and defer the discussion of the general case to Appendix D. This assumption applies implicitly to all lemmas presented in Appendix B.  Consider the following objective function 
\begin{align} 
    \loss{\bstheta} \, & = \, \losssf{\bstheta} +
    \lambm \normnuc{\bM}  +
    \lambb \normlass{\bbeta} + 
    \lambg \normlass{\bgamma}\label{eq:obj2},\end{align}
    where
    \begin{align}
    \notag
     \losssf{\bstheta} \, &= \,\frac{1}{2}\normfrob{\bomega \circ (\bY-\btheta)},
\end{align}
and the optimization problem
\begin{align} \label{problem:thm1}
    {\estim\bstheta} \in \underset{\substack{\norminf{\bbeta}\leq c_\beta,%\,\norminf{\bX\bbeta}\leq \cxb  
    \\ \norminf{\bgamma}\leq c_\gamma,%\,\norminf{\bgamma\bZ}\leq \cgz
    \\
    \norminf{\bM} \leq c_m}}{\operatorname{argmin}} \loss{\bstheta}.
\end{align}

 The objective is to establish error bounds between any global optimum $\estim\bstheta=\bracks{\estim\bbeta,\estim\bgamma,\estim\bM}$ of \eqref{problem:thm1} and the true parameter matrices $\truev\bstheta=\bracks{\truev\bbeta, \truev\bgamma, \truev\bM}$ for a suitable choice of regularization parameters $\bracks{\lambb,\lambg, \lambm}$, provided that the estimation errors lie in restricted sets. The choice of these regularization parameters is determined by the gradient of the empirical loss (the score function) evaluated at the true parameters. 
 By the first-order condition of the population loss, the expected value of the gradient at the true parameters is zero. Hence, under ideal circumstances, we expect the empirical gradient, evaluated at the true parameters, to be small, with fluctuations bounded in terms of the dual norms of the respective regularizers. To this end, we define the following good events:
\begin{align*}
    \cG_1(\lambb) &= \bracksb{2\norminf{\nabla\losssf{\bstheta_{\bbeta}}} \le \lambb}, \\
    \cG_2(\lambg) &= \bracksb{2\norminf{\nabla\losssf{\bstheta_{\bgamma}}} \le \lambg}, \\
    \cG_3(\lambm) &= \bracksb{2\normop{\nabla\losssf{\bstheta_{\bM}}} \le \lambm},
\end{align*}
where $\bstheta_{\bbeta} = \bracks{\truev\bbeta, \bgamma, \bM}$,  $\bstheta_{\bgamma} = \bracks{\bbeta, \truev\bgamma, \bM}$, $\bstheta_{\bM} = \bracks{\bbeta, \bgamma, \truev\bM}$, with $\bbeta$, $\bgamma$, and $\bM$ denoting arbitrary feasible points of problem \eqref{problem:thm1}.  

The second component of the error analysis is the curvature of the loss function around the true parameters. This curvature is captured by the second derivative evaluated at the true parameters (Hessian matrix). It controls the link between the cost difference $\loss{\truev\bstheta}-\loss{\estim\bstheta}$ and the estimation error $\estim\bstheta-\truev\bstheta$. High curvature implies that when the cost difference is small, the distance between $\estim\bstheta$ and $\truev\bstheta$  is also small. However, with a partially observed matrix $\bY$, the analysis is complicated by the fact that $|\bomega| \ll nm$. Consequently, the empirical Hessian matrix is rank-degenerate as there are $nm-|\bomega|$ directions in which the loss function is flat. However, we are not interested in all directions, but rather only the directions in which the error matrices $\bracks{\estim\bbeta-\truev\bbeta,\estim\bgamma-\truev\bgamma,\estim\bM-\truev\bM}$ can lie. This can be established by lower-bounding the remainder of the first-order Taylor series expansion, a condition known as restricted strong convexity (RSC). The first-order Taylor series expansion of the loss function $\losssf{\cdot}$ is given by
\[\losssf{\truev\bstheta+\Delta} = \losssf{\truev\bstheta} + \langle {\nabla \losssf{\truev\bstheta}},{\Delta}\rangle + \tayerr{\Delta}, \]
where $\tayerr{\Delta}$ denotes the remainder term. Since the loss function is convex, this term is guaranteed to be nonnegative. We now define the RSC condition as follows: for a given regularizer $\Phi(\cdot)$ and an error matrix belonging to a restricted set, the loss function satisfies an RSC condition with a radius $R>0$, curvature $\kappa>0$, and tolerance $\tau^2$ if
\begin{align} \label{eq:RSC}
     \tayerr{\Delta} \ge \frac{\kappa}{2} \normfrob{\Delta} - \tau^2 \Phi^2(\Delta)\qquad \text{for all } \Delta \in \mathbb{B}{(R)},
\end{align}
where $\Delta \in \bracksb{\estim\bbeta-\truev\bbeta,\estim\bgamma-\truev\bgamma,\estim\bM-\truev\bM}$, and $\mathbb{B}(R)$ is a ball of radius $R$ defined within the restricted set. Note that, setting $\tau^2=0$ is equivalent to asserting that the loss function is locally strongly convex, but this strong convexity does not hold in matrix completion due to the aforementioned rank deficiency.

The proof proceeds as follows. First, we show that the decomposability of the Lasso and nuclear regularizers, combined with regularization parameters satisfying the good events defined above, ensures that the estimation errors lie in restricted sets. Let $r = \operatorname{rank}\bracks{\truev\bM}$. For nonnegative  constants $d_1, \dots, d_5$ (to be specified later), we define the following sets:
\begin{align*}
\mathcal{A}(d_1,d_2)
  =& \Bigl\{
      \boldsymbol{\beta}\in\mathbb{R}^{p\times m}\,\big|\,
      \norminf{\bbeta} \leq 2 \,c_\beta, \;
      \norminf{\bX\bbeta} \leq 2 \,\cxb, \;
      \|\boldsymbol{\beta}\|_1 \leq d_1,\;
      \mathbb{E}\|\boldsymbol{\bomega}\circ\mathbf{X}\boldsymbol{\beta}\|_F^2 \ge d_2
    \Bigr\},\\
\mathcal{D}(d_3,d_4)
  =& \Bigl\{
      \bgamma\in\mathbb{R}^{n\times q}\,\big|\,
       \norminf{\bgamma} \leq 2 \,c_\gamma, \;
       \norminf{\bgamma\bZ} \leq 2 \,\cgz, \;
      \normlass{\bgamma} \leq d_3,\;
      \mathbb{E}\|\boldsymbol{\bomega}\circ\bgamma\bZ\|_F^2 \ge d_4
    \Bigr\}, \\
\mathcal{S}(d_5) =& 
\Bigl \{  
\bM \in \mathbb{R}^{n\times m} \, \big | \,
\norminf{\bM} \le 2\,c_m, \;
\E\normfrob{\bomega\circ\bM} \ge d_5, \;
\normnuc{\bM} \leq \sqrt{32\,r} \normf{\bM}  \;
\Bigr\}.
\end{align*}

Second, we establish that the RSC condition holds within these sets with high probability. In particular, for lower-order residual terms $D_\beta$, $D_\Gamma$, and $D_M$ (defined subsequently), the following inequalities hold with high probability
\begin{align}\notag
\tayerr{\Delta\bbeta} &=\| \bomega \circ \bX \Delta \bbeta \|^2_F \geq \frac{\pi_{\min}}{2} \| \bX \Delta \bbeta \|^2_F - D_\beta,\\\notag
\tayerr{\Delta\bgamma} &=\normfrob{\bomega \circ \Delta \bgamma\bZ} \geq \frac{\pi_{\min} }{2} \normfrob{ \Delta \bgamma\bZ} - D_\Gamma,\\
\tayerr{\Delta\bM} &= \| \bomega \circ \Delta \bM \|^2_F \geq \frac{\pi_{\min}}{2} \|  \Delta \bM \|^2_F - D_M. \label{ineq:rsc_M}
\end{align}
Finally, we derive the error bounds using the RSC property alongside the optimality condition: by the definition of $\estim\bstheta$, for any choice of $\bstheta$,
\begin{align}
    \label{ineq:Lossdef}
\loss{\estim\bstheta}\leq \loss{\bstheta}.
\end{align}
Below, we state the Lemmas, leaving their proofs to Appendix C.

\begin{lemma} \label{lemma:norm_bound1}
    
Assume that the regularization parameters satisfy the conditions

\begin{align*}
\lambda_\beta &\geq 2 \, \left(\norminf{\tr\bX\bracks{\bomega\circ\Delta\bM}} +\norminf{\tr\bX\bracks{\bomega\circ\Delta\bgamma\bZ}} + \errinfx \right),\\
\lambg &\geq 2 \, \left(\,\norminf{\bracks{\bomega\circ\Delta\bM}\tr\bZ} +\,\norminf{\bracks{\bomega\circ\bX\Delta\bbeta}\tr\bZ} + \errinfz \right).
\end{align*}
Then, the estimation errors satisfy the bounds
\begin{align*}
\normlass{\Delta \boldsymbol{\beta}} &\leq 4 \normlass{\truev\bbeta}, \\
\normlass{\Delta \bgamma} &\leq 4 \normlass{\truev\bgamma}.
\end{align*}
\end{lemma}

\begin{lemma}\label{lemma:norm_bound2}
    Assume that $\lambm$ satisfies the condition
    \[\lambm  \ge 2 \bracks{\normop{\bomega\circ\bX\Delta\bbeta}+\normop{\bomega\circ\Delta\bgamma\bZ}+\normop{\bomega\circ\bepsilon}},\]
    and let $r=\operatorname{rank}\bracks{\truev\bM}$. Then,
    \[ \normnuc{\Delta\bM} \le\sqrt{32r}\normf{\Delta\bM}. \]
\end{lemma}

Lemma \ref{lemma:norm_bound1} establishes that when $\lambb$ and $\lambg$ satisfy the good events,  $\Delta \bbeta \in \cA(4\|\truev\bbeta\|_1,d_2)$ and $\Delta \bgamma \in \cD(4\|\truev\bgamma\|_1,d_4)$ for some $d_2$ and $d_4$. Furthermore, Lemma \ref{lemma:norm_bound2} demonstrates that when $\lambm$ satisfies the event $\cG_3$,  $\Delta\bM \in \cS(d_5)$ for some $d_5$. Both Lemmas are standard results in the literature on Lasso regression and matrix completion. Below, we give values for $d_2$, $d_4$, and $d_5$ as follows 
\[
    d_2 = \frac{128\,\cxb^2\,  \log(n+m)}{\pi_{\text{min}}\log(6/5)},\quad
    d_4 = \frac{128\,\cgz^2\,  \log(n+m)}{\pi_{\text{min}}\log(6/5)},\quad
    d_5 = \frac{128\,c_m^2\,\log(n+m)}{\pi_{\text{min}}\,\log(6/5)}\:.
\]
These values help to establish the RSC condition with high probability. To ease the notation throughout the remainder of this appendix, we use the relation $x\lesssim y$ to indicate that $x \leq C y$ for some universal constant $C>0$ that is independent of the  dimensions $n$, $m$, $p$, $q$, and $r$, as well as the observation probabilities $\pi_{ij}$. Additionally, we define the quantity \(        N_1 = c^* \bracks{\sqrt{\pi_{\max}\bracks{n\vee m}} + {\sqrt{\log{(n\wedge m)}}}}\), where $c^*$ is a universal constant.

\begin{lemma}\label{lemma:rsc1}
Assume that $\lambb$ and $\lambg$ satisfy
\begin{align*}
\lambda_\beta &\geq 4\, d_x\, \bracks{c_m + \cgz} +2\, \sigma\, \sqrt{2\,n\log\bracks{pm(m+n)/4}}\,, \\
\lambg &\geq4\, d_z\, \bracks{c_m + \cxb} + 2\,\sigma\, \sqrt{2\,m\log\bracks{qn(m+n)/4}}\,,
\end{align*}
where $d_x = \max_{1\leq k \leq p} \,\sum_{i=1}^n |\bX_{ik}|$ and $d_z = \max_{1\leq k \leq q} \,\sum_{j=1}^m |\bZ_{kj}|$.
Then,  for any $\Delta \bbeta \in \cA(4\|\truev\bbeta\|_1,d_2)$, $\Delta \bgamma \in \cD(4\|\truev\bgamma\|_1,d_4)$, and with $d_2$ and $d_4$ defined as above, each of the following inequalities hold with probability at least $1-8(n+m)^{-1}$, 
\begin{align}
\label{ineq: xdbeta}
\normfrob{{\bomega} \circ \bX \Delta{\bbeta}}&\ge\frac{1}{2}\,\mathbb{E}\normfrob{\boldsymbol{\bomega} \circ \mathbf{X} \Delta \boldsymbol{\beta}}
      - D_\beta, \\
      \label{ineq: gammaz}
\normfrob{\boldsymbol{\bomega} \circ \Delta\bgamma\bZ} &\ge \frac{1}{2}\,\mathbb{E}\normfrob{\boldsymbol{\bomega} \circ \Delta\bgamma\bZ}
      - D_\Gamma,
\end{align}
where
\begin{align*}
    D_\beta
        = 768\,\cxb^2\,\pim^{-1} \bracks{ p\bracks{\E\normop{\Sigma_R}}^2+4},\\
      D_\Gamma
        = 768\,\cgz^2\,\pim^{-1} \bracks{ q\bracks{\E\normop{\Sigma_R}}^2+4}.
\end{align*}
 
\end{lemma}

Inequalities \eqref{ineq: xdbeta} and \eqref{ineq: gammaz} establish the RSC condition \eqref{eq:RSC}. In particular, since $\E\normfrob{\bomega\circ\bX\Delta\bbeta}\geq \pi_{\min} \normfrob{\bX\Delta\bbeta}\geq \pi_{\min}\, n\,\kappa_x\normfrob{\Delta\bbeta}$, substitution into \eqref{ineq: xdbeta} yields $\normfrob{\bomega\circ\bX\Delta\bbeta}\ge \pi_{\min}\, n\,\kappa_x\normfrob{\Delta\bbeta}/2 - D_\beta$. An analogous bound holds for $\Delta\bgamma$. In the following lemma, we establish a corresponding result for $\Delta\bM$.

\begin{lemma}\label{lemma:rsc2}
Assume that $\lambm$ satisfies 
\[\lambm \geq 4 \left(\sigma_x\,c_\beta  +   \sigma_z\,c_\gamma +  \sigma \,\sqrt{2\,\pi_{\max}\bracks{n\vee m}\log\bracks{\bracks{m+n}/2}}\right),\]
where $\sigma_x = \normop{\bX}$ and $\sigma_z = \normop{\bZ}$ then for any $\Delta \bM \in \cS(d_5)$, where $d_5$ is defined as above, the following inequality holds with a probability at least $1-8(n+m)^{-1}$
\begin{align}
\label{ineq: M}
\|\boldsymbol{\bomega} \circ \Delta\bM\|_{F}^{2} \ge \frac{1}{2}\,\mathbb{E}\bigl\|\boldsymbol{\bomega} \circ \Delta\bM\bigr\|_{F}^{2}
      - D_M,
\end{align}
where
\[
  D_M
        = 3072\,c_m^2\, \pi_{\min}^{-1} \bracks{8\,r\,\bracks{\E{\normop{\boldsymbol{\Sigma}_R}}}^2+1}.
\]

\end{lemma}

 Prior to deriving the estimation error bounds, we present the following lemma, which bounds the infinity and operator norms of the model residuals.
\begin{lemma} \label{lemma:noise}
For any $0 < \delta < 1 $, with probability at least $1 - \delta$, the following inequalities hold:
\begin{align*}
    %&\bracks{i}\qquad\errinf \leq \sigma\sqrt{4  \log{{\bracks{m+n}}}-2\log\bracks{{\delta}} },\\
    &\bracks{i}\qquad\normop{\bomega\circ\bepsilon} \leq 2\sigma \sqrt{\pi_{\max}\bracks{n\vee m} \log\left( \frac{2 \bracks{m+n}}{\delta} \right)}, \\
    &\bracks{ii}\qquad \norminf{\tr\bX\bracks{\bomega\circ\bepsilon}} \leq
     \sigma \sqrt{2\,n\,{\log\bracks{\frac{2\,p\,m}{\delta}}}}, \\
    &\bracks{iii}\qquad \norminf{\bracks{\bomega\circ\bepsilon}\tr\bZ} \leq
    \sigma \sqrt{2\,m\,{\log\bracks{\frac{2\,q\,n}{\delta}}}}.
\end{align*}

\end{lemma}

Next, we derive upper bounds for the estimation errors, beginning with $\Delta\bbeta$.

\subsection{Upper Bound for \texorpdfstring{$\Delta\bbeta$}{Delta Beta}}

\textbf{Case 1:} Suppose that $\E\|\bomega\circ\bX \Delta \bbeta\|^2_F \leq {128\,\cxb^2\,\log(n+m)\pim^{-1}}/{\log(6/5)}$. Combining this inequality with the bound  $\|\bX \Delta \bbeta\|^2_F \leq \pi_{\min}^{-1} \, \E\|\bomega\circ\bX \Delta \bbeta\|^2_F$ yields
\begin{align*}
\frac{1}{mn}\|\bX \Delta \bbeta\|^2_F &\leq
\frac{128\,\cxb^2\,\log(n+m)}{mn\log(6/5)\,\pi_{\text{min}}^2}\\
&\lesssim \frac{\log(n+m)}{mn\,\pim^2}.
\end{align*}

\textbf{Case 2:} Suppose that $\E\|\bomega\circ\bX \Delta \bbeta\|^2_F \geq {128\,\cxb^2\,\log(n+m)\pim^{-1}}/{\log(6/5)}$. Then, Lemma \ref{lemma:norm_bound1} implies  $\Delta \bbeta \in \cA\bracks{4\|\truev\bbeta\|_1,128\,\cxb^2\,\log(n+m)\pim^{-1}/{\log(6/5)}}$, and by Lemma \ref{lemma:rsc1}, inequality \eqref{ineq: xdbeta} holds.

Applying the optimality condition \eqref{ineq:Lossdef} with $\bstheta=\bracks{\truev\bbeta,\estim\bgamma,\estim\bM}$ yields
\begin{align*} 
    \frac{1}{2} \| \boldsymbol{\bomega} \circ (\mathbf{Y} - \estim\btheta) \|_F^2
    \leq 
    \frac{1}{2} \| \boldsymbol{\bomega} \circ (\mathbf{Y} - \wtilde\btheta) \|_F^2 \,
    + \lambb \bracks{\|\truev\bbeta\|_1 - \|\estim\bbeta\|_1}, 
\end{align*}
where $\wtilde \btheta = \bX \truev\bbeta +\estim\bgamma\bZ + \estim\bM$.
We now bound 
\begin{align*}
    \frac{1}{2} \normfrob{\bomega\circ\bracks{\bX\Delta\bbeta}} & =
    \frac{1}{2} \| \bomega \circ (\wtilde\btheta-\estim\btheta) \|^2_F\\ &= \frac{1}{2} \| \bomega \circ ((\bY-\estim\btheta) - (\bY - \wtilde \btheta)) \|^2_F \\
    %
    &= \frac{1}{2} \bracks{\| \bomega \circ (\bY-\estim\btheta) \|^2_F +  \| \bomega \circ  (\bY - \wtilde \btheta) \|^2_F} \\ &\qquad -  \sum_{i,j} \bomega_{ij} (\bY_{ij}-\estim\btheta_{ij}) (\bY_{ij} - \wtilde \btheta_{ij}) \\
    &\leq 
     \| \boldsymbol{\bomega} \circ (\mathbf{Y} - \wtilde\btheta) \|_F^2 \,
    -  \sum_{i,j} \bomega_{ij} (\bY_{ij}-\estim\btheta_{ij}) (\bY_{ij} - \wtilde \btheta_{ij})
    + \lambda_\beta \|{\Delta{\bbeta}}\|_1  \\
    & =
    \bracks{ \sum_{i,j} \bomega_{ij} (\bY_{ij} - \wtilde \btheta_{ij})^2
    -  \sum_{i,j} \bomega_{ij} (\bY_{ij}-\estim\btheta_{ij}) (\bY_{ij} - \wtilde \btheta_{ij}) }
    + \lambda_\beta \|{\Delta{\bbeta}}\|_1 \\
    & =
     \bracks{   \sum_{i,j} \bomega_{ij}   (\bY_{ij} - \wtilde \btheta_{ij})
    \bracks{\bY_{ij}-\wtilde\btheta_{ij} - \bY_{ij}+\estim\btheta_{ij}}
    }
   + \lambda_\beta \|{\Delta{\bbeta}}\|_1 \\
    & =
     \bracks{  \sum_{i,j} \bomega_{ij}
    (\bepsilon_{ij}- \Delta\bgamma_i \bZ_j-\Delta \bM_{ij})
    (\bX_{i} \Delta \bbeta_{j})
    }
   + \lambda_\beta \|{\Delta{\bbeta}}\|_1 \\
    & \leq
    \|\Delta \bbeta\|_1 \bracks{  {\errinfx
    + \norminf{\tr\bX\bracks{\bomega\circ\Delta\bM}} + \norminf{\tr\bX\bracks{\bomega\circ\Delta\bgamma\bZ}}}
    +  \lambda_\beta  }.
    %
\end{align*}

Assuming
\(\lambda_\beta \geq 2\,\bracks{\norminf{\tr\bX\bracks{\bomega\circ\Delta\bM}} + \norminf{\tr\bX\bracks{\bomega\circ\Delta\bgamma\bZ}} +\errinfx}\), then by Lemma \ref{lemma:norm_bound1}, $\|\Delta \bbeta \|_1 \leq 4 \|{\truev{\bbeta}}\|_1$, and  it follows that
\[
     \| \bomega \circ \bX\Delta \bbeta \|^2_F 
    \leq 3 \lambb \normlass{\Delta\bbeta}   
      \leq 12 \lambb \normlass{\truev\bbeta}.
\]

Applying inequality \eqref{ineq: xdbeta}, we obtain
\begin{align*}
    \normfrob{\bX\Delta\bbeta} & \leq \frac{2}{\pim}\bracks{\normfrob{\bomega\circ\bX\Delta\bbeta}+D_\beta} \\
    &\leq\frac{2}{\pim}\bracks{12\, \lambb \normlass{\truev\bbeta} + 768\,\cxb^2\,\pim^{-1} \bracks{ p\,N_1^2+4} }\\
    &\leq\frac{8}{\pim}\bracks{3\, \lambb \,s_\beta\,c_\beta + 192\,\cxb^2\,\pim^{-1} \bracks{ p\,N_1^2+4} },
\end{align*}
Substituting $\lambb$ with  \(
    \lambda_\beta \asymp \,{d_x\bracks{c_m+\cgz}+\sigma\sqrt{n\log\bracks{nm}}} 
\), we obtain
\begin{align*}
     \normfrob{\bX\Delta\bbeta} 
    &\leq\frac{8}{\pim}\bracks{3\, \bracks{d_x\bracks{c_m+\cgz}+ \sigma\sqrt{n\log\bracks{nm}}}  \,s_\beta\,c_\beta + 192\, \cxb^2\,\pim^{-1} \bracks{p\, N_1^2+4} }\\
    &\lesssim \frac{1}{\pim}\bracks{\, \bracks{d_x\bracks{c_m+\cgz}+ \sigma\sqrt{n\log\bracks{nm}}}  \,s_\beta + p\,\pim^{-1} \bracks{\pi_{\max}(n\vee m) + \log\bracks{n\wedge m}}} .
\end{align*}
Using Assumption 1 we have $\normfrob{\bX\Delta\bbeta}\ge n \,\kappa_x\, \normfrob{\Delta\bbeta}$ and $d_x \leq n \norminf{\bX} \leq n$. Then,
\begin{align*}
     \normfrob{\Delta\bbeta} 
    &\lesssim  \frac{s_\beta}{\pim\, n}{{\left( n\bracks{c_m+\cgz} +\sigma\sqrt{n\log\bracks{nm}}\right)} +  \frac{p}{n\,\pim^2} \bracks{\pi_{\max}\bracks{n\vee m}+\log\bracks{n\wedge m}} }\\
    &\lesssim \frac{s_\beta}{\pim\,n}{\left(n\bracks{c_m+\cgz}+\sigma\sqrt{n\log\bracks{nm}}\right)} +  \frac{p\,\pi_{\max}\bracks{n\vee m}}{n\,\pim^2}  ,
\end{align*}
and subsequently,
\begin{align*}
     \frac{1}{pm}\normfrob{\Delta\bbeta} 
    &\lesssim  \frac{s_\beta}{\pim\,p\,m}{\bracks{\bracks{c_m+\cgz}+ {\sigma}\sqrt{\frac{{\log\bracks{nm}}}{{n}}}} +  \frac{\pi_{\max}}{\pim^2(n\wedge m)}},
\end{align*}
which yields part (i) of Theorem 1. 
\subsection{Upper Bound for \texorpdfstring{$\Delta\bgamma$}{Delta Gamma}}
The error bound for $\estim\bgamma$ follows by analogous arguments. Specifically, we obtain
\begin{align*}
     \normfrob{\Delta\bgamma\bZ} 
    &\leq\frac{8}{\pim}\bracks{3\, \bracks{d_z\bracks{c_m+\cxb}+\sigma\sqrt{m\log\bracks{nm}}}  \,s_\Gamma\,c_\gamma + 192\, \cgz^2\,\pim^{-1} \bracks{q\,N_1^2+4} },
\end{align*}
and substituting $\lambg$ with  \(
    \lambg \asymp \,{d_z\bracks{c_m+\cxb}+\sigma\sqrt{m\log\bracks{nm}}}
\) and using Assumption 1,  we obtain
\begin{align*}
     \frac{1}{qn}\normfrob{\Delta\bgamma} 
    &\lesssim  \frac{s_\Gamma}{\pim\,q\,n}\bracks{\bracks{c_m+\cxb} +{{\sigma\sqrt{\frac{{\log\bracks{nm}}}{{m}}}}}} +  \frac{\pi_{\max}}{\pim^2(n\wedge m)}.
\end{align*}

\subsection{Upper Bound for \texorpdfstring{$\Delta\bM$}{Delta M}}
\label{subsection:bound_M}

Next, we derive an upper bound for the estimation error of $\bM$.

\textbf{Case 1:} Suppose that $\E\|\bomega\circ\Delta\bM\|^2_F \leq {128\,c^2_m\,\log(n+m)\pim^{-1}}/{\log(6/5)}$. It then follows that
\begin{align*}
\frac{1}{mn}\|\Delta\bM\|^2_F &\leq
\frac{128\,c^2_m\,\log(n+m)}{\log(6/5)\,\pi_{\text{min}}^2mn},\\
&\lesssim \frac{\log(n+m)}{\pi_{\text{min}}^2\,mn} .
\end{align*}

\textbf{Case 2:} Suppose that $\E\|\bomega\circ \Delta \bM\|^2_F \geq {128\,c_m^2\,\log(n+m)\pim^{-1}/{\log(6/5)}}$. Then, Lemma \ref{lemma:norm_bound2} implies  \(\Delta \bM \in \cS\bracks{{128\,c^2_m\,\log(n+m)\pim^{-1}}/{\log(6/5)}},\) and by Lemma \ref{lemma:rsc2}, inequality \eqref{ineq: M} holds.

Applying \eqref{ineq:Lossdef} with $\bstheta=\bracks{\estim\bbeta,\estim\bgamma,\truev\bM}$ yields
\begin{align*} 
    \frac{1}{2} \| \boldsymbol{\bomega} \circ (\mathbf{Y} - \estim\btheta) \|_F^2
    \leq 
    \frac{1}{2} \| \boldsymbol{\bomega} \circ (\mathbf{Y} - \wtilde\btheta) \|_F^2 \,
    + \lambm \bracks{\|\truev\bM\|_* - \|\estim\bM\|_*}, 
\end{align*}
where $\wtilde \btheta = \bX \estim\bbeta +\estim\bgamma\bZ + \truev\bM$.
We proceed to bound the term
\begin{align*}
    \frac{1}{2} \normfrob{\bomega\circ\bracks{\Delta\bM}} & =
    \frac{1}{2} \| \bomega \circ (\wtilde\btheta-\estim\btheta) \|^2_F\\ &= \frac{1}{2} \| \bomega \circ ((\bY-\estim\btheta) - (\bY - \wtilde \btheta)) \|^2_F \\
    %
    &= \frac{1}{2} \bracks{\| \bomega \circ (\bY-\estim\btheta) \|^2_F +  \| \bomega \circ  (\bY - \wtilde \btheta) \|^2_F} \\ &\qquad -  \sum_{i,j} \bomega_{ij} (\bY_{ij}-\estim\btheta_{ij}) (\bY_{ij} - \wtilde \btheta_{ij}) \\
    &\leq 
     \| \boldsymbol{\bomega} \circ (\mathbf{Y} - \wtilde\btheta) \|_F^2 \,
    -  \sum_{i,j} \bomega_{ij} (\bY_{ij}-\estim\btheta_{ij}) (\bY_{ij} - \wtilde \btheta_{ij})
    + \lambm \|{\Delta{\bM}}\|_*  \\
    & =
    \bracks{ \sum_{i,j} \bomega_{ij} (\bY_{ij} - \wtilde \btheta_{ij})^2
    -  \sum_{i,j} \bomega_{ij} (\bY_{ij}-\estim\btheta_{ij}) (\bY_{ij} - \wtilde \btheta_{ij}) }
    + \lambm \|{\Delta{\bM}}\|_* \\
    & =
     \bracks{   \sum_{i,j} \bomega_{ij}   (\bY_{ij} - \wtilde \btheta_{ij})
    \bracks{\bY_{ij}-\wtilde\btheta_{ij} - \bY_{ij}+\estim\btheta_{ij}}
    }
   + \lambm \|{\Delta{\bM}}\|_* \\
    & =
     \bracks{  \sum_{i,j} \bomega_{ij}
    (\bepsilon_{ij}- \bX_i\Delta\bbeta_j - \Delta\bgamma_i \bZ_j)
    ( \Delta \bM_{ij})
    }
   + \lambm \|{\Delta{\bM}}\|_* \\
    & \leq
     \bracks{ \normop{\bomega\circ\bepsilon} + \normop{\bomega\circ\Delta\bgamma\bZ} + \normop{\bomega\circ\bX\Delta\bbeta}
    }  \normnuc{\Delta\bM} 
   %\\&\qquad 
   + \lambm \|\Delta\bM\|_* .
    %
\end{align*}

By Lemma \ref{lemma:norm_bound2}, and assuming
\(\lambm \geq 2\bracks{\normop{\bomega\circ\bX\Delta\bbeta} + \normop{\bomega\circ\Delta\bgamma\bZ} +\normop{\bomega\circ\bepsilon}}\), it follows that
\begin{align}
     \| \bomega \circ \Delta \bM \|^2_F 
    &\leq 3 \lambm \normnuc{\Delta\bM}   
     \leq 3\sqrt{32 r}\lambm \normf{\Delta\bM}.
     \label{inequality:bound:M:lambda}
\end{align}

Applying inequality \eqref{ineq: M} we obtain
\begin{align*}
     \normfrob{\Delta\bM} &\leq \pi_{\min}^{-1} \E\normfrob{\bomega\circ\Delta\bM} 
     \\&\leq 2\,{\pi}^{-1}_{\min} \bracks{\normfrob{\bomega\circ\bracks{\Delta\bM}} +D_M}\\
     &\leq 6\sqrt{32 r}\, \pim^{-1} \lambm \normf{\Delta\bM} + 2\,\pim^{-1}\,D_M\\
     &\leq \frac{1}{2}\bracks{6\sqrt{32 r}\, \pim^{-1} \lambm}^2 + \frac{1}{2} \normfrob{\Delta\bM} + 2\,\pim^{-1}\,D_M,
    %&\leq 64\,{\pi}^{-1}_{\min}\bracks{mn}^{-1} \bracks{9\,r\,\lambm^2 + 4\, \pi_{\min}^{-1} \bracks{22\,r\,c^*\,N_1^2+12\,c^2_m}}.
\end{align*}

where the last step follows by applying Young's inequality. Subsequently,

\begin{align*}
     \normfrob{\Delta\bM} &\leq  \bracks{6\sqrt{32 r}\, \pim^{-1} \lambm}^2  + 4\,\pim^{-1}\,D_M \\
     & = 4\, \pim^{-1} \bracks{288 \, r\, \pim^{-1} \lambm^2 + D_M} \\
     & = 4\, \pim^{-1} \bracks{288 \, r\, \pim^{-1} \lambm^2 + 3072\,c_m^2\,\pim^{-1}\,\bracks{8\,r  \,N_1^2 + 1}} \\
     & = 128\, \pim^{-2} \bracks{9 \, r\, \lambm^2 + 96\,c^2_m\,\bracks{8\,r  \,N_1^2 + 1}}.
\end{align*}
Substituting $\lambm^2$ with the bound from Lemma \ref{lemma:rsc2}, we obtain
\begin{align}\notag
     \normfrob{\Delta\bM} &\leq \frac{128}{\,\pim^{2}} \bracks{9 \, r\, \lambm^2 + 96\,c^2_m\,\bracks{8\,r  \,N_1^2 + 1}}\\
     \label{inequality:bound:M:estim}
     &\lesssim \frac{r}{\,\pim^{2}} \bracks{  \lambm^2 + N_1^2}\\\notag
     &\lesssim \frac{r}{\,\pim^{2}}  \bracks{\sigma_x^2 + \sigma_z^2 +  \pi_{\max}\bracks{n\vee m}\bracks{\sigma^2\,\log\bracks{n+m}+1} + \log \bracks{n \wedge m}}\\ \notag
     &\lesssim \frac{r}{\,\pim^{2}}  \bracks{\sigma_x^2 + \sigma_z^2 +\pi_{\max}\, \sigma^2 \bracks{n\vee m}\log\bracks{n+m}}.
\end{align}
Finally, we have
\begin{align}
    \label{inequality:bound:M:estim:last}
    \frac{1}{nm} \normfrob{\Delta\bM} \lesssim \frac{r\bracks{\sigma^2_x+\sigma^2_z}}{\,\pim^{2}\,n\,m} +  \frac{r\,\pi_{\max}\,\sigma^2 \log\bracks{n+m}}{\,\pim^{2}\bracks{n\wedge m}}.
\end{align}
This concludes the proof of Theorem 1.
\qed

\section{Proofs of Lemmas in Appendix B \label{sec:c}}

\textbf{Optimality Conditions.} Consider the optimization problem given in \eqref{problem:thm1} and let $\estim\bstheta=(\estim\bbeta, \estim\bgamma, \estim\bM)$ be a minimizer of it. Then, there exist subgradients $\gradb \in \partial\normlass{\estim\bbeta}$, $\gradg \in \partial\normlass{\estim\bgamma}$, and $\gradm \in \partial \normnuc{\estim\bM}$ such that for any feasible tuple $(\bbeta, \bgamma, \bM)$ the following inequality holds:
\begin{align} \notag
    \bracksd{\nabla_{\estim\bstheta}\bracks{\frac{1}{2}\normfrob{\bomega\circ\bracks{\bY-\estim\btheta}}}}{\bM-\estim\bM+\bX\bracks{\bbeta-\estim\bbeta}+\bracks{\bgamma-\estim\bgamma}\bZ} \ge \\
    \lambb \bracksd{\gradb}{\estim\bbeta-\bbeta} +
    \lambg \bracksd{\gradg}{\estim\bgamma-\bgamma} +
    \lambm \bracksd{\gradm}{\estim\bM-\bM}, \label{eq:optimality}
\end{align}
where the gradient is given by $\nabla_{\estim\bstheta}\bracks{\frac{1}{2}\normfrob{\bomega\circ\bracks{\bY-\estim\btheta}}} = 
\bracks{\bomega\circ \bracks{\estim\btheta-\bY}}$.
%\end{proposition}
%\begin{remark}
The inequality \eqref{eq:optimality} allows us to bound the estimation error. This relies on the convexity of the norms, which implies:
\begin{align*}
    \bracksd{\gradb}{\estim\bbeta-\bbeta} &\ge \normlass{\estim\bbeta}-\normlass{\bbeta} \\
    \bracksd{\gradg}{\estim\bgamma-\bgamma} &\ge \normlass{\estim\bgamma}-\normlass{\bgamma} \\
    \bracksd{\gradm}{\estim\bM-\bM} &\ge \normnuc{\estim\bM}-\normnuc{\bM}.
\end{align*}
%--------------------------------------------------------------------------------------

\subsection{Proof of Lemma \ref{lemma:norm_bound1}}
We establish part (i) only, as part (ii) follows by analogous arguments.
Applying inequality \eqref{eq:optimality} to the tuple $\bracks{\truev\bbeta,\estim\bgamma,\estim\bM}$ and by the convexity of the norms we have 
\begin{align*}
    \lambb \bracks{\normlass{\estim\bbeta} - \normlass{\truev{\bbeta}}} & \leq  \bracksd{\bomega \circ \bracks{\estim\btheta-\bY}}{-\bX\Delta\bbeta} \\
    & = \bracksd{\bomega \circ \bracks{\estim\btheta-\bY\pm\truev\btheta}}{-\bX\Delta\bbeta} \\
    & = \bracksd{\bomega \circ \bracks{\Delta\bM + \bX\Delta\bbeta +\Delta\bgamma\bZ - \bepsilon}}{-\bX\Delta\bbeta} \\
    & \leq  \bracks{\norminf{\tr\bX\bracks{\bomega\circ\Delta\bM}} +\norminf{\tr\bX\bracks{\bomega\circ\Delta\bgamma\bZ}}+\errinfx} 
    \normlass{\Delta\bbeta} \\
    & \leq  \bracks{\norminf{\tr\bX\bracks{\bomega\circ\Delta\bM}} + \norminf{\tr\bX\bracks{\bomega\circ\Delta\bgamma\bZ}}+\errinfx} 
    \bracks{\normlass{\estim\bbeta} + \normlass{\truev{\bbeta}}},
\end{align*}
where we used the fact that $\bracksd{\bomega\circ\bX\Delta\bbeta}{-\bX\Delta\bbeta} = -\normfrob{\bomega\circ\bX\Delta\bbeta}< 0$ to drop the quadratic term. The remaining terms are bounded using Hölder's inequality and the stated assumptions. Rearranging the terms yields
\begin{align*}
 &   \normlass{\estim\bbeta}\bracks{\lambb -  \bracks{\norminf{\tr\bX\bracks{\bomega\circ\Delta\bM}} + \norminf{\tr\bX\bracks{\bomega\circ\Delta\bgamma\bZ}}+\errinfx} }\\ &\qquad\leq  \normlass{\truev{\bbeta}} \bracks{\lambb + 
     \bracks{\norminf{\tr\bX\bracks{\bomega\circ\Delta\bM}} + \norminf{\tr\bX\bracks{\bomega\circ\Delta\bgamma\bZ}}+\errinfx} 
    }.
\end{align*}
Let $\lambb \ge 2 \bracks{\norminf{\tr\bX\bracks{\bomega\circ\Delta\bM}} + \norminf{\tr\bX\bracks{\bomega\circ\Delta\bgamma\bZ}}+\errinfx} $. Substituting this bound into the preceding inequality yields
\begin{align*}
    \normlass{\estim\bbeta} \leq 3 \normlass{\truev{\bbeta}},
\end{align*}
or equivalently,
\[
\|\Delta \boldsymbol{\beta}\|_1 \leq 4 \|{\truev{\bbeta}}\|_1.
\]
\qed

\subsection{Proof of Lemma \ref{lemma:norm_bound2}}
Lemma \ref{lemma:norm_bound2} is a special case of Lemma \ref{lemma:norm_bound3}, the proof of which is given in Appendix D. 
\qed

\subsection{Proof of Lemma \ref{lemma:rsc1}}
We prove part (i); part (ii) follows by analogous arguments.
We show that, with high probability, the estimation error $\Delta\bbeta$ lies in a restricted set. Let $d_1 = 4\|{\truev{\bbeta}}\|_1$. Then by Lemma \ref{lemma:norm_bound1}, it follows that
$\Delta\boldsymbol{\beta}\in\mathcal{A}(d_1,d_2)$ for some $d_2$.  Define 
\[
d_2 = \frac{128\,\cxb^2\,  \log(n+m)}{\pi_{\text{min}}\log(6/5)},
\]
 and consider the event $\mathcal{B}$, which corresponds to the failure of Lemma \ref{lemma:rsc1}. This event characterizes large deviations of the empirical error $\normfrob{\bomega\circ\bX\bbeta}$ from its expectation. Specifically, $\mathcal{B}$ occurs if there exists a candidate $\bbeta$ in the restricted cone for which the observed training error deviates from its expected value by more than half the expectation plus an additive error term $D_\beta$. It suffices to show that $\mathcal{B}$ occurs with low probability to prove Lemma \ref{lemma:rsc1}.  
\[
\mathcal{B} = \Bigl\{
  \exists\,\boldsymbol{\beta}\in\cA\,\Big|\,
  \bigl|
    \|\boldsymbol{\bomega}\circ\mathbf{X}\boldsymbol{\beta}\|_F^2
    - \mathbb{E}\|\boldsymbol{\bomega}\circ\mathbf{X}\boldsymbol{\beta}\|_F^2
  \bigr|
  > \tfrac12\mathbb{E}\|\boldsymbol{\bomega}\circ\mathbf{X}\boldsymbol{\beta}\|_F^2 + D_\beta
\Bigr\}.
\]
To bound the probability of $\mathcal{B}$, we use the peeling argument of \citet{klopp_Matrix_2015}. Let $\eta = 6/5$. For $l \in \mathbb{N}$, define the shells
\[
\cA_1(l)
  = \Bigl\{
      \boldsymbol{\beta}\in\cA\,\Big|\,
      \eta^{\,l-1}d_2 \leq \mathbb{E}\|\boldsymbol{\bomega}\circ\mathbf{X}\boldsymbol{\beta}\|_F^2 \leq \eta^{\,l}d_2
    \Bigr\}.
\]
If the event $\mathcal{B}$ occurs for some $\boldsymbol{\beta} \in \cA$, then \(\boldsymbol{\beta} \in \cA_1(l)\) for some index $l$, and
\begin{align} \label{eq:underB}
\bigl|
\|\boldsymbol{\bomega}\circ\mathbf{X}\boldsymbol{\beta}\|_F^2
  - \mathbb{E}\|\boldsymbol{\bomega}\circ\mathbf{X}\boldsymbol{\beta}\|_F^2
\bigr|
&> \tfrac12\mathbb{E}\|\boldsymbol{\bomega}\circ\mathbf{X}\boldsymbol{\beta}\|_F^2 + D_\beta \\[4pt] \notag
&> \tfrac12\eta^{\,l-1}d_2 + D_\beta \\[4pt]
&= \tfrac{5}{12}\eta^{\,l}d_2 + D_\beta. \notag
\end{align}
For $T > d_2$, consider the set
\[
\cA_2(T)
  = \Bigl\{
      \boldsymbol{\beta}\in\cA\,\Big|\,
      \mathbb{E}\|\boldsymbol{\bomega}\circ\mathbf{X}\boldsymbol{\beta}\|_F^2 \leq  T
    \Bigr\},
\]
and

\[
\mathcal{B}_l
  = \Bigl\{
      \exists\,\boldsymbol{\beta}\in{\mathcal{A}}\,\!\bigl(d_1,\eta^{\,l}d_2\bigr)\,\Big|\,
      \bigl|
        \|\boldsymbol{\bomega}\circ\mathbf{X}\boldsymbol{\beta}\|_F^2
        - \mathbb{E}\|\boldsymbol{\bomega}\circ\mathbf{X}\boldsymbol{\beta}\|_F^2
      \bigr|
      > \tfrac{5}{12}\eta^{\,l}d_2 + D_\beta
    \Bigr\}.
\]
Observe that \(\boldsymbol{\beta} \in \cA_1(l)\) implies \(\boldsymbol{\beta} \in \cA_2( \eta^l d_2)\).  Consequently, \eqref{eq:underB} implies that \(\mathcal{B}_l\) occurs, leading to the inclusion \(
\mathcal{B} \subset \bigcup_{l=1}^{\infty} \mathcal{B}_l
\). It therefore suffices to bound the probability of each event \(\mathcal{B}_l\) and apply the union bound. 

We now derive a concentration bound for $\mathcal{B}_l$. Let
\begin{align*}
Z_T
  &= \sup_{\boldsymbol{\beta}\in\cA_2(T)}
     \bigl|
       \|\boldsymbol{\bomega}\circ\mathbf{X}\boldsymbol{\beta}\|_F^2
       - \mathbb{E}\|\boldsymbol{\bomega}\circ\mathbf{X}\boldsymbol{\beta}\|_F^2
     \bigr| \\[4pt]
  &= \sup_{\boldsymbol{\beta}\in\cA_2(T)}
     \Bigl|
       %\sum_{(i,j):\,\bomega_{ij}=1}
       \sum_{i,j}
       \bigl[\bomega_{ij} (\bX_i\boldsymbol{\beta}_j)^2 - \pi_{ij}(\bX_i\boldsymbol{\beta}_j)^2\bigr]
     \Bigr| \\[4pt]
  &= \sup_{\boldsymbol{\beta}\in\cA_2(T)}
     \bigl| \sum_{i,j}
        (\bomega_{ij} - \pi_{ij})(\bX_i\boldsymbol{\beta}_j)^2\bigr |. %\\[4pt]
\end{align*}
We need an upper bound on $\mathbb{E}\bracksc{Z_T}$. We state the following result from \citet{klopp_Matrix_2015} (see also Theorem 3.3 in \citet{chatterjee_Matrix_2015}).
\begin{lemma}[\citet{klopp_Matrix_2015}, Theorem 11]\label{theorem:klopp}
Suppose that \(f : [-1,1]^N \to \mathbb{R}\) is a convex, Lipschitz
function with Lipschitz constant \(L\).
Let \(\Xi_1,\ldots,\Xi_N\) be independent random variables taking values
in \([-1,1]\), and set \(Z := f(\Xi_1,\ldots,\Xi_N)\). Then, for any
\(t \geq 0\),
\[
\mathbb{P}\,\bigl(|Z - \mathbb{E}Z| \geq 16L + t\bigr)
  \leq 4\exp\!\left(-\frac{t^{2}}{2L^{2}}\right).
\]
\end{lemma}

Define the random variables  \(c_{ij} = \boldsymbol{\bomega}_{ij}\). These variables satisfy the conditions of Lemma \ref{theorem:klopp}. We may express $Z_T$ as
\[
f\,\bigl(c_{11},\ldots,c_{mn}\bigr)
  = Z_T
  =  \sup_{\boldsymbol{\beta}\in\cA_2(T)}
     \left| \sum_{i,j}
        \bracks{c_{ij} - \pi_{ij}}\bracks{\bX_i\boldsymbol{\beta}_j}^2\right |
  = \sup_{\boldsymbol{\beta}\in\cA_2(T)}
      \left |\bracksd{
        \operatorname{vec}\,\bracks{\boldsymbol{c}\boldsymbol{-\pi}}}{
        \operatorname{vec}\,\bigl(\mathbf{X}\boldsymbol{\beta}\bigr)^{2}
      }\right |.
\]
Because the term
\(\bracksd{ \operatorname{vec}\bracks{\boldsymbol{c}-\boldsymbol{\pi}}}{
   \operatorname{vec}\bracks{\mathbf{X}\boldsymbol{\beta}}^{2}}\)
is an affine function of \(\boldsymbol{c}\), and that every real-valued affine
function is convex, the function
\(f\bracks{\operatorname{vec}\,(\boldsymbol{c})}\) is convex. We now bound the Lipschitz constant of $f$ as follows

\begin{align*}
\bigl|f({\boldsymbol{c}})-f({\boldsymbol{b}})\bigr|
&=\Bigl\lvert
   \sup_{\boldsymbol{\beta}\in\cA_2(T)}
       \bigl|\sum_{i=1}^{n}\sum_{j=1}^{m}
           (c_{ij}-\pi_{ij})\,(\mathbf{X}_i\boldsymbol{\beta}_j)^2 \bigr|
   -\sup_{\boldsymbol{\beta}\in\cA_2(T)}
       \bigl| \sum_{i=1}^{n}\sum_{j=1}^{m}
           (b_{ij}-\pi_{ij})\,(\mathbf{X}_i\boldsymbol{\beta}_j)^2
 \bigr|\Bigr\rvert \\[4pt]
 &\leq\sup_{\boldsymbol{\beta}\in\cA_2(T)}
 \Bigl\lvert  
       \bigl|\sum_{i=1}^{n}\sum_{j=1}^{m}
           (c_{ij}-\pi_{ij})\,(\mathbf{X}_i\boldsymbol{\beta}_j)^2 \bigr|
   -
       \bigl| \sum_{i=1}^{n}\sum_{j=1}^{m}
           (b_{ij}-\pi_{ij})\,(\mathbf{X}_i\boldsymbol{\beta}_j)^2
 \bigr|\Bigr\rvert \\[4pt]
 &\leq\sup_{\boldsymbol{\beta}\in\cA_2(T)}
 \Bigl\lvert  
       \sum_{i=1}^{n}\sum_{j=1}^{m}
           (c_{ij}-\pi_{ij})\,(\mathbf{X}_i\boldsymbol{\beta}_j)^2 
   -
        \sum_{i=1}^{n}\sum_{j=1}^{m}
           (b_{ij}-\pi_{ij})\,(\mathbf{X}_i\boldsymbol{\beta}_j)^2
 \Bigr\rvert \\[4pt]
&\leq
   \sup_{\boldsymbol{\beta}\in\cA_2(T)}
   \Bigl\lvert
     \sum_{i=1}^{n}\sum_{j=1}^{m}
       \bigl(c_{ij}-b_{ij}\bigr)\,(\mathbf{X}_i\boldsymbol{\beta}_j)^2
   \Bigr\rvert  \tag{Cauchy--Schwarz}\\[4pt]
&\leq
   \sup_{\boldsymbol{\beta}\in\cA_2(T)}
     \sqrt{\sum_{i=1}^{n}\sum_{j=1}^{m}
            \pi_{ij}^{-1}\bigl(c_{ij}-b_{ij}\bigr)^{2}}
     \sqrt{\sum_{i=1}^{n}\sum_{j=1}^{m}
            \pi_{ij}\,(\mathbf{X}_i\boldsymbol{\beta}_j)^{4}}
           %
           % \tag{$(\sum_{k=1}^{p}\bX_{ik}\boldsymbol{\beta}_{kj})^{2}
        % \leq (\sum_{k=1}^{p}\|\bX_i\|_\infty \|\bbeta_j\|_\infty)^2 \leq 4\cxb^2$}
            \\[4pt]
&\leq
   \sqrt{\pi_{\min}^{-1}}\,
   \normf{{\boldsymbol{c}}-{\boldsymbol{b}}}\,
   \sup_{\boldsymbol{\beta}\in\cA_2(T)}
     \sqrt{\sum_{i=1}^{n}\sum_{j=1}^{m}
           4\cxb^2\,\pi_{ij}\,(\mathbf{X}_i\boldsymbol{\beta}_j)^{2}}
   \\[-2pt]
%&\hphantom{\leq{}}%
%    \\[4pt]
&\leq
   2\cxb\,\pi_{\min}^{-1/2}\,
   \normf{{\boldsymbol{c}}-{\boldsymbol{b}}}\,
   \sqrt{\mathbb{E}\bigl\|\boldsymbol{\bomega}\circ\mathbf{X}\boldsymbol{\beta}\bigr\|_{F}^{2}} \\[4pt]
&\leq
   2\cxb\,\pi_{\min}^{-1/2}\,
   \normf{{\boldsymbol{c}}-{\boldsymbol{b}}}\,\sqrt{T}.
\end{align*}
Consequently, the Lipschitz constant satisfies
\[
\frac{\bigl|f({\boldsymbol{c}})-f({\boldsymbol{b}})\bigr|}
     {\normf{{\boldsymbol{c}}-{\boldsymbol{b}}}}
   \leq 2\cxb\,\pi_{\min}^{-1/2}\sqrt{T}. 
\]
We then set $L=2\cxb\,\pi_{\min}^{-1/2}\sqrt{T}$ and bound $16L \leq 16\bracks{T/192 + 192\, \cxb^2 \pi_{\min}^{-1}}$ by Young's inequality. Substituting this value into Lemma \ref{theorem:klopp} yields
\begin{align*}
\mathbb{P}\,\Bigl(
  \bigl|Z_T - \mathbb{E}Z_T\bigr|
  \geq \frac{1}{12}T + 3072\,\cxb^2\pi_{\min}^{-1} + t
\Bigr)
  &\leq
  4\exp\!\left(
    -\frac{t^{2}\,\pi_{\min}}{8\,T\cxb^2}
  \right).
\end{align*}
Setting \( t =  T/4 \), we obtain
\begin{align}
\mathbb{P}\,\!\Bigl(
  \bigl|Z_T - \mathbb{E}Z_T\bigr|
  \geq \frac{1}{3}T + 3072\, \cxb^2\pi_{\min}^{-1}
\Bigr)
  &\leq
  4\exp\!\left(
    -\frac{T\,\pi_{\min}}{128\, \cxb^2}
  \right).
  \label{eq:bound_Zt_1}
\end{align}
Having established that \( Z_T \) is concentrated around its expectation, we next upper-bound the expectation. Let \(\{E_{ij}\}\) be an i.i.d. Rademacher sequence, and define
\(
\boldsymbol{\Sigma}_R
  = \sum_{i,j} E_{ij}\bomega_{ij}\,\be i \tr{\bff{j}},
\)
where $\be i \in \Ra n$ and $\bff j \in \Ra m$  are standard basis vectors. By symmetry \citep{ledoux_Concentration_2001} and using the contraction inequality (\citealp[Theorem 2.2]{koltchinskii_Oracle_2011}), we have
\begin{align}
    \mathbb{E}(Z_T)
        &= \mathbb{E}\!\left[
                \sup_{\boldsymbol{\beta}\in\cA_2(T)}
                \left\lvert
                    \bigl\lVert\boldsymbol{\bomega}\circ\mathbf{X}\boldsymbol{\bbeta}\bigr\rVert_F^{2}
                    - \mathbb{E}\bigl\lVert\boldsymbol{\bomega}\circ\mathbf{X}\boldsymbol{\bbeta}\bigr\rVert_F^{2}
                \right\rvert
            \right] \notag\\
            & = \mathbb{E}\!\left[
                \sup_{\boldsymbol{\bbeta}\in\cA_2(T)}
                \left\lvert 
                \sum_{i,j} (\bomega_{ij}-\pi_{ij})\,\bigl(X_{i}\bbeta_{j}\bigr)^{2}
                \right\rvert
            \right] \notag\\
        &\leq
            2\,
            \mathbb{E}\!\left[
                \sup_{\boldsymbol{\beta}\in\cA_2(T)}
                \left\lvert
                    \sum_{i,j}^{n,m} E_{ij} \bomega_{ij}\,\bigl(X_{i}\bbeta_{j}\bigr)^{2}
                \right\rvert
            \right] \notag\\
        &\leq
            16\, \cxb\,
            \mathbb{E}\!\left[
                \sup_{\boldsymbol{\beta}\in\cA_2(T)}
                \left\lvert
                    \sum_{i,j}^{n,m} E_{ij} \bomega_{ij}\,X_{i}\bbeta_{j}
                \right\rvert
            \right] \label{ineq:talag}\\
        &= 16\, \cxb\, 
           \mathbb{E}\!\left[
               \sup_{\boldsymbol{\beta}\in\cA_2(T)}
               \left\lvert
                   \langle \boldsymbol{\Sigma}_R,\,\mathbf{X}\boldsymbol{\beta}\rangle
               \right\rvert
           \right] \notag\\
        &\leq
           16\, \cxb\,
           \mathbb{E}\!\left[
               \sup_{\boldsymbol{\beta}\in\cA_2(T)}
               \normop{\Sigma_R} 
               \normnuc{\bX\bbeta}
           \right] \notag\\
           &\leq
           16\, \cxb\,
           \mathbb{E}\!\left[
               \sup_{\boldsymbol{\beta}\in\cA_2(T)}
               \normop{\Sigma_R}
               \sqrt{p}
               \normf{\bX\bbeta}
           \right] \notag\\
        &\leq \label{eq:Expec_bound}
           16\,\cxb\,\sqrt{p\,\pim^{-1}T}\,\E \normop{\Sigma_R},
\end{align}
where \eqref{ineq:talag} follows from applying the contraction inequality. Substituting the expectation bound \eqref{eq:Expec_bound} into the concentration inequality \eqref{eq:bound_Zt_1} and using
\[\frac{1}{3}T+ 16\,\cxb\,\sqrt{p\,\pim^{-1}T}\,\E \normop{\Sigma_R} \leq \bracks{\frac{1}{3}+\frac{1}{12}}T + 768\, \bracks{\E\normop{\Sigma_R}}^2\,p\, \cxb^2\,\pim^{-1},\]
we have 
\begin{align*}
    4\,
        \exp\!\left(
            -\frac{T\,\pi_{\text{min}}}{128\,\cxb^2}
        \right) & \ge
        \mathbb{P}\,\!\Bigl(
  \bigl|Z_T - \mathbb{E}(Z_T)\bigr|
  \geq \frac{1}{3}T + 3072\, \cxb^2\pi_{\min}^{-1}
\Bigr) \\ & \ge
        \mathbb{P}\,\!\Bigl(
  Z_T 
  \geq \mathbb{E}(Z_T) +\frac{1}{3}T + 3072\, \cxb^2\pi_{\min}^{-1}
\Bigr)
       \\ & \ge 
        \mathbb{P}\,\left(
        Z_T
        \geq 
        \frac{5}{12}T
        \;+\; 768\,\cxb^2\,\pim^{-1} \bracks{ p\bracks{\E\normop{\Sigma_R}}^2+4}
    \right)\\
        &= \mathbb{P}\,\left(
                Z_T
                \geq
                \frac{5}{12}T
                \;+\;
                D_\beta
           \right)
        ,
\end{align*}
where
\(
    D_\beta
        = 768\,\cxb^2\,\pim^{-1} \bracks{ p\bracks{\E\normop{\Sigma_R}}^2+4}
\).
Finally, we apply the peeling argument. Set
\(
    d_2
    = \frac{128\,\cxb^2\,\,\log(n+m)}{\log(\eta)\,\pi_{\text{min}}}.
\)
Applying the union bound over the events $\mathcal{B}_l$ with \(T=\eta^{l}d_2\) and $\eta=6/5$:
\begin{align*}
    \mathbb{P}(\mathcal{B})
        &\leq
            \sum_{l=1}^{\infty}\mathbb{P}(\mathcal{B}_l)\\
        &\leq
            4\sum_{l=1}^{\infty}
            \exp\!\left(
                -\frac{\eta^{l}d_2\,\pi_{\text{min}}}{128\,\cxb^2\,}
            \right)\\
        &\leq
            4\sum_{l=1}^{\infty}
            \exp\!\left(
                -\frac{\log(\eta)\,l\,d_2\,\pi_{\text{min}}}{128\,\cxb^2\,}
            \right)
            \tag{using \(\exp(x)\geq x\)}\\
        &\leq
            4\sum_{l=1}^{\infty}
            \exp\!\left(-l\log(n+m)\right)
            \\%\tag{substituting \(d_2\)}\\
        &\leq
            4\,
            \frac{
                \exp\!\left(-\log(n+m)\right)
            }{
                1-\exp\!\left(-\log(n+m)\right)
            }
            \tag{geometric series sum}\\
        &= 4\,
           \frac{(n+m)^{-1}}{1-(n+m)^{-1}}\\[2pt]
        &= \frac{4}{\,n+m-1\,}\\
        &\leq \frac{8}{\,n+m\,},
\end{align*}
where we used $\bracks{n+m-1}^{-1} \leq 2\bracks{n+m}^{-1}$ for any $n,m\ge1$.

Finally, we state the following lemma from \citet{klopp_Matrix_2015} and \citet{robin_Lowrank_2018}.
\begin{lemma} [\citet{robin_Lowrank_2018}, Lemma 4] \label{lemma:Rademacher}
    Let Assumption 3 hold. Then, there exists a constant $c^*$ such that the following inequality holds
    \[\E \normop{\Sigma_R} \le  c^* \bracksb{\sqrt{\pi_{\max} \bracks{n\vee m}}+\sqrt{\log\bracks{n\wedge m}}}.\]
\end{lemma}
\qed

\subsection{Proof of Lemma \ref{lemma:rsc2}}
\label{proof_lemma:rsc2}
We follow similar steps as the previous proof. We will show that, with high probability, the estimation error $\Delta\bM$ lies in a restricted set. By Lemma \ref{lemma:norm_bound2}, it follows that
$\Delta\bM\in\cS(d_5)$ for some $d_5$. Let 
\[
d_5
    = \frac{128\,c^2_m\,\log(n+m)}{\log(6/5)\,\pi_{\text{min}}}.
\] and consider the event $\mathcal{B}$, which corresponds to the failure of the lemma. This event characterizes large deviations of the empirical error $\normfrob{\bomega\circ\bM}$ from its expectation. Specifically, $\mathcal{B}$ occurs if there exists a candidate $\bM$ in the restricted cone for which the observed training error deviates from its expected value by more than half the expectation plus an additive error term $D_M$. It suffices to show that $\mathcal{B}$ occurs with low probability to prove the lemma.  
\[
\mathcal{B} = \Bigl\{
  \exists\,\bM\in\cS\,\Big|\,
  \bigl|
    \|\boldsymbol{\bomega}\circ\bM\|_F^2
    - \mathbb{E}\|\boldsymbol{\bomega}\circ\bM\|_F^2
  \bigr|
  > \tfrac12\mathbb{E}\|\boldsymbol{\bomega}\circ\bM\|_F^2 + D_M
\Bigr\}.
\]
To bound the probability of $\mathcal{B}$, we use the peeling argument of \citet{klopp_Matrix_2015}. Let $\eta = 6/5$. For $l \in \mathbb{N}$, define the shells
\[
\cS_1(l)
  = \Bigl\{
      \bM\in\cS\,\Big|\,
      \eta^{\,l-1}d_5 \leq \mathbb{E}\|\boldsymbol{\bomega}\circ\bM\|_F^2 \leq \eta^{\,l}d_5
    \Bigr\}.
\]
If the event $\mathcal{B}$ occurs for some $\bM \in \cS$, then \(\bM \in \cS_1(l)\) for some index $l$, and
\begin{align} \label{eq:underB1}
\bigl|
\|\boldsymbol{\bomega}\circ\bM\|_F^2
  - \mathbb{E}\|\boldsymbol{\bomega}\circ\bM\|_F^2
\bigr|
&> \tfrac12\mathbb{E}\|\boldsymbol{\bomega}\circ\bM\|_F^2 + D_M \\[4pt] \notag
&> \tfrac12\eta^{\,l-1}d_5 + D_M \\[4pt]
&= \tfrac{5}{12}\eta^{\,l}d_5 + D_M. \notag
\end{align}
For $T > d_5$, consider the set
\[
\cS_2(T)
  = \Bigl\{
      \bM\in\cS\,\Big|\,
      \mathbb{E}\|\boldsymbol{\bomega}\circ\bM\|_F^2 \leq  T
    \Bigr\},
\]
and 

\[
\mathcal{B}_l
  = \Bigl\{
      \exists\,\bM\in{\cS}\,\!\bigl(\eta^{\,l}d_5\bigr)\,\Big|\,
      \bigl|
        \|\boldsymbol{\bomega}\circ\bM\|_F^2
        - \mathbb{E}\|\boldsymbol{\bomega}\circ\bM\|_F^2
      \bigr|
      > \tfrac{5}{12}\eta^{\,l}d_5 + D_M
    \Bigr\}.
\]
Observe that \(\bM \in \cS_1(l)\) implies \(\bM \in \cS_2( \eta^l d_5)\).  Consequently, \eqref{eq:underB1} implies that \(\mathcal{B}_l\) occurs, leading to the inclusion \(
\mathcal{B} \subset \bigcup_{l=1}^{\infty} \mathcal{B}_l
\). It therefore suffices to bound the probability of each event \(\mathcal{B}_l\) and apply the union bound. 

We now derive a concentration bound for $\mathcal{B}_l$. Let
\begin{align*}
Z_T
  &= \sup_{\bM\in\cS_2(T)}
     \bigl|
       \|\boldsymbol{\bomega}\circ\bM\|_F^2
       - \mathbb{E}\|\boldsymbol{\bomega}\circ\bM\|_F^2
     \bigr| \\[4pt]
  &= \sup_{\bM\in\cS_2(T)}
     \Bigl|
       %\sum_{(i,j):\,\bomega_{ij}=1}
       \sum_{i,j}
       \bigl[\bomega_{ij} \bM_{ij}^2 - \pi_{ij}\bM_{ij}^2\bigr]
     \Bigr| \\[4pt]
  &= \sup_{\bM\in\cS_2(T)}
     \bigl| \sum_{i,j}
        (\bomega_{ij} - \pi_{ij})\bM_{ij}^2\bigr |. %\\[4pt]
\end{align*}
We require an upper bound on $\mathbb{E}\bracksc{Z_T}$. Define the random variables  \(c_{ij} = \boldsymbol{\bomega}_{ij}\). These variables satisfy the conditions of Lemma \ref{theorem:klopp}. We may express $Z_T$ as
\[
f\,\bigl(c_{11},\ldots,c_{mn}\bigr)
  = Z_T
  =  \sup_{\bM\in\cS_2(T)}
     \left| \sum_{i,j}
        \bracks{c_{ij} - \pi_{ij}}\bM_{ij}^2\right |
  = \sup_{\bM\in\cS_2(T)}
      \left |\bracksd{
        \operatorname{vec}\,\bracks{\boldsymbol{c}\boldsymbol{-\pi}}}{
        \operatorname{vec}\,\bigl(\bM\bigr)^{2}
      }\right |.
\]
Because the term
\(\bracksd{ \operatorname{vec}\bracks{\boldsymbol{c}-\boldsymbol{\pi}}}{
   \operatorname{vec}\bracks{\bM}^{2}}\)
is an affine function of \(\boldsymbol{c}\), and that every real-valued affine
function is convex, the function
\(f\bracks{\operatorname{vec}\,(\boldsymbol{c})}\) is convex. We now bound the Lipschitz constant of $f$. 

\begin{align*}
\bigl|f({\boldsymbol{c}})-f({\boldsymbol{b}})\bigr|
&=\Bigl\lvert
   \sup_{\bM\in\cS_2(T)}
       \bigl|\sum_{i=1}^{n}\sum_{j=1}^{m}
           (c_{ij}-\pi_{ij})\,\bM_{ij}^2 \bigr|
   -\sup_{\bM\in\cS_2(T)}
       \bigl| \sum_{i=1}^{n}\sum_{j=1}^{m}
           (b_{ij}-\pi_{ij})\,\bM_{ij}^2
 \bigr|\Bigr\rvert \\[4pt]
 &\leq\sup_{\bM\in\cS_2(T)}
 \Bigl\lvert  
       \bigl|\sum_{i=1}^{n}\sum_{j=1}^{m}
           (c_{ij}-\pi_{ij})\,\bM_{ij}^2 \bigr|
   -
       \bigl| \sum_{i=1}^{n}\sum_{j=1}^{m}
           (b_{ij}-\pi_{ij})\,\bM_{ij}^2
 \bigr|\Bigr\rvert \\[4pt]
 &\leq\sup_{\bM\in\cS_2(T)}
 \Bigl\lvert  
       \sum_{i=1}^{n}\sum_{j=1}^{m}
           (c_{ij}-\pi_{ij})\,\bM_{ij}^2 
   -
        \sum_{i=1}^{n}\sum_{j=1}^{m}
           (b_{ij}-\pi_{ij})\,\bM_{ij}^2
 \Bigr\rvert \\[4pt]
&\leq
   \sup_{\bM\in\cS_2(T)}
   \Bigl\lvert
     \sum_{i=1}^{n}\sum_{j=1}^{m}
       \bigl(c_{ij}-b_{ij}\bigr)\,\bM_{ij}^2
   \Bigr\rvert  \tag{Cauchy--Schwarz}\\[4pt]
&\leq
   \sup_{\bM\in\cS_2(T)}
     \sqrt{\sum_{i=1}^{n}\sum_{j=1}^{m}
            \pi_{ij}^{-1}\bigl(c_{ij}-b_{ij}\bigr)^{2}}
     \sqrt{\sum_{i=1}^{n}\sum_{j=1}^{m}
            \pi_{ij}\,\bM_{ij}^{4}}
           %
            \\[4pt]
            &\leq
   \sup_{\bM\in\cS_2(T)}
     \sqrt{\sum_{i=1}^{n}\sum_{j=1}^{m}
            \pi_{ij}^{-1}\bigl(c_{ij}-b_{ij}\bigr)^{2}}
     \sqrt{\sum_{i=1}^{n}\sum_{j=1}^{m} 4c_m^2
            \pi_{ij}\,\bM_{ij}^{2}}
           %
            \\[4pt]
%&\hphantom{\leq{}}%
%    \\[4pt]
&\leq
   2\,c_m\,\pi_{\min}^{-1/2}\,
   \normf{{\boldsymbol{c}}-{\boldsymbol{b}}}\,
   \sqrt{\mathbb{E}\bigl\|\boldsymbol{\bomega}\circ\bM\bigr\|_{F}^{2}} \\[4pt]
&\leq
   2\,c_m\,\pi_{\min}^{-1/2}\, 
   \normf{{\boldsymbol{c}}-{\boldsymbol{b}}}\,\sqrt{T}.
\end{align*}
Consequently, the Lipschitz constant satisfies
\[
\frac{\bigl|f({\boldsymbol{c}})-f({\boldsymbol{b}})\bigr|}
     {\normf{{\boldsymbol{c}}-{\boldsymbol{b}}}}
   \leq 2 c_m\pi_{\min}^{-1/2}\sqrt{T}.
\]
We then set $L=2 c_m\pi_{\min}^{-1/2}\sqrt{T}$ and bound $16L \leq 16\bracks{ T/192 + 192 \, c_m^2 \, \pi_{\min}^{-1}}$ by Young's inequality. Substituting this value into Lemma \ref{theorem:klopp} yields
\begin{align*}
\mathbb{P}\,\Bigl(
  \bigl|Z_T - \mathbb{E}Z_T\bigr|
  \geq \frac{1}{12}T + 3072\,c_m^2\pi_{\min}^{-1} + t
\Bigr)
  &\leq
  4\exp\!\left(
    -\frac{t^{2}\,\pi_{\min}}{8\,T\,c_m^2}
  \right).
\end{align*}
Setting \( t =  T/4 \), we obtain
\begin{align}
\mathbb{P}\,\Bigl(
  \bigl|Z_T - \mathbb{E}Z_T\bigr|
  \geq \tfrac{1}{3}T + 3072\,c^2_m\, \pi_{\min}^{-1}
\Bigr)
  &\leq
  4\exp\!\left(
    -\frac{T\,\pi_{\min}}{128\,c^2_m}
  \right).
  \label{eq:bound_Zt_1_1}
\end{align}
Having established that \( Z_T \) is concentrated around its expectation, we next upper-bound the expectation.
Using the same symmetry and contraction inequality arguments as in the proof of Lemma \ref{lemma:rsc1}, we have
\begin{align}
    \mathbb{E}(Z_T)
        &= \mathbb{E}\!\left[
                \sup_{\bM\in\cS_2(T)}
                \left\lvert
                    \bigl\lVert\boldsymbol{\bomega}\circ\bM\bigr\rVert_F^{2}
                    - \mathbb{E}\bigl\lVert\boldsymbol{\bomega}\circ\bM\bigr\rVert_F^{2}
                \right\rvert
            \right] \notag\\
            & = \mathbb{E}\!\left[
                \sup_{\bM\in\cS_2(T)}
                \left\lvert 
                \sum_{i,j} (\bomega_{ij}-\pi_{ij})\,\bM_{ij}^{2}
                \right\rvert
            \right] \notag\\
        &\leq
            2\,
            \mathbb{E}\!\left[
                \sup_{\bM\in\cS_2(T)}
                \left\lvert
                    \sum_{i,j}^{n,m} E_{ij} \bomega_{ij}\,\bM_{ij}^{2}
                \right\rvert
            \right] \notag\\
        &\leq
            16\,c_m\,
            \mathbb{E}\!\left[
                \sup_{\bM\in\cS_2(T)}
                \left\lvert
                    \sum_{i,j}^{n,m} E_{ij} \bomega_{ij}\,\bM_{ij}
                \right\rvert
            \right] \notag\\
        &= 16\,c_m\,
           \mathbb{E}\!\left[
               \sup_{\bM\in\cS_2(T)}
               \left\lvert
                   \langle \boldsymbol{\Sigma}_R,\,\bM\rangle
               \right\rvert
           \right] \notag\\
        &\leq
           16\,c_m\,
           \mathbb{E}\!\left[
               \sup_{\bM\in\cS_2(T)}
               \normop{\boldsymbol{\Sigma}_R}
               \normnuc{\bM}
           \right] \notag\\
        &\leq \notag
           16\,c_m\,\sqrt{32\,r}\,\normf{\bM}\E{\normop{\boldsymbol{\Sigma}_R}}\\
           &\leq \notag
           16\,c_m\,\sqrt{32\,r\,\pi_{\min}^{-1}\,\E\normfrob{\bomega\circ\bM}}\,\E{\normop{\boldsymbol{\Sigma}_R}} \\
           &\leq
           16\,c_m\,\sqrt{32\,r\,\pi_{\min}^{-1}\, T}\,\E{\normop{\boldsymbol{\Sigma}_R}}. \label{eq:Expec_bound_1}
\end{align}

Substituting the expectation bound \eqref{eq:Expec_bound_1} into the concentration inequality \eqref{eq:bound_Zt_1_1} and using
\[\frac{1}{3}T + 16\,c_m\, \sqrt{32\,r\,\pi_{\min}^{-1}\, T}\,\E{\normop{\boldsymbol{\Sigma}_R}} \le \bracks{\frac{1}{3}+\frac{1}{12}}T + 768 \, c_m^2 \bracks{32\,r\,\pi_{\min}^{-1}}\,\bracks{\E{\normop{\boldsymbol{\Sigma}_R}}}^2,\]
we have 
\begin{align*}
  4\exp\!\left(
    -\frac{T\,\pi_{\min}}{128\,c^2_m}
  \right) &\geq \mathbb{P}\,\left(
                Z_T
                \geq
                \frac5{12}T + 3072\,c_m^2\, \pi_{\min}^{-1} \bracks{8\,r\,\bracks{\E{\normop{\boldsymbol{\Sigma}_R}}}^2+1} 
           \right) \notag\\
        &= \mathbb{P}\,\left(
                Z_T
                \geq
                \frac{5}{12}T
                \;+\;
                D_M
           \right),
\end{align*}
where
\(
    D_M
        = 3072\,c_m^2\, \pi_{\min}^{-1} \bracks{8\,r\,\bracks{\E{\normop{\boldsymbol{\Sigma}_R}}}^2+1}
\).
Finally, we apply the peeling argument. Set
\(
    d_5
    = \frac{128\,c^2_m\,\log(n+m)}{\log(\eta)\,\pi_{\text{min}}}.
\)
Applying the union bound over the events $\mathcal{B}_l$ with \(T=\eta^{l}d_5\) and $\eta=6/5$:
\begin{align*}
    \mathbb{P}(\mathcal{B})
        &\leq
            \sum_{l=1}^{\infty}\mathbb{P}(\mathcal{B}_l)\\
        &\leq
            4\sum_{l=1}^{\infty}
            \exp\!\left(
                -{\eta^{l}d_5\,c^{-2}_m\,\pi_{\text{min}}/128}
            \right)\\
        &\leq
            4\sum_{l=1}^{\infty}
            \exp\!\left(
                -{\log(\eta)\,l\,d_5\,c^{-2}_m\,\pi_{\text{min}}/128}
            \right)
            \tag{using \(\exp(x)\geq x\)}\\
        &\leq
            4\sum_{l=1}^{\infty}
            \exp\!\left(-l\log(n+m)\right)
            \\%\tag{substituting \(d_5\)}\\
        &\leq
            4\,
            \frac{
                \exp\!\left(-\log(n+m)\right)
            }{
                1-\exp\!\left(-\log(n+m)\right)
            }
            \tag{geometric series sum}\\
        &= 4\,
           \frac{(n+m)^{-1}}{1-(n+m)^{-1}}\\[2pt]
        &= \frac{4}{\,n+m-1\,}\\
        &\leq \frac{8}{\,n+m\,},
\end{align*}
where we used $\bracks{n+m-1}^{-1} \leq 2\bracks{n+m}^{-1}$ for any $n,m\ge1$.
\qed

\subsection{Proof of Lemma \ref{lemma:noise}}
\textbf{(i)}
We apply the Bernstein bound for random matrices \citep[Theorem 6.17]{wainwright_Highdimensional_2019}. Because this bound applies only to sums of independent, symmetric matrices, we symmetrize $\bomega\circ\bepsilon$ by defining the following matrix
\[
\tilde{\bepsilon} = \begin{pmatrix}
    0 & \bomega\circ\bepsilon \\
    \tr{\bracks{\bomega\circ\bepsilon}} & 0
\end{pmatrix} \in \R{(n+m)}{(n+m)}.
\]
Observe that $\normop{\bomega\circ\bepsilon} = \normop{\tilde{\bepsilon}}$. We express $\tilde{\bepsilon}$ as a sum of $mn$ independent matrices, one for each entry of $\bepsilon$. Defining the symmetric basis matrices
\[
\boldsymbol{Q}_{ij} = \begin{pmatrix}
    0 & \be i \tr{\bff{}}_j \\
    \bff j \tr{\be{}}_i & 0
\end{pmatrix},
\]
we have
\[
\tilde{\bepsilon} = \sum_{i=1}^n\sum_{j=1}^m \bomega_{ij}\bepsilon_{ij} \boldsymbol{Q}_{ij},
\]
where $\tilde{\bepsilon}_{ij}=\bomega_{ij}\bepsilon_{ij} \boldsymbol{Q}_{ij}$ is a matrix that contains zeros everywhere except in the off-diagonal blocks corresponding to the $\bracks{i,j}$ entry, which contains a zero-mean, $\sigma$-sub-Gaussian random variable. Before applying the Bernstein bound, we compute the operator norm of the matrix variance:
\begin{align*}
    \frac{1}{mn}\normop{\sum_{i=1}^n\sum_{j=1}^m
      \operatorname{Var}\bracks{\tilde{\bepsilon}_{ij}}}
    &= \frac{1}{mn}\normop{\sum_{i=1}^n\sum_{j=1}^m
         \E\bracksc{\bracks{\tilde{\bepsilon}_{ij}}^2}} \\
    &= \frac{1}{mn}\normop{\sum_{i=1}^n\sum_{j=1}^m
         \pi_{ij}\,\E\bracksc{\bepsilon_{ij}^2}\,
         \boldsymbol{Q}_{ij}^2} \\
    &\leq \frac{1}{mn}\normop{\sum_{i=1}^n\sum_{j=1}^m
         \pi_{\max}\,\sigma^2\,\boldsymbol{Q}_{ij}^2} \\
    &= \frac{1}{mn}\left\|\sum_{i=1}^n\sum_{j=1}^m
         \pi_{\max}\,\sigma^2\,\begin{pmatrix}
           \be i\tr{\be{}}_i & 0 \\
           0 & \bff j\tr{\bff{}}_j
         \end{pmatrix}\right\|_{\text{op}} \\
    &= \frac{\pi_{\max}\,\sigma^2}{mn}\left\|\begin{pmatrix}
           m\,\iden{n} & 0 \\
           0 & n\,\iden{m}
         \end{pmatrix}\right\|_{\text{op}} \\
    &= \frac{m \vee n}{mn}\pi_{\max}\,\sigma^2
     = \frac{\pi_{\max}\,\sigma^2}{m \wedge n},
\end{align*}
where the second equality uses the independence of
$\bomega_{ij}$ and $\bepsilon_{ij}$, so that
$\E[\bomega_{ij}^2\,\bepsilon_{ij}^2]
  = \E[\bomega_{ij}]\,\E[\bepsilon_{ij}^2]
  = \pi_{ij}\,\E[\bepsilon_{ij}^2]$;
the inequality uses the sub-Gaussian
property $\E[\bepsilon_{ij}^2]\leq\sigma^2$.
Applying the Bernstein bound yields
\[
\mathbb{P}\bracks{\frac{1}{mn} \normop{\sum_{i=1}^n\sum_{j=1}^m\tilde{\bepsilon}_{ij}}\ge t} \leq
2\, \bracks{m+n} \, \exp\bracksb{\frac{- mnt^2}{2\bracks{{\pi_{\max}\,\sigma^2/\bracks{m\wedge n}}+bt}}} \qquad \text{for all } t\ge 0,
\]
where $b$ is the Bernstein condition parameter. Assuming $t \leq \bracks{\pi_{\max}\,\sigma^2}/\bracks{\bracks{m \wedge n}b}$, the bound simplifies to

\[
\mathbb{P}\bracks{\frac{1}{mn} \normop{\sum_{i=1}^n\sum_{j=1}^m\tilde{\bepsilon}_{ij}}\ge t} \leq
2\, \bracks{m+n} \, \exp\bracksb{\frac{- mnt^2}{4\pi_{\max}\,\sigma^2/\bracks{m\wedge n}}}.
\]

For any $\delta \in \bracks{0,1}$, equating the right-hand side to $\delta$ gives
\begin{align*}
    \delta &= 2\, \bracks{m+n} \, \exp\bracksb{\frac{- mnt^2}{4\pi_{\max}\,\sigma^2/\bracks{m\wedge n}}},\\
    \log\bracks{\delta} &= \log\bracks{2\bracks{m+n}} -\frac{ mnt^2}{4\pi_{\max}\,\sigma^2/\bracks{m\wedge n}}, \\
    t^2 &= \log\bracks{\frac{2\bracks{m+n}}{\delta}} \frac{4\pi_{\max}\,\sigma^2 }{mn\bracks{m\wedge n}},\\
    t &= \sqrt{\log\bracks{\frac{2\bracks{m+n}}{\delta}}} \frac{2\sqrt{\pi_{\max}}\,\sigma }{\sqrt{mn\bracks{m\wedge n}}}.
\end{align*}
Consequently, with probability at least $1-\delta$ we obtain
\begin{align*}
\frac{1}{mn} \normop{\sum_{i=1}^n\sum_{j=1}^m\tilde{\bepsilon}_{ij}} &\le   \sqrt{\log\bracks{\frac{2\bracks{m+n}}{\delta}}} \frac{2\sqrt{\pi_{\max}}\,\sigma }{\sqrt{mn\bracks{m\wedge n}}}, \\
\normop{\sum_{i=1}^n\sum_{j=1}^m\tilde{\bepsilon}_{ij}} &\le 2\sigma  \sqrt{\pi_{\max}\bracks{m\vee n}\log\bracks{\frac{2\bracks{m+n}}{\delta}}}. 
\end{align*}

\textbf{(ii)-(iii)} We prove part (ii) only, as part (iii) follows by
analogous arguments. For notational simplicity, let
$\bQ = \tr\bX[\bomega \circ \bepsilon]$, so that
$\bQ_{ij} = \sum_{l=1}^n \bX_{li}\,\bomega_{lj}\,\bepsilon_{lj}$ for
$i = 1,\dots,p$ and $j = 1,\dots,m$. We claim that $\bQ_{ij}$ is
sub-Gaussian with parameter
$\sigma\sqrt{\sum_{l=1}^n \bX_{li}^2\,\bomega_{lj}}$. Intuitively, this
follows because $\bomega_{lj} \in \{0,1\}$ is bounded and
$\bepsilon_{lj}$ is zero-mean $\sigma$-sub-Gaussian, so the product
$\bomega_{lj}\bepsilon_{lj}$ is itself zero-mean and
$\sigma$-sub-Gaussian. Consequently, $\bQ_{ij}$, as a linear combination
of independent zero-mean sub-Gaussian variables, has a sub-Gaussian
distribution. We now verify this claim explicitly and derive a tail bound
for $\norminf{\bQ}$.

For all $t>0$ and $\lambda>0$, we have
\begin{align}
  \notag
  \mathbb{P}\bracks{\norminf{\bQ} > t}
    &= \mathbb{P}\bracks{\max_{i,j}|\bQ_{ij}| > t} \\
  \notag
    &\leq \sum_{i=1}^p \sum_{j=1}^m
      \mathbb{P}\bracks{|\bQ_{ij}| > t} \\
  \notag
    &= \sum_{i=1}^p \sum_{j=1}^m \bigl(
      \mathbb{P}\bracks{\bQ_{ij} > t}
      + \mathbb{P}\bracks{-\bQ_{ij} > t} \bigr) \\
  \notag
    &\leq  \sum_{i=1}^p \sum_{j=1}^m \bracks{
      \mathbb{P}\bracks{\expo{\lambda\bQ_{ij}}
        > \expo{\lambda\,t}} + \mathbb{P}\bracks{\expo{-\lambda\bQ_{ij}}
        > \expo{\lambda\,t}}} \\
    &\leq \expo{-\lambda\,t}
      \sum_{i=1}^p \sum_{j=1}^m \bracks{
      \E\bracks{\expo{\lambda\bQ_{ij}}}+\E\bracks{\expo{-\lambda\bQ_{ij}}}},
  \label{eq:lemma5_proof_iii}
\end{align}

where the second line uses Boole's inequality and the last two lines use the Chernoff bound.
Next, we bound the moment-generating function by conditioning on
$\bomega_{lj}$. Using the independence of
$\{\bepsilon_{lj}\}_{l=1}^n$, we obtain
\begin{align*}
  \E\bracks{\expo{\lambda\bQ_{ij}}}
    &= \prod_{l=1}^n
      \E\bracks{\expo{\lambda\bX_{li}\,\bomega_{lj}\,\bepsilon_{lj}}} \\
    &= \prod_{l=1}^n \Bigl[
      \mathbb{P}\bracks{\bomega_{lj} = 0}
      + \mathbb{P}\bracks{\bomega_{lj} = 1}\,
        \E\bracks{\expo{\lambda\bX_{li}\,\bepsilon_{lj}}}
    \Bigr] \\
    &= \prod_{l=1}^n \Bigl[
      (1 - \pi_{lj})
      + \pi_{lj}\,
        \E\bracks{\expo{\lambda\bX_{li}\,\bepsilon_{lj}}}
    \Bigr] \\
    &\leq \prod_{l=1}^n \Bigl[
      (1 - \pi_{lj})
      + \pi_{lj}\,
        \expo{\lambda^2 \bX_{li}^2 \sigma^2 / 2}
    \Bigr],
\end{align*}
where the last inequality uses the sub-Gaussian property of
$\bepsilon_{lj}$. We now apply the following inequality: for
$a_1 \in [0,1]$ and $a_2 \geq 0$,
\[
  1 - a_1 + a_1\,\expo{a_2} \leq \expo{a_2},
\]
which holds because
$\expo{a_2} - (1 - a_1 + a_1\,\expo{a_2})
= (1 - a_1)(\expo{a_2} - 1) \geq 0$.
Setting $a_1 = \pi_{lj}$ and
$a_2 = \lambda^2 \bX_{li}^2 \sigma^2 / 2$, we obtain
\begin{align*}
  \E\bracks{\expo{\lambda\bQ_{ij}}}
    &\leq \prod_{l=1}^n
      \expo{\lambda^2 \bX_{li}^2 \sigma^2 / 2} \\
    &= \expo{\lambda^2 \sigma^2
      \sum_{l=1}^n \bX_{li}^2 / 2} \\
    &\leq \expo{\lambda^2 n \sigma^2 / 2},
\end{align*}
where the last step uses $\norminf{\bX} \leq 1$. Because the upper bound on the moment-generating function depends on $\lambda^2$, it holds identically for both $\bQ_{ij}$ and $-\bQ_{ij}$. Substituting this bound back
into~\eqref{eq:lemma5_proof_iii} and
setting $\lambda = t / (n \sigma^2)$, we get
\begin{align*}
  \mathbb{P}\bracks{\norminf{\bQ} > t}
    &\leq 2 \expo{-\lambda\,t}
      \sum_{i=1}^p \sum_{j=1}^m
      \expo{\lambda^2 n \sigma^2 / 2} \\
    &= 2\,p\,m\,\expo{-\lambda\,t + \lambda^2 n \sigma^2 / 2} \\
    &= 2\,p\,m\,\expo{-t^2 / (2 \sigma^2 n)}.
\end{align*}
Finally, for any $\delta \in (0,1)$, setting the right-hand side equal to
$\delta$ and solving for $t$ yields
\begin{align*}
  \delta &= 2\,p\,m\,\expo{-t^2 / (2 \sigma^2 n)}, \\
  t &= \sigma\sqrt{2\,n\,\log\frac{2\,p\,m}{\delta}}.
\end{align*}
\qed

%=========================================================================
%=========================================================================

\section{Theoretical Properties Under the General Case}

In this appendix, we investigate the error bounds under the general case, wherein prior knowledge regarding the structural dependencies within $\truev\btheta$ is available. Following steps analogous to those in Appendix \ref{section:B}, we demonstrate that incorporating similarity matrices into the model does not alter the bounds on $\Delta\bbeta$ and $\Delta\bgamma$. We provide the lemmas requisite for computing the bound on $\Delta\bM$, which generalize Lemmas \ref{lemma:norm_bound2} and \ref{lemma:rsc2} from the special case of identity similarity matrices. We outline the proof to establish an upper bound on $\Delta\bM$. For conciseness, we omit intermediate steps that merely duplicate the derivations presented in Appendix \ref{section:B}. Our focus is restricted to the stages of the proof where the inequalities diverge from the prior analysis. Throughout the remainder of this appendix, we consider the following objective function, under the same optimization problem in \eqref{problem:thm1}.
\begin{align}
    \label{eq:problem_2}
    \loss{\bstheta} = \losssf{\bstheta} + \frac{1}{2}\lambm \bracks{\trace{\tr\bA\bSa\bA} + \trace{\tr\bB\bSb \bB}} + \lambb \normlass{\bbeta} + \lambg \normlass{\bgamma}.
\end{align}
We first introduce the following lemma, which establishes the nuclear norm equivalence for the trace penalties introduced above.
%--------------------------------------------------------------------------------------
\begin{lemma} \label{lemma:nucnormequiv} For $\bM$, $\bA$, $\bB$, $\bSa$, and $\bSb$ defined as in our model, the following holds
\begin{align} \label{lemma:nuclear_equiv}
    \normnuc{\bSa^{1/2}\bM\bSb^{1/2}} = \min_{\bA,\bB:\bM=\bA\tr\bB} \frac{1}{2} \bracks{\trace{\bA^\top\bSa \bA} + \trace{\bB^\top\bSb \bB} }  
\end{align}
\end{lemma}
\begin{proof}
    
By Proposition 1 of \citet{mazumder_spectral_2010}, we have
\begin{align*}2\normnuc{\bQa\bA\tr\bB\bQb} &\le
\trace{\bQa\bA\tr\bA\bQa} + \trace{\bQb\bB\tr\bB\bQb} \\
&= \normfrob{\bQa\bA} + \normfrob{\bQb\bB},
\end{align*}
and by Lemma 6 of \citet{mazumder_spectral_2010}, we obtain the equality
\begin{align*}2\normnuc{\bQa\bA\tr\bB\bQb} 
= \min_{\bA,\bB:\bM=\bA\tr\bB}\bracks{\normfrob{\bQa\bA} + \normfrob{\bQb\bB}}.
\end{align*}
\end{proof}

Lemma \ref{lemma:nuclear_equiv} allows an equivalent reformulation of the objective in \eqref{eq:problem_2} as follows
\begin{align}
    \label{eq:problem_3}
    \loss{\bstheta} = \losssf{\bstheta} + \lambm \normnuc{\bQa\bM\bQb} + \lambb \normlass{\bbeta} + \lambg \normlass{\bgamma}.
\end{align}

The sole distinction between the formulations in \eqref{eq:obj2} and \eqref{eq:problem_3} lies in the penalty applied to $\bM$, which does not influence the penalties on $\bbeta$ and $\bgamma$ or the loss function $\losssf{\bstheta}$. Consequently, Lemmas \ref{lemma:norm_bound1} and \ref{lemma:rsc1}, as well as parts (i) and (ii) of Theorem 1, remain valid. Thus, it suffices to establish an upper bound for $\Delta\bM$. To this end, we first introduce an extra assumption on the similarity matrices.
\begin{assumption}\label{assumption:similarity}
Let $\bK{r}\in\R{n}{n}$ and $\bK{c}\in\R{m}{m}$ be the raw kernel matrices associated with the rows and columns of the response matrix, and let $\bD_r$ and $\bD_c$ denote their diagonal degree matrices, respectively. To prevent spectral decay as the dimensions grow, we normalize the kernels before inverting them. This allows us to control the bounds of the singular values. We define the similarity matrices $\bSa$ and $\bSb$ as the regularized inverses of the normalized kernels
    \[\bSa=\bracks{\bD_r^{-1/2}\bK{r}\bD_r^{-1/2} + \gamma \iden{n}}^{-1},
    \qquad  \bSb=\bracks{\bD_c^{-1/2}\bK{c}\bD_c^{-1/2} + \gamma \iden{m}}^{-1},\]
    where $1 > \gamma > 0$ is a small positive constant. Consequently, $\bSa$ and $\bSb$ are symmetric positive definite, and their singular values are universally bounded in the interval $\bracksc{1/(1+\gamma), 1/\gamma}$, independent of $n$ and $m$. Therefore, there exist universal constants such that the spectral bounds for the square roots of the similarity matrices satisfy
    \begin{align*} 
    \amin &:= \min\bracks{  \sigma_{\min}(\bSa^{1/2}), \sigma_{\min}(\bSb^{1/2})}  \ge \frac{1}{\sqrt{1+\gamma}} > \frac{1}{\sqrt{2}}, \\ 
    \amax &:=\max \bracks{  \sigma_{\max}(\bSa^{1/2}), \sigma_{\max}(\bSb^{1/2})} \le \frac{1}{\sqrt{\gamma}}  < \infty, 
    \end{align*} where $\sigma_{\min}(\cdot)$ and $\sigma_{\max}(\cdot)$ denote the smallest and largest singular values, respectively. 
\end{assumption}

Let $r = \operatorname{rank}\bracks{\truev\bM}$. Recalling the definition of the good event $\mathcal{G}\bracks{\lambm}$, we define the following restricted cone
\begin{align*}
\mathcal{\wtilde{S}}(d_5, a) =& 
\Bigl \{  
\bM \in \mathbb{R}^{n\times m} \, \big | \,
\norminf{\bM} \le 2\,c_m, \;
\E\normfrob{\bomega\circ\bM} \ge d_5, \normnuc{\bM} \leq  a \sqrt{32r} \normf{\bM}  \;
\Bigr\}.
\end{align*}

First, observe that $ \cS\bracks{d_5} = \mathcal{\wtilde{S}}\bracks{d_5, 1}$, implying that $\cS \subset \wtilde{\cS}$. The following Lemma establishes that for $a = \amax^2/\bracks{2\amin^2-1}$, the estimation error satisfies ${\Delta\bM} \in \mathcal{\wtilde{S}}$ for some $d_5$.

\begin{lemma}\label{lemma:norm_bound3}
    Assume that $\lambm$ satisfies the condition
    \[\lambm  \ge 2 \bracks{\normop{\bomega\circ\bX\Delta\bbeta}+\normop{\bomega\circ\Delta\bgamma\bZ}+\normop{\bomega\circ\bepsilon}}.\]
   Then,
    \[ \normnuc{\Delta\bM} \le  \frac{\amax^2}{2\amin^2-1} \sqrt{32r}\normf{\Delta\bM}. \]
\end{lemma}

The next step is to demonstrate that the inequality \eqref{ineq:rsc_M} holds for some $D_M$. Establishing this inequality verifies the RSC condition. By utilizing the value of $D_M$ alongside the optimality condition in \eqref{ineq:Lossdef}, we derive the error bound for $\Delta\bM$ in the general case. We present the generalization of Lemma \ref{lemma:rsc2} for the general case below; the proof is omitted, as it follows the same sequence of steps detailed in Appendix \ref{proof_lemma:rsc2}.

\begin{lemma}\label{lemma:rsc3}
Suppose that $\lambm$ satisfies 

\[\lambm \geq4 \left(\sigma_x\,c_\beta  + \sigma_z \, c_\gamma +  \sigma \,\sqrt{2\,\pi_{\max}\bracks{n\vee m}\log\bracks{\bracks{m+n}/2}} \right),\]
then for any $\Delta \bM \in \mathcal{\wtilde{S}}(d_5,a)$, where $a$ is defined as above and 
\[d_5 = \frac{128\,c^2_m\,\log(n+m)}{\pi_{\text{min}}\,\log(6/5)},\]
then the following inequality holds with a probability at least $1-8(n+m)^{-1}$
\begin{align*}
%\label{ineq: M2}
\|\boldsymbol{\bomega} \circ \Delta\bM\|_{F}^{2} \ge \frac{1}{2}\,\mathbb{E}\bigl\|\boldsymbol{\bomega} \circ \Delta\bM\bigr\|_{F}^{2}
      - \wtilde{D}_M,
\end{align*}
where
\[
  \wtilde{D}_M
        = 3072\, c_m^2\, \pi_{\min}^{-1} \bracks{8\,a^2\,r\,N_1^2+1}.
\]
\end{lemma}

\textbf{Note.} Lemmas \ref{lemma:norm_bound2} and \ref{lemma:rsc2} are special cases of Lemmas \ref{lemma:norm_bound3} and \ref{lemma:rsc3}, respectively, when $\amin=\amax=1$. 
Furthermore, the distinction between $D_M$ and $\wtilde{D}_M$ is that $\wtilde{D}_M$ scales with $a^2$. Consequently, the convexity of the loss function is contingent upon a sufficiently small difference between the maximum and minimum singular values of the similarity matrices.

Having established the RSC condition, we parallel the sequence of steps outlined in Appendix \ref{subsection:bound_M} to obtain an upper bound on $\Delta\bM$ for the general case. As previously noted, redundant intermediate steps are omitted for conciseness. Evaluating the optimality condition in \eqref{ineq:Lossdef} at $\bstheta=\bracks{\estim\bbeta,\estim\bgamma,\truev\bM}$, with the objective function formulated as in \eqref{eq:problem_3}, yields
\begin{align*}
\frac{1}{2} \normfrob{\bomega\circ\bracks{\bY-\estim\btheta}} &\leq \frac{1}{2} \normfrob{\bomega\circ\bracks{\bY-\bX\estim\bbeta-\estim\bgamma\bZ-\truev\bM}} + \lambm \normdiffd{\normnuc}{\bM}{\bQa}{\bQb} \\
&\leq \frac{1}{2} \normfrob{\bomega\circ\bracks{\bY-\bX\estim\bbeta-\estim\bgamma\bZ-\truev\bM}} + \lambm \normnuc{\bQa\estim\bM\bQb-\bQa\truev\bM\bQb}\\
&\leq \frac{1}{2} \normfrob{\bomega\circ\bracks{\bY-\bX\estim\bbeta-\estim\bgamma\bZ-\truev\bM}} + \lambm \normnuc{\bQa\bracks{\estim\bM-\truev\bM}\bQb}\\
&\leq \frac{1}{2} \normfrob{\bomega\circ\bracks{\bY-\bX\estim\bbeta-\estim\bgamma\bZ-\truev\bM}} + \lambm \amax^2 \normnuc{\Delta\bM}.
\end{align*}
Assuming $\lambm  \ge 2 \bracks{\normop{\bomega\circ\bX\Delta\bbeta}+\normop{\bomega\circ\Delta\bgamma\bZ}+\normop{\bomega\circ\bepsilon}}$, we obtain
\begin{align}\notag
    \normfrob{\bomega\circ\Delta\bM} &\leq \lambm \normnuc{\Delta\bM} \bracks{1 + 2 \,\amax^2}\\
    &\leq \lambm \,a\,\bracks{1 + 2 \,\amax^2}\,\sqrt{32\,r} \normf{\Delta\bM}. \label{inequality:bound:M:lambda2}
\end{align}

It is straightforward to verify that \eqref{inequality:bound:M:lambda2} reduces to \eqref{inequality:bound:M:lambda} when we have that $\amin=\amax=1$ in the special case. Subsequently, Lemma \ref{lemma:rsc3} implies that
\begin{align*}
    \normfrob{\Delta\bM} & \leq 2\, \pim^{-1} \bracks{\normfrob{\bomega\circ\Delta\bM}+ \wtilde{D}_M} \\
    &\leq 2\,\pim^{-1} \bracks{ \lambm \,a\,\bracks{1 + 2 \,\amax^2}\,\sqrt{32\,r} \normf{\Delta\bM} + \wtilde{D}_M} \\
    &\leq \frac{1}{2}\pim^{-2} { \lambm^2 \,a^2\,\bracks{1 + 2 \,\amax^2}^2\,{128\,r} + \frac{1}{2} \normfrob{\Delta\bM} + 2\,\pim^{-1}\wtilde{D}_M},
\end{align*}
where the last step follows by applying Young's inequality. Consequently
\begin{align} \notag
    \normfrob{\Delta\bM} &\leq \frac{128}{\pim^2} \bracks{a^2\,r\,\bracks{\lambm^2\bracks{1+2\amax^2}^2+768\,c_m^2\,N_1^2}+96\,c^2_m}\\
    &\lesssim \frac{r\,a^2}{\pim^2}\bracks{\bracks{1\vee \amax^4}\lambm^2 + N_1^2}.
    \label{inequality:bound:M:estim:2}
\end{align}

Observe that the general bound in \eqref{inequality:bound:M:estim:2} reduces to \eqref{inequality:bound:M:estim} in the special case when $\amax=\amin=1$. The parameters $a$ and $\amax^2$ in \eqref{inequality:bound:M:estim:2} represent the additional penalty incurred by incorporating the similarity matrices in the estimation procedure. The parameter $a= \amax^2/\bracks{2\amin^2-1}$ is a modified condition number of the similarity matrices. It imposes a global penalty on the bound based on the spectral spread of the similarity matrices. However, Assumption \ref{assumption:similarity} strictly bounds both quantities. Because $\amax^2 \leq {1/\gamma}$ and $a \leq \bracks{1+\gamma}/\bracks{\gamma-\gamma^2}$, their effect remains finite and universally bounded, provided that $\gamma$ is chosen appropriately away from the boundaries. Under these conditions, the bound achieves the same order as \eqref{inequality:bound:M:estim:last}.  
\qed

This appendix concludes with the proof of Lemma \ref{lemma:norm_bound3}.

\textbf{Proof of Lemma \ref{lemma:norm_bound3}}

Applying inequality \eqref{eq:optimality} to the tuple $\bracks{\estim\bbeta,\estim\bgamma,\truev\bM}$ yields
\begin{align*}
    & \lambm \normdiff{\normnuc}{\bM}{\bQa}{\bQb} \\
     \leq&  \bracksd{\bomega \circ \bracks{\estim\btheta-\bY \pm\truev\btheta}}{-\Delta\bM} \\
    =\,& \bracksd{\bomega \circ \bracks{\Delta\bM + \bX\Delta\bbeta +\Delta\bgamma\bZ - \bepsilon}}{-\Delta\bM} \\
   \leq
    & \bracks{\normop{\bomega\circ\bX\Delta\bbeta}+\normop{\bomega\circ\Delta\bgamma\bZ}+\normop{\bomega\circ\bepsilon}} 
    \normnuc{\Delta\bM}, 
\end{align*}

where we dropped the negative quadratic terms and applied Hölder's inequality. Suppose we choose the regularization parameter such that
\begin{align*}
    \lambm &\ge 2 \bracks{\normop{\bomega\circ\bX\Delta\bbeta}+\normop{\bomega\circ\Delta\bgamma\bZ}+\normop{\bomega\circ\bepsilon}}.
\end{align*}
Define $\truev\bW = \bQa \truev\bM\bQb$ and $\estim\bW = \bQa\estim\bM\bQb$. Because $\bQa$ and $\bQb$ are full-rank positive definite matrices, $\operatorname{rank}\bracks{\truev\bW} = \operatorname{rank}\bracks{\truev\bM} = r$. The main inequality then simplifies to
\begin{align*}
    \lambm\normnuc{\Delta\bM} &\ge 2 \lambm \bracks{\normnuc{\estim\bW}-\normnuc{\truev\bW}}.
\end{align*}
Next, we establish a lower-bound on $\normnuc{\estim\bW}$. To this end, we define the subspace of all matrices in $\R{n}{m}$ with rank $r$. Let $\mathbb{U}$ and $\mathbb{V}$ denote $r$-dimensional vector subspaces, and let $\mathbb{U}^\perp$ and $\mathbb{V}^{\perp}$ denote their orthogonal complements. We subsequently define the following two matrix subspaces
\begin{align*}
    \mathbb{M}\bracks{\mathbb{U}, \mathbb{V}} &:= \bracksb{\bM \in \R{n}{m} \,\,| \,\,\operatorname{rowspan}(\bM)\subseteq \mathbb{V}, \,\,\operatorname{colspan}(\bM) \subseteq \mathbb{U}}, \\
    \overline{\mathbb{M}}^{\perp}\bracks{\mathbb{U}, \mathbb{V}} &:= \bracksb{\bM \in \R{n}{m} \,\,| \,\,\operatorname{rowspan}(\bM)\subseteq \mathbb{V}^\perp, \,\,\operatorname{colspan}(\bM) \subseteq \mathbb{U}^{\perp}}.
\end{align*}
Any matrix in the subspace $\mathbb{M}$ has rank at most $r$; consequently, $\truev\bM \in \mathbb{M}$ and $\truev\bW \in \mathbb{M}$. Let $\overline{\mathbb{M}}$ denote the orthogonal subspace of $\overline{\mathbb{M}}^{\perp}$, which implies that $\mathbb{M} \subset\overline{\mathbb{M}}$ and, by properties of the nuclear norm, that $\overline{\mathbb{M}}$ and $\overline{\mathbb{M}}^{\perp}$ are decomposable. Finally, any matrix in the subspace $\overline{\mathbb{M}}$ has rank at most $2r$. Using this and the triangle inequality, we obtain
\begin{align*}
    \normnuc{\estim\bW} & = \normnuc{\estim\bW+\truev\bW-\truev\bW} \\
    & = \normnuc{\truev\bW+\proj{\Delta\bW}+ \projc{\Delta\bW}}  \\
     & \ge \normnuc{\truev\bW + \projc{\Delta\bW}} - \normnuc{\proj{\Delta\bW}} \\
    & = \normnuc{\truev\bW} + \normnuc{\projc{\Delta\bW}}  -\normnuc{\proj{\Delta\bW}} \\
    & \ge \normnuc{\truev\bW} + \normnuc{\Delta\bW}  - 2 \normnuc{\proj{\Delta\bW}} \\
    & \ge \normnuc{\truev\bW} + \normnuc{\Delta\bW} - 2 \sqrt{2r} \, \normf{\proj{\Delta\bW}} \\
    & \ge \normnuc{\truev\bW} + \normnuc{\Delta\bW} -  \sqrt{8r}\, \normf{\Delta\bW}\\
    & \ge \normnuc{\truev\bW} + \amin^2\normnuc{\Delta\bM} -  \sqrt{8r}\,\amax^2 \normf{\Delta\bM}, 
\end{align*}
where $\proj{\cdot}$ and $\projc{\cdot}$ denote the projection operator onto the subspaces $\overline{\mathbb{M}}$ and $\overline{\mathbb{M}}^\perp$, respectively. Substituting this into the main inequality yields
\begin{align*}
    \normnuc{\Delta\bM} \le \frac{\amax^2}{2\amin^2-1} \sqrt{32r}\normf{\Delta\bM}.
\end{align*}
\qed

  \bibliography{supp_biblio}